\renewcommand{\S}{\mathcal{S}}
\newcommand{\gs}{\ket{\psi_{\text{GS}}}}
\renewcommand{\H}{\mathbb{H}}
\newcommand{\E}{\mathbb{E}}
\setlist{nolistsep}
\definecolor{airforceblue}{rgb}{0.36, 0.54, 0.66}
\newtheorem{theorem}{Theorem}
\newtheorem{lemma}{Lemma}
\newtheorem*{corollary*}{Corollary}
\newtheorem*{theorem*}{Theorem}
\newtheorem*{lemma*}{Lemma}
\newtheorem*{fact*}{Fact}
\newtheorem*{remark*}{Remark}
        \def\CT@@do@color{%
        \global\let\CT@do@color\relax
        \@tempdima\wd\z@
        \advance\@tempdima\@tempdimb
        \advance\@tempdima\@tempdimc
        \advance\@tempdimb\tabcolsep
        \advance\@tempdimc\tabcolsep
        \advance\@tempdima2\tabcolsep
        \kern-\@tempdimb
        \leaders\vrule        
        \hskip\@tempdima\@plus  1fill
        \kern0.1pt
        \kern-\@tempdimc
        \hskip-\wd\z@ \@plus -1fill }
\begin{document}

\title{Symmetries, correlation functions, and entanglement of general quantum Motzkin spin-chains}

\author{Varun Menon}
\thanks{Authors contributed equally}
\affiliation{Department of Physics, Harvard University, 17 Oxford Street, Cambridge, MA 02138, USA}

\author{Andi Gu}
\thanks{Authors contributed equally}
\affiliation{Department of Physics, Harvard University, 17 Oxford Street, Cambridge, MA 02138, USA}

\author{Ramis Movassagh}
\affiliation{Google Quantum AI, Venice, CA, 90291, USA}

\date{\today}

\begin{abstract}
Motzkin spin-chains, which include `colorless' (integer spin $s=1$) and `colorful' ($s \geq 2$) variants, are one-dimensional (1D) local integer spin models notable for their lack of a conformal field theory (CFT) description of their low-energy physics, despite being gapless. The colorful variants are particularly unusual, as they exhibit power-law violation of the area-law of entanglement entropy (as $\sqrt{n}$ in system size $n$), rather than a logarithmic violation as seen in a CFT. In this work, we analytically discover several unique properties of these models, potentially suggesting a new universality class for their low-energy physics. We identify a complex structure of symmetries and unexpected scaling behavior in spin-spin correlations, which deviate from known 1D universality classes. Specifically, the $s=1$ chain exhibits $U(1)$ spontaneous symmetry breaking and ferromagnetic order. Meanwhile, the $s \geq 2$ chains do not appear to spontaneously break any symmetries, but display quasi-long-range algebraic order with power-law decaying correlations, inconsistent with standard Berezinskii-Kosterlitz-Thouless (BKT) critical exponents. We also derive exact asymptotic scaling expressions for entanglement measures in both colorless and colorful chains, generalizing previous results of Movassagh [J. Math Phys. (2017)], while providing benchmarks for potential quantum simulation experiments. The combination of hardness of classically simulating such systems along with the analytical tractability of their ground state properties position Motzkin spin chains as intriguing candidates for exploring quantum computational advantage in simulating many-body physics. 

\end{abstract}

\maketitle

The study of entanglement properties and correlation functions of quantum many-body systems has been a central focus of research at the intersection of condensed matter physics, quantum information theory, and quantum computing in recent years. Models of interacting quantum systems that possess unique highly entangled ground states for which exact properties can be calculated analytically are of particular interest, as they serve as tractable systems to understand universal properties of quantum critical points.

Motzkin spin-chains constitute a class of one-dimensional integer spin-$s$ chains with nearest-neighbor ($2$-local) interactions that is translation-invariant except at the boundaries of the chains. They have emerged as examples of `physical' quantum systems (i.e., systems that are local, translation invariant almost everywhere, and have low spin degrees of freedom) whose unique ground states possess anomalously large entanglement entropy that scales as $\sqrt{n}$ with the system size $n$~\cite{movassagh2016supercritical}.
This scaling violates the area-law for entanglement entropy in 1D by a superlogarithmic factor, providing the first counterexample to the folk belief that ground states of physical, local Hamiltonians in 1D can at most violate the area-law by a logarithmic correction, as in critical systems described by a conformal field theory \cite{calabrese2004entanglement}. Follow up work has shown that a deformation of this class of models gives maximum violation of the area law where the entanglement entropy scales linearly with the system size~\cite{zhang2017novel}, and in this case the gap vanishes exponentially with the square of the system's size~\cite{levine2017gap}.

Previous work has established several key properties of these systems. For the colorless ($s=1$) model, it has been shown that the entanglement entropy between a block of $L$ contiguous spins in the bulk of the chain and its complement asymptotically scales logarithmically with system size \cite{movassagh2017entanglement}. Additionally, spin-spin correlation functions in the $Z$-basis vanish in the bulk of the chain as $O(n^{-1})$ in the thermodynamic ($n\to \infty$) limit. In contrast, for the colorful ($s\geq 2$) models, the half-chain entanglement entropy scales asymptotically as $\sim \sqrt{n}$ for a chain of length $n$~\cite{movassagh2016supercritical}, exhibiting dramatically different behavior than the $s=1$ model. However, in \cite{movassagh2016supercritical} it was shown that the gap of Motzkin Hamiltonians provably vanishes as $O(n^{-z})$  where $z\ge 2$, precluding a conformal field theory (CFT) description of the low-energy physics in the thermodynamic limit, since a CFT requires $z=1$ due to Lorentz invariance~\cite{Bravyi_2012,movassagh2017entanglement}. Recently, the upper-bound on the gap for $s=1$ was refined to $z \geq \frac{5}{2}$, revealing  a sub-diffusive dynamical exponent~\cite{mccarthy2024subdiffusive}.

\begin{table}
\begin{center}
\renewcommand{\arraystretch}{1.2}
\footnotesize
\begin{tabular}{|c||c|c|c|}
    \hline
    \textbf{Feature} & \textbf{Setting} & \textbf{Result} & \textbf{Reference} \\
    \hline \hline
    Hamiltonian symmetries & Any $s$ & $U(1)^{\times s} \rtimes (\mathbb{Z}_2 \times \mathbb{Z}_2 \times \mathbb{S}_{s})$ & \cref{sec:sym} \\ \hline
    \multirow{6}{*}{\centering Entanglement about a cut at $b$} & \multirow{3}{*}{\centering $s = 1$, $b \gg 1$} &  $S_1 = \frac{1}{2}\log_2(b(1-b/n)) + A_1$ & \multirow{3}{*}{\centering Ref. \cite{movassagh2017entanglement}} \\ 
    & & $S_\kappa = \frac{1}{2} \log_2(b(1-b/n)) + A_\kappa$ & \\
    & & $\chi = b+1$ & \\
    \cline{2-4}
    & \multirow{3}{*}{\centering $s \geq 2$, $b \gg 1$} & $S_1 = 4 \log_2(s) \sqrt{\sigma b(1-b/n)/\pi} + B_1$ & \cref{eq:bipartite-entanglement} \\ 
    & & $S_\kappa = \frac{3}{2 (1-\kappa^{-1})} \log_2(b(1-b/n)) + B_\kappa$ & \cref{eq:renyi-cut} \\
    & & $\chi = \frac{s^{b+1}-1}{s-1}$ & \cref{eq:chi-cut} \\
    \hline
    \multirow{6}{*}{\centering Block entanglement} & \multirow{3}{*}{\centering $s = 1$, $b \gg L^2 \gg 1$} & $S_1 = \frac{1}{2} \log_2(L) + C_1$ & \multirow{3}{*}{\centering Ref. \cite{movassagh2017entanglement}} \\ 
    & & $S_\kappa = \frac{1}{2} \log_2(L) + C_\kappa$ & \\
    & & $\chi = 2L+1$ & \\
    \cline{2-4}
    & \multirow{3}{*}{\centering $s \geq 2$, $b \gg L^2 \gg 1$} & $S_1 = 4\log_2 (s)\; \sqrt{\frac{\sigma L}{\pi}} + D_1$ & \cref{eq:block-entanglement} \\ 
    & & $S_\kappa = \frac{3}{2(1-\kappa^{-1})} \log_2(L) + D_\kappa$ & \cref{eq:renyi-block} \\
    & & $\chi = \frac{s^L \cdot (L(s-1)-1)+1}{(s-1)^2}$ & \cref{eq:chi-block} \\ \hline
    \multirow{2}{*}{\centering $\expval*{S^z_b}$} & $s=1$, $b \gg 1$ & $\frac{2}{\sqrt{3\pi}} \frac{1-2b/n}{\sqrt{b(1-b/n)}} + O(n^{-1})$ & Ref. \cite{movassagh2017entanglement} \\ \cline{2-4}
    & $s \geq 2$, $b \gg 1$ & $(s+1) \sqrt{\sigma/\pi} \frac{1-2b/n}{\sqrt{b(1-b/n)}} + O(n^{-1})$ & \cref{eq:z-s2} \\
    \hline
    $\expval*{S^x_b}$ and $\expval*{S^y_b}$ & Any $s$ and any $b$ & Identically $0$ & \cref{sec:sym}, Ref.~\cite{movassagh2017entanglement} \\ \hline
    \multirow{2}{*}{\centering $\expval*{S^z_b S^z_{b+L}}$} & $s = 1$, $b \gg L^2 \gg 1$ & $O(n^{-1})$ & Ref.~\cite{movassagh2017entanglement} \\ \cline{2-4}
    & $s\geq 2$, $b \gg L^2 \gg 1$ & $-\frac{1}{24}\sqrt{\sigma/\pi} (s^2-1)\, L^{-3/2} + O(n^{-1})$ & \cref{eq:zz} \\ \hline
    \multirow{2}{*}{\centering $\expval*{S^x_b S^x_{b+L}}$ and $\expval*{S^y_b S^y_{b+L}}$} & $s=1$, $b \gg L^2 \gg 1$ & $\frac{4}{9} + \frac{7}{18n} + O(n^{-3/2})$ & \cref{eq:xx-s1} \\ \cline{2-4}
    & $s \geq 2$, $b \gg L^2 \gg 1$ & $\frac{\sigma^{5/2}}{6\sqrt{\pi}} (s+1)(s+2) L^{-3/2} + O(n^{-1})$ & \cref{eq:xx-s2} \\
    \hline
    $\expval*{S^\alpha_b S^\beta_{b+L}}$; $\alpha,\beta \in \qty{x,y,z}$, $\alpha \neq \beta$ & Any $s$ and any $b,L$ & Identically $0$ & \cref{sec:sym} \\ \hline
\end{tabular}
\end{center}
\caption{Summary of main results where $n$ is the length of the chain. We denote $\sigma \coloneqq \frac{\sqrt{s}}{1+2\sqrt{s}}$ for brevity, $b$ is the number of spins from the nearest boundary, and $L$ is the number of consecutive spins in the bulk. The 0-R\'enyi entropy $S_0$ is the logarithm of the Schmidt rank, $S_1$ is the von Neumann entanglement entropy, and $S_\kappa$ is a general R\'enyi entropy (we assume $\kappa \geq 2$). For the entanglement entropies, the constants $A_\kappa,\ldots,D_\kappa$ are all calculated explicitly, but omitted in the table for brevity. See the corresponding references for the explicit constants.}
    \label{tab:results}
\end{table}

Many questions have so far remained open regarding the entanglement properties for arbitrary bipartitions and block tripartitions of the chain, higher R\'enyi entropies, and the behavior of all two-point spin correlation functions in both the colorless and colorful models. In this work, we answer these questions, deriving exact analytical results for entanglement measures and correlation functions of both the colorless and colorful Motzkin spin chains. Our key contributions are summarized in \cref{tab:results} and include:
\begin{enumerate}
    \item A characterization of the symmetry group of the colorless and colorful models. See \cref{sec:sym}. 
    \item Asymptotic expressions for the entanglement entropy, Schmidt rank, and higher Rényi entropies about an arbitrary cut of the chain for the colorful model. See \cref{subsec:bipartite}.
    \item Block entanglement entropy for subsystems of length $L$ in the bulk for the colorful model. In the thermodynamic limit, the von Neumann entanglement entropy is $4\log s \sqrt{\sigma L/\pi} + O(\log L)$, where $\sigma \coloneqq \frac{\sqrt{s}}{2\sqrt{s}+1}$. See \cref{subsec:block}.
    \item Bulk on-site spin operator expectation values and two-point spin correlation functions in the $X$, $Y$, and $Z$ bases between two spins separated by distance $L$ for both colorless and colorful models. Notably, the two-point correlation functions exhibit the following behavior, described in \cref{sec:sym,sec:spin-operator}:
    \begin{itemize}
        \item From symmetry considerations, all cross correlations vanish: for any $\alpha,\beta \in \qty{x,y,z}$, and $\alpha \neq \beta$ we have $\expval*{S^{\alpha}_b S^{\beta}_{b+L}}=0$.
        \item $\expval*{S^z_b S^z_{b+L}} \sim (1-s^2)L^{-3/2}$ in the thermodynamic limit, in stark contrast to the correlations vanishing asymptotically in $n$ in the bulk for the $s=1$ model.
        \item $\expval*{S^x_b S^x_{b+L}} = \expval*{S^y_b S^y_{b+L}} \sim L^{-3/2}$ for $s>1$, while it approaches a constant value $\frac{4}{9}$ for $s=1$.
    \end{itemize}
    As we discuss  in \cref{sec-discussion}, these unusual power-law decays of the correlation functions with separation are not consistent with any known universality class in one dimension, and point to the exotic nature of the Motzkin spin-chain critical point.
\end{enumerate}
We derive these results for both finite chains and in the thermodynamic limit $n \to \infty$ keeping $L \ll n$ constant, and then considering the large $L$ limit. 
A significant aspect of our work is the precision achieved in the resulting analytical expressions, including explicit non-universal prefactors and constants. Furthermore, we verify these expressions against exact and approximate numerical calculations. 

Studying highly entangled ground states of spin models not only provides insights into quantum many-body physics but also has potential implications for quantum computing and simulation. As the complexity of quantum systems grows, classical computers face increasing challenges in simulating their behavior efficiently. This limitation raises important questions about the boundary between classical and quantum computational capabilities. The Motzkin spin-chain models, particularly the colorful variants, present an intriguing case study in this context. Their entanglement properties render them difficult to classically simulate for even moderate system sizes. Although we are able to analytically derive their ground-state two-point correlation functions in this work, this hardness of classical simulation means that more complicated quantities, such as general $k$-point correlation functions, dynamical correlation functions, or excited state properties, are potentially outside the reach of both existing analytical and numerical tools. 

Quantum simulation offers one way to overcome these limitations: preparing the colorful Motzkin ground state on quantum hardware would enable us to fully characterize, for example, its higher $k$-point correlation functions. At the same time, our analytical results offer a way to \emph{verify} that we have successfully prepared the desired state on a quantum device by comparing the experimentally measured $2$-point correlations with the scalings we derive in this work. This unique mix of classical intractability and exact analytical results positions these models as promising candidates for exploring quantum computational advantage. Such endeavors could reveal new physics inaccessible to both classical computation and analytical methods, highlighting the unique capabilities of quantum simulators in probing complex quantum systems.

The remainder of this paper is structured as follows. In \cref{sec:review}, we review the construction of the ground state and parent Hamiltonian for the Motzkin spin-chains. In \cref{sec:entanglement}, we present exact expressions for entanglement entropies in the colorful chains for bipartitions about an arbitrary cut of the chain as well as for a centered subsystem of length $L$. In \cref{sec:spin-operator}, we evaluate all two-point spin correlation functions in the thermodynamic limit for both colorful and colorless models. We conclude with a discussion and interpretation of our results and point out some interesting open questions.

\section{Ground state and Hamiltonian of Motzkin spin-chains}\label{sec:review}
We first define the unique  ground state of the Motzkin spin-chain models and then describe the corresponding parent Hamiltonians. Consider a chain of $n$ quantum spins, each of which can be in one of $2s+1$ spin states labeled by $-s,\ldots,s$. A product state basis spanning the full Hilbert space of the chain can be constructed by taking tensor products of these local basis states. Each such product state can be uniquely mapped to a `walk' in the 2-D Cartesian plane that starts at the origin $(0,0)$, and takes one of $2s+1$ types of steps:
\begin{itemize}
\item A `flat' step $\rightarrow$ that changes the position of the walker by $(1,0)$. This corresponds to the $\ket{0}$ state of the spin.
\item $s$ types of `up' steps $\nearrow$, each associated with one of $s$ possible colors, that change the position of the walker by $(1,1)$. These correspond to the states $\ket{1}, \ldots, \ket{s}$ (which we also denote as $\ket*{u^1},\ldots,\ket*{u^s}$ and think of $\ket*{u^k}$ as a step ``up'' with color $k$) of the spin.
\item $s$ types of `down' steps $\searrow$, each associated with one of $s$ possible colors, that change the position of the walker by $(1,-1)$. These correspond to the states $\ket{-1}, \ldots, \ket{-s}$ (which we also denote as $\ket*{d^1},\ldots,\ket*{d^s}$ and think of $\ket*{d^k}$ as a step ``down'' with color $k$) of the spin.
\end{itemize}
An `$s$-colored Motzkin walk' is a walk on $n$ steps that obeys the following conditions: 
\begin{itemize}
    
    \item The walk starts at $(0,0)$ and takes $n$ steps of the above types, ending at $(n,0)$,
    
    \item The walk never falls below the $x$-axis, i.e., the position of the walk for any $0\leq x\leq n$ must be $(x,y)$ for $y\geq 0$. 
    
    \item For $s>1$, the walk must also be `color-matched': If the walk contains a $k$-colored up step $u^k$ at position $(b,m)$, then the next step at $x > b$ that occurs at a height $m$ must be a $k$-colored down-step $d^k$. In other words, a step up can only be followed by a step down of the \emph{same} color, another step up (of any color), or a flat step. In the latter case, every step up must be matched such that steps up and down are nested together (see \cref{fig:blocks} for an illustration).
\end{itemize}

The spin-$s$ Motzkin ground state $\gs$ is defined as the equal superposition of all $s$-colored Motzkin walks of length $n$:
\begin{equation}
\gs = \frac{1}{\sqrt{M_{n}}} \sum_{w \in \text{$s$-colored Motzkin walks}} \ket{w},
\end{equation}
where $M_n$ is the number of $s$-colored Motzkin walks. As shown by \citet{movassagh2016supercritical}, the number of $s$-colored Motzkin-like walks on $n$ steps that has $m$ unmatched up steps (i.e., obeys all the Motzkin conditions, except that it ends at a height $(n,m)$ for some $m>0$) is asymptotically
\begin{equation}\label{eq:asymptotic_Mnm}
    M_{n,m} \approx \frac{m s^{-m/2}}{2\sqrt{\sigma \pi} n^{3/2}} \qty(\frac{\sqrt{s}}{\sigma})^n \exp(-m^2/(4 \sigma n))\qc \sigma \coloneqq \frac{\sqrt{s}}{2\sqrt{s}+1}.
\end{equation}
We leave the $s$ dependence of $M_{n,m}$ implicit as a particular chain will have fixed integer $s$. We include the calculation leading to \cref{eq:asymptotic_Mnm} in \cref{sec-asymptotic_appendix} for completeness. In all but \cref{sec:xx}, the multiplicative terms that do not depend on $m$ are irrelevant, as they are eventually cancelled out by an overall normalization factor. Therefore, everywhere except \cref{sec:xx}, we will use the simpler form $M_{n,m} \sim m s^{-m/2} e^{-m^2/(4 \sigma n)}$. The total number of $s$-colored Motzkin walks on $n$ steps is
\begin{equation}
    M_{n} = \sum_{m=0}^{n/2} s^m M_{n/2,m}^2 \approx \int_0^\infty s^m M_{n/2,m}^2 = \frac{1}{2}\sqrt{\frac{\sigma}{\pi}}\;\frac{(2\sqrt{s}+1)^n}{n^{3/2}}.
\end{equation}
The reason the limits of the sum can be extended from $n/2$ to $\infty$ is that $s^m M_{n/2,m}^2$ contains a factor $e^{-m^2/(4 \sigma n)}$, which suppresses the summands so that they vanish at $m \gtrsim \sqrt{n}$. We will repeatedly make use of this technique of approximating sums with integrals and extending the integral limits throughout this work --- a formal justification of this technique based on the Euler-Maclaurin formula and the general saddle-point method is presented in Refs. \cite{movassagh2017entanglement,flajolet2009analytic}.

We now define some notation. We associate a state $\ket*{C_{n,m[\uparrow],\vec{x}}}$ to be the normalized superposition over all $M_{n,m}$ Motzkin-like (i.e., never crossing below the $x$ axis) walks on $n$ steps that have $m$ unmatched up steps, colored by $\vec{x} \in \S^m$, where $\S=\{1,2,\dots,s\}$ is the set of colors. For instance, the walk in the $A$ segment of \cref{fig:blocks} is one example of a walk on $b$ steps with $m=3$ unmatched up steps, colored by $\vec{x}=[\text{blue},\text{blue},\text{red}]$. We define the state $\ket*{C_{n,m[\downarrow],\vec{x}}}$ similarly. Finally, we define $\ket*{C_{n,p[\downarrow],q[\uparrow],\vec{x}}}$ to be the superposition over all walks on $n$ steps that have $p$ unmatched down steps and $q$ unmatched up steps, which are colored by $\vec{x} \in \S^{p+q}$. For instance, the walk on segment $B$ in \cref{fig:blocks} is a walk on $L$ steps with $p=1$ unmatched down step, $q=2$ unmatched up steps, and colored by $\vec{x} = [\text{red},\text{red},\text{blue}]$. The number of distinct walks contained in the state $\ket*{C_{n,p[\downarrow],q[\uparrow],\vec{x}}}$ is $M_{n,p+q}$. The reason for this is that for the purposes of counting Motzkin walks, the orientations of the unmatched steps do not matter, only their total number~\cite{Bravyi_2012} --- that is, $M_{n,m}$ is equal to the number of Motzkin walks with $m$ unmatched steps \emph{of any kind}.

\begin{figure}
    \centering
    \def\svgwidth{\textwidth}
    \graphicspath{{figs/}}
    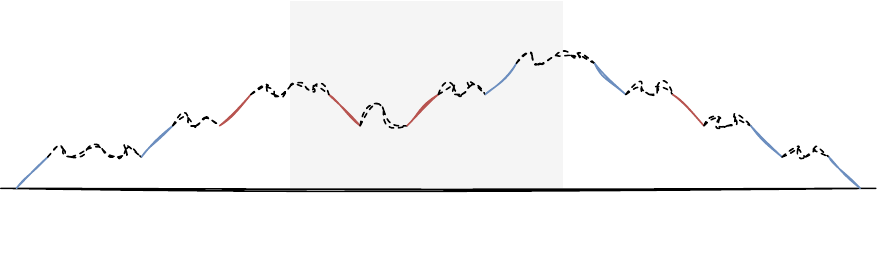
    \caption{Region $A$ shows a Motzkin-like walk on $b$ steps with $m=3$ unmatched up steps. The number of such walks that have the coloring $[\text{blue},\text{blue},\text{red}]$ for the unmatched up steps is $M_{b,3}$. Therefore, the total number of Motzkin-like walks on $b$ steps with $m=3$ unmatched up steps is $s^3 \cdot M_{b,3}$. Region $B$ shows a walk with $p=1$ unmatched down steps and $q=2$ unmatched up steps, and region $C$ shows a Motzkin-like walk with $4$ unmatched down steps. The dashed curves indicate subwalks that are themselves colorful Motzkin walks (e.g., if the sequence $u^1 u^2 0 d^2 d^1$ appeared in the middle of the chain, it would be indicated by a dashed curve).}
    \label{fig:blocks}
\end{figure}

The frustration-free parent Hamiltonian for which the Motzkin state is a ground state can be constructed as a sum of projectors that enforce the constraints defining a valid Motzkin walk. For $s>1$, the Hamiltonian is given by~\cite{movassagh2016supercritical}
\begin{equation} \label{hamiltonian_main}
H = \Pi^{\text{boundary}} + \sum_{j=1}^{n-1} \Pi_{j,j+1} + \sum_{j=1}^{n-1} \Pi^{\text{cross}}_{j,j+1}.
\end{equation}
The projectors above are defined by
\begin{equation}\label{eq:projectors}
    \begin{gathered}
        \Pi_{j,j+1} = \sum_{k=1}^{s} \ketbra*{D^k}_{j,j+1} + \ketbra*{U^k}_{j,j+1} + \ketbra*{\phi^k}_{j,j+1}, \\
        \Pi^{\text{cross}}_{j,j+1} = \sum_{i \neq k} \ketbra*{u^k d^i}_{j,j+1}\qc \Pi^{\text{boundary}} = \sum_{k=1}^s (\ketbra*{d^k}_1 + \ketbra*{u^k}_{n}),
    \end{gathered}
\end{equation}
where $\ket*{D^k} \coloneqq \frac{1}{\sqrt{2}} (\ket*{0d^k}-\ket*{d^k0})$, $\ket*{U^k} \coloneqq \frac{1}{\sqrt{2}} (\ket*{0u^k}-\ket*{u^k0})$, $\ket*{\phi^k} \coloneqq \frac{1}{\sqrt{2}} (\ket*{00}-\ket*{u^kd^k})$, and $u^k,d^k$ represent a $k$-colored up and down step, respectively. The form of $\Pi_{j,j+1}$ encodes the fact that any two valid Motzkin walks are connected by a sequence of three possible local moves: swapping the positions of an adjacent $0$ and $d^k$, swapping an adjacent $0$ and $u^k$, and flipping between $00$ and $u^k d^k$. The cross-term $\Pi^{\text{cross}}_{j,j+1}$ enforces the `color-matching' constraint: if an up-step of color $k$ is followed by a down-step, the down step must also be color $k$. Finally, $\Pi^{\text{boundary}}$ enforces the constraint that the walk must start and end at the origin. The Motzkin state $\gs$ is the unique state that is annihilated by all these projectors, and is therefore the (frustration-free) zero-energy ground state of $H$~\cite{movassagh2016supercritical}. 

A class of continuous deformations of the Motzkin Hamiltonians indexed by a parameter $t\in [0, \infty)$, named `$t$-deformed' Motzkin spin chains, was introduced in Ref.~\cite{zhang2017novel}. The case $t=1$ corresponds to the original Motzkin spin-chains studied in this work. These deformations are of particular interest as the Hamiltonians remain frustration-free for any $t$, while exhibiting different ground state entanglement scaling for various ranges of $t$. For $s=1$, it has been shown that the ground state half-chain entanglement entropy is bounded (i.e., obeys an area-law) for any $t\neq 1$, contrasting with logarithmic scaling at $t=1$. For $s\geq 2$, Ref.~\cite{zhang2017novel} demonstrated that the half-chain entanglement entropy is bounded for $t<1$ and exhibits volume law scaling $\sim n$ for $t>1$, with $\sqrt{n}$ scaling at the $t=1$ point. Related work has also studied the scaling of the gap of the deformed Hamiltonians~\cite{andrei2022spin}. The point $t=1$ we consider in this work can thus be viewed as the critical point of a quantum phase transition in $t$, witnessed by the scaling of entanglement entropy at and across the transition.

\subsection{Hamiltonian Symmetries}\label{sec:sym}
We now characterize the symmetry group of the Motzkin Hamiltonian. For clarity, we first briefly recall the notion of the \emph{semi-direct product} of groups as we will use it repeatedly in the following section. For a finite group $G$ with a subgroup $H$ and a normal subgroup $N$, $G$ is said to be the semi-direct product of $H$ acting on $N$, if $G$ can be expressed as the group product $G=HN \coloneqq \{h\cdot n \mid h\in H, n\in N \}$, and $H$ and $N$ have trivial intersection (i.e.,$H\cap N = \{e\}$ where $e$ is the identity group element). It can be shown that this condition implies that the subgroup $H$ has a group action on $N$ \cite{dummit2003abstract}, $\phi: H \rightarrow \mathrm{Aut}(N)$, which for $h\in H$ and $n\in N$ is given by $\phi(h)(n) = h^{-1}\cdot n \cdot h \in N$ since $N$ is normal. Equivalently, we can think of the group $G$ as a group extension of $H$ by $N$, that is, there exists a short exact sequence 
\begin{equation*}
    1 \rightarrow N \rightarrow G \rightarrow H \rightarrow 1
\end{equation*}
We denote $G = N \rtimes H$ to denote the semi-direct product of $H$ acting on $N$. In the context of symmetry groups of physical systems, this structure commonly appears in describing non-commuting symmetries such that a particular subgroup of symmetries transforms non-trivially under the action of other symmetries. We now state a theorem pertaining to the symmetry group of the Motzkin spin-chain Hamiltonians.
\begin{theorem}[Hamiltonian symmetries]\label{thm::sym}
For any $s$, the Motzkin Hamiltonian has
\begin{itemize}[noitemsep]
    \item continuous $U(1)^{\times s} = U(1)\times \stackrel{s}{\cdots} \times U(1)$ symmetry represented by $e^{-i \sum_k \theta_k Q^k}$ for $\theta_k \in [-\pi, \pi]$, generated by the conserved charges $\{Q^k \coloneqq \sum_j (\ketbra*{u^k}_j-\ketbra*{d^k}_j) \mid k=1,\ldots,s\}$, 
    \item $\mathbb{Z}_2$ symmetry represented by the anti-unitary complex conjugation operator $K$,
    \item $\mathbb{Z}_2$ symmetry represented by the space reflection (parity) operator $R \ket{x_1 \ldots x_n}= \ket{x_n \ldots x_1}$ followed by a spin inversion $F$, defined as $F \ket{m} = \ket{-m}$, and
    \item $\mathbb{S}_s$ symmetry represented by global permutations of the color labels, where $\mathbb{S}_{n}$ denotes the symmetric group of $n$ elements, and $\mathbb{S}_{1}$ is the identity.
\end{itemize}
The symmetry group of the Hamiltonian has the structure 
\begin{equation}
    U(1)^{\times s} \rtimes  (\mathbb{Z}_2 \times \mathbb{Z}_2 \times \mathbb{S}_s).
\end{equation}
\end{theorem}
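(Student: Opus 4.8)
The plan is to prove the theorem in two stages: first verify that each of the four families of operators commutes with $H$, and then extract the abstract group structure from their mutual conjugation relations, invoking the semidirect-product criterion recalled just above the statement. Throughout I would exploit the fact that $H$ is a sum of local projectors, so that a (possibly antiunitary) operator $S$ is a symmetry precisely when conjugation by $S$ permutes the defining states $\{\ket{D^k},\ket{U^k},\ket{\phi^k},\ket{u^k d^i}\}_{i\neq k}$ and the boundary terms among themselves (up to phases) while preserving the bond structure $\{(j,j+1)\}$. This reduces every check to a finite local computation.

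For the continuous part, $e^{-i\theta_k Q^k}$ is diagonal in the walk basis, multiplying $\ket{u^k}$ and $\ket{d^k}$ by $e^{-i\theta_k}$ and $e^{+i\theta_k}$ and fixing all other local states; one then notes that $\ket{D^k},\ket{U^k}$ acquire only an overall phase while $\ket{\phi^k}$, the cross terms, and the boundary terms are individually phase-invariant, giving $[e^{-i\theta_k Q^k},H]=0$. For the antiunitary $\mathbb{Z}_2$, the point is that all projector coefficients are real ($\pm 1/\sqrt2$), so $H$ has real matrix elements and $KHK^{-1}=\overline H=H$. For the reflection–inversion $\mathbb{Z}_2$, I would compute $P\coloneqq FR$ on each local term and verify the pairings $\ketbra{D^k}\leftrightarrow\ketbra{U^k}$ (the sign in $P\ket{D^k}=-\ket{U^k}$ cancels in the projector), $\ketbra{\phi^k}$ fixed, $\ketbra{u^k d^i}\mapsto\ketbra{u^i d^k}$ permuting the cross terms, and $\ketbra{d^k}_1\leftrightarrow\ketbra{u^k}_n$ at the boundary, with reflection simultaneously reversing the bond labels. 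It is instructive to record why neither $F$ nor $R$ alone works — $R$ alone sends $\ket{\phi^k}$ to $\tfrac1{\sqrt2}(\ket{00}-\ket{d^k u^k})$, which is not a Hamiltonian term, and maps $\ket{u^k d^i}$ outside the cross-term set — so the pairing $FR$ is forced. Finally, a global color permutation $\pi$ sends each color-$k$ term to the color-$\pi(k)$ term and, being a bijection, preserves the condition $i\neq k$; hence every sum is invariant.

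With the symmetries established, I would read off the structure from conjugation. The essential relations are $KQ^kK^{-1}=-Q^k$ (antiunitarity sends $i\mapsto-i$ while $Q^k$ is real), $PQ^kP^{-1}=-Q^k$ ($F$ exchanges $\ketbra{u^k}\leftrightarrow\ketbra{d^k}$ while $R$ fixes the total charge), and $\pi Q^k\pi^{-1}=Q^{\pi(k)}$. These show $K,P,\pi$ all normalize $N\coloneqq U(1)^{\times s}$, so $N$ is normal. I would then check that the discrete generators mutually commute and give the stated factors: $K^2=1$, $P^2=F^2R^2=1$, $K$ commutes with the real operators $P$ and $\pi$, and $P$ commutes with $\pi$ since spin inversion and reflection act on the up/down and spatial degrees of freedom while $\pi$ acts only on the color index; hence the discrete part is $\mathbb{Z}_2\times\mathbb{Z}_2\times\mathbb{S}_s$. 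Since a nontrivial element of $N$ is a continuous diagonal phase while every nontrivial discrete element is either antiunitary or a genuine (non-scalar) permutation, $N\cap H=\{e\}$; with $N$ normal and $G=NH$, the criterion above yields $G=U(1)^{\times s}\rtimes(\mathbb{Z}_2\times\mathbb{Z}_2\times\mathbb{S}_s)$, the two $\mathbb{Z}_2$'s inverting each $\theta_k$ and $\mathbb{S}_s$ permuting the $U(1)$ factors.

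I expect the main obstacle to be conceptual rather than computational. The commutation checks are routine once organized by local term, but care is needed to track the antiunitarity of $K$ in $KQ^kK^{-1}=-Q^k$ and to confirm that $K$ composed with $U(1)$ phases cannot reproduce the unitary generators $P$ or $\pi$, which is what guarantees genuine distinctness and the trivial intersection. A further subtlety, if the statement is read as asserting that this is the \emph{complete} symmetry group, is maximality: ruling out additional symmetries. I would address this by restricting to the physically natural operations — on-site unitary/antiunitary transformations and lattice symmetries — and arguing that the listed generators exhaust those compatible with the local projector structure, rather than attempting to characterize the full commutant of $H$.
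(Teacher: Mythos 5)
Your proposal is correct and follows essentially the same route as the paper: verify each symmetry by checking that conjugation permutes the local projectors (with the $U(1)^{\times s}$ phases cancelling termwise, $K$ acting trivially on the real Hamiltonian, $RF$ implementing $\ket*{a^k b^k}\to\ket*{\bar b^k\bar a^k}$, and $W_\pi$ relabelling colors), then establish normality of $U(1)^{\times s}$ from the conjugation relations with $K$, $RF$, and the permutations and invoke the semidirect-product criterion. One notational caveat: since $Q^k$ is real and diagonal, literally $KQ^kK^{-1}=+Q^k$; what you mean (and what your parenthetical correctly explains) is $K\,e^{-i\theta_k Q^k}K^{-1}=e^{+i\theta_k Q^k}$, which is exactly the relation the paper uses and suffices for normality.
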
 

The unitary $e^{-i \sum_k \theta_k Q^k}$ can be interpreted as `selective' rotations of $k$-colored spins about the $z$-axis by angle $\theta_k$. By setting $\theta_k=k \theta$, we see that this symmetry is a generalization of a more conventional $U(1)$ symmetry associated with global rotations about the $z$-axis by angle $\theta$, generated by the conserved charge $S^z_{\text{tot}} = \sum_j S^z_j$. The model also obeys a $\mathbb{Z}_2$ \textit{parity-time} (PT) symmetry. Recall that in spin-systems, the action of time-reversal symmetry is complex conjugation of the wave-function together with inversion of the magnetic quantum number $\ket{m}\rightarrow \ket{-m}$. Conjugation acts on spins through an anti-unitary representation $K$ such that $K$ leaves $S^x$ and $S^z$ invariant, and $K S^y K^{-1} = -S^y$. Therefore, $RFK$ corresponds to the action of the parity-time symmetry,  since $FK$ is a time-reversal operator and $R$ corresponds to a spatial parity transformation. Recalling that the Motzkin Hamiltonian has a \emph{unique} ground state, this still does not contradict Kramer's theorem (which states that eigenstates of half-integer spin Hamiltonians with an anti-unitary symmetry are twofold degenerate), since the Motzkin spin-chains are integer spin models. We now prove \cref{thm::sym}.

\begin{proof}[Proof of \cref{thm::sym}]
We first prove the $U(1)^{\times s}$ symmetry. For this, it suffices to show that the charges $Q^k \coloneqq \sum_j \qty(\ketbra{u^k}_j-\ketbra{d^k}_j)$ for $k=1,\ldots,s$ are conserved. In \cref{eq:projectors}, each Hamiltonian term is diagonal (and hence commutes with each of the charges $Q^k$) except for the terms in $\Pi_{j,j+1}$. However, we immediately see that $Q^\ell \ket{D^k}=Q^\ell \ket{U^k} = Q^\ell \ket{\phi^k}=0$ for any $k,\ell$, which demonstrates $\comm{H}{Q^\ell}=0$ for any $\ell$.

Invariance of the Hamiltonian under complex conjugation follows since the Hamiltonian is manifestly real. The second $\mathbb{Z}_2$ symmetry factor generated by the $RF$ operator (simultaneous reflection and spin inversion) follows from noticing that all projectors in the Hamiltonian in \cref{eq:projectors} are invariant under the replacement $\ket*{a^k b^k} \rightarrow \ket*{\bar{b}^k\bar{a}^k}$, where $a,b\in \{u,0, d\}$ and $\bar{u} = d$, $\bar{d}=u$, and $\bar{0}=0$ --- this is precisely the action of $RF$. Intuitively, this symmetry is reflected in the ground-state by the fact that reading the sequence of steps in a Motzkin walk from right to left results in an upside-down Motzkin walk. 

Next, it is clear from the definition of the Hamiltonian that global permutations of the color labels leave the Hamiltonian invariant. Specifically, for any permutation $\pi \in \mathbb{S}_{s}$ of the $s$ colors, one can define an associated unitary representation $W_\pi$ by its action on the basis states: $W_\pi \ket*{u^k} = \ket*{u^{\pi(k)}}$, $W_\pi \ket{0}= \ket{0}$, and $W_\pi \ket*{d^k} = \ket*{d^{\pi(k)}}$, and the Hamiltonian projectors in \cref{eq:projectors} are invariant under conjugation by this unitary.

It remains to justify the semi-direct product structure of the symmetry group. Since the action of the $U(1)^{\times s}$ generalized rotations has no non-trivial intersection with the action of the other symmetry factors, it suffices to show that the $U(1)^{\times s}$ subgroup is normal~\cite{dummit2003abstract}. We first show that the $U(1)^{\times s}$ subgroup is mapped to itself by action of the $\mathbb{Z}_2 \times \mathbb{Z}_2$ subgroup, and then similarly for the $\mathbb{S}_s$ action. This would show that the $U(1)^{\times s}$ subgroup is normal, since the $\mathbb{Z}_2 \times \mathbb{Z}_2$ and $\mathbb{S}_s$ actions commute. 

First note that the $U(1)^{\times s}$ symmetry action commutes with the action of the $\mathbb{Z}_2$ subgroup corresponding to the $PT$ symmetry since $RFK \cdot e^{-i \sum_k \theta_k Q^k} \cdot (RFK)^{-1} = e^{-i \sum_k \theta_k Q^k}$. This is because $K$ commutes with the $Q^k$ operators since they are real, while it flips the sign of $i$. $F$ negates this since $F Q^k F^\dagger = -Q^k$. The space parity operator $R$ has no effect on $ Q^k$ since all sites are summed over. However, the $U(1)^{\times s}$ symmetry action does not commute with either $K$ or $RF$ independently. Instead, $R F  \cdot e^{-i \sum_k \theta_k Q^k} \cdot (RF)^\dagger = e^{i \sum_k \theta_k Q^k}$ and $K \cdot e^{-i \sum_k \theta_k Q^k} \cdot K^{-1} = e^{i \sum_k \theta_k Q^k}$. This shows that the $U(1)^{\times s}$ subgroup is mapped to itself under conjugation by the $\mathbb{Z}_2 \times \mathbb{Z}_2$ action since $e^{i \sum_k \theta_k Q^k}\in U(1)^{\times s}$. Finally, we consider the action of $\mathbb{S}_s$ color-permutations on a $U(1)^{\times s}$ rotation. This action simply permutes the $s$ different $U(1)$ factors. Concretely, for $\pi \in \mathbb{S}_s$ with representation $W_\pi$, we have $W_\pi e^{-i \sum_k \theta_k Q^k} W_\pi^\dagger = e^{-i \sum_k \theta_k Q^{\pi(k)}}$ which is also in $U(1)^{\times s}$. This concludes the argument, showing that the symmetry group of the Hamiltonian has the form  
\begin{equation*}
    U(1)^{\times s} \rtimes  (\mathbb{Z}_2 \times \mathbb{Z}_2 \times \mathbb{S}_s)
\end{equation*}
\end{proof}

\begin{remark*}
    Since the symmetry group of the Hamiltonian is non-Abelian resulting from the non-commuting semi-direct product structure, we expect the excited states to exhibit degeneracies. For $s=1$ and $s=2$, Mackey's theorem from representation theory \cite{etingof2009introduction} can be used to find the non-trivial irreducible representations (irreps) of the semidirect product of the two Abelian groups $U(1)^{\times s}$ and $\mathbb{Z}_2 \times \mathbb{Z}_2 \times \mathbb{S}_s$, allowing for the evaluation of all degeneracies of the model in principle. However, this does not tell us which irrep a particular excited state resides in. For the colorful models with $s\geq 3$, since $\mathbb{S}_{s}$ is itself non-Abelian, one expects the excited states to exhibit additional degeneracy due to the effect of the non-trivial irreps of $\mathbb{S}_{s}$~\cite{robinson1938representations} on the irreps of the semi-direct product. Mackey's theorem does not apply to semi-direct products by a non-Abelian group, and a more specific study of representations of this group is required. In either case, we leave the question of precise evaluation of the models' spectral degeneracies open, as it is beyond the scope of this work.
\end{remark*}
Since $\gs$ is the \emph{unique} ground state of the Hamiltonian, it must also be invariant under the action of the symmetries described in \cref{thm::sym}. These symmetries immediately lead to constraints on certain correlation functions of the ground state, as the following corollary describes.

\begin{corollary*}\label{prop:corr}
Let $i\in[n]$ denote the location of a spin. Then, $\expval*{S^y_i}=\expval*{S^x_i}=0$, $\expval*{S^x_i S^x_{i+L}} = \expval*{S^y_i S^y_{i+L}}$, and $\expval*{S^{\alpha}_i S^{\beta}_{i+L}}=0$  for $\alpha,\beta \in \qty{x,y,z}$ and $\alpha \neq \beta$, in the ground state of the colorless and colorful Motzkin spin-chains.
\end{corollary*}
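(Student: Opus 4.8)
The plan is to derive every one of these identities purely from the invariance of the unique ground state $\gs$ under the symmetries established in \cref{thm::sym}, using two elementary facts about $\gs$. First, $\gs$ carries zero total $S^z$ charge: every colored Motzkin walk returns to the origin, so its steps pair up into matched up/down pairs of a common color $k$ contributing $S^z$ eigenvalues $+k$ and $-k$, whence $S^z_{\text{tot}}\gs = 0$. Setting $\theta_k = k\theta$ in the $U(1)^{\times s}$ action collapses it to the global $z$-rotation $U_\theta = e^{-i\theta S^z_{\text{tot}}}$, under which $\gs$ is therefore exactly invariant. Second, $\gs$ is real in the computational ($S^z$) basis, with all amplitudes equal to $1/\sqrt{M_n}$, so it is invariant under the antiunitary conjugation $K$. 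Since expectation values are insensitive to overall phases, invariance of $\gs$ under a symmetry $V$ gives $\expval{O} = \expval{V^\dagger O V}$ for every observable $O$, and this is the only mechanism I will use.

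First I would dispose of the single-site expectation values. Conjugation by $U_\theta$ rotates the transverse components, $U_\theta^\dagger S^x_i U_\theta = \cos\theta\, S^x_i - \sin\theta\, S^y_i$ and $U_\theta^\dagger S^y_i U_\theta = \sin\theta\, S^x_i + \cos\theta\, S^y_i$, so taking $\theta = \pi$ sends $S^x_i \mapsto -S^x_i$ and $S^y_i \mapsto -S^y_i$. Invariance then forces $\expval{S^x_i} = -\expval{S^x_i}$ and likewise for $S^y_i$, so both vanish. For the matched two-point functions I would reuse the same rotation at $\theta = \pi/2$, which maps $S^x_i S^x_{i+L} \mapsto (-S^y_i)(-S^y_{i+L}) = S^y_i S^y_{i+L}$ and hence gives $\expval{S^x_i S^x_{i+L}} = \expval{S^y_i S^y_{i+L}}$.

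For the mixed correlators involving $S^z$, I would keep $\theta$ general. Since $S^z$ is $U_\theta$-invariant while $S^x_i \mapsto \cos\theta\, S^x_i - \sin\theta\, S^y_i$, equating $\expval{S^x_i S^z_{i+L}}$ with the expectation of its $U_\theta$-conjugate yields $\expval{S^x_i S^z_{i+L}} = \cos\theta\,\expval{S^x_i S^z_{i+L}} - \sin\theta\,\expval{S^y_i S^z_{i+L}}$ for every $\theta$. Demanding this for all $\theta$ (take $\theta = \pi$ to kill the first term, then any $\theta$ to kill the second) forces both $\expval{S^x_i S^z_{i+L}}$ and $\expval{S^y_i S^z_{i+L}}$ to vanish, and the $(z,x)$ and $(z,y)$ orderings follow identically.

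The main obstacle is the $(x,y)$ cross-correlator, since the $z$-rotation alone only relates $\expval{S^x_i S^y_{i+L}}$ to $-\expval{S^y_i S^x_{i+L}}$ without pinning either to zero. Here I would invoke the antiunitary symmetry instead: in the $S^z$ basis $S^x$ is a real matrix while $S^y$ is purely imaginary, so $S^x_i S^y_{i+L}$ is represented by a purely imaginary matrix, and because $\gs$ has real amplitudes its expectation $\expval{S^x_i S^y_{i+L}}$ is purely imaginary. But $S^x_i$ and $S^y_{i+L}$ act on distinct sites and therefore commute, so $S^x_i S^y_{i+L}$ is Hermitian and its expectation is real. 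A number that is simultaneously real and purely imaginary must vanish --- which is exactly the content of $K$-invariance --- and the identical argument disposes of $\expval{S^y_i S^x_{i+L}}$. These cases together exhaust all orderings with $\alpha \neq \beta$, completing the corollary.
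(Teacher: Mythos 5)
Your proposal is correct and follows essentially the same route as the paper: both arguments rest on exactly the two symmetries used there, namely the $U(1)$ rotation about the $z$-axis (a diagonal subgroup of the $U(1)^{\times s}$ symmetry, which fixes the unique ground state since $S^z_{\text{tot}}\gs=0$) and the antiunitary conjugation $K$ (equivalently, the reality of the ground-state amplitudes), which flips the sign of any correlator containing an odd number of $S^y$ factors. The only differences are cosmetic reorderings — e.g.\ you kill $\expval*{S^x_i}$ and $\expval*{S^y_i}$ with the $\theta=\pi$ rotation directly, where the paper first equates them by a $\pi/2$ rotation and then kills $\expval*{S^y_i}$ with $K$ — so no further comparison is needed.
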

\begin{proof}
Recall that the $U(1)^{\times s}$ symmetry implies a conventional $U(1)$ spin-rotation symmetry about the $z$-axis. This $U(1)$ symmetry implies that $\expval*{S^x_i} = \expval*{S^y_i}$, and $\expval*{S^x_i S^x_{i+L}} = \expval*{S^y_i S^y_{i+L}}$ for the ground state. Furthermore, since $K$ maps $S^y_i \to -S^y_i$, we can conclude $\expval*{S^y_i} = -\expval*{S^y_i} \implies \expval*{S^y_i}=\expval*{S^x_i}=0$ on any spin for all $i$. Similarly, can infer that all cross-correlations of the form $\expval*{S^{\alpha}_i S^{\beta}_{i+L}}$ vanish identically for $\alpha,\beta \in \qty{x,y,z}$ and $\alpha \neq \beta$. For instance, $\expval*{S^x_{i} S^y_{i+L}}=-\expval*{S^x_{i} S^y_{i+L}}$, allowing us to conclude $\expval*{S^x_i S^y_{i+L}}=0$. An identical argument establishes that $\expval*{S^z_i S^y_{i+L}}=0$. We can then apply the $U(1)$ symmetry to conclude $\expval*{S^z_i S^x_{i+L}} = 0$. 
\end{proof}
Thus, by symmetry considerations alone, we see that all cross-correlations of the spin operators vanish in the ground state of the Motzkin chains. 

\subsection{Spontaneous symmetry breaking}\label{sec::ssb}

The rich symmetry structure of Motzkin spin-chains naturally leads to questions about spontaneous symmetry breaking (SSB) in the thermodynamic limit. Since the spectrum of the Motzkin spin-chain Hamiltonian is gapless, it is particularly intriguing to investigate whether any of its symmetries such as those described in \cref{sec:sym} are spontaneously broken as the spectrum collapses onto the ground-state. A standard diagnostic for SSB is the presence of long-range order (LRO) in the ground-state correlation functions of local operators charged under the symmetry of interest \cite{beekman2019introduction}. Specifically, for a charged local operator $\mathcal{O}$, we expect symmetry breaking if:
\begin{equation}
\lim_{\abs{i-j}\to \infty}\expval{\mathcal{O}_i \mathcal{O}_j} - \expval{\mathcal{O}_i} \expval{\mathcal{O}_j} \neq 0,
\end{equation}
where the expectation is taken with respect to the unique ground-state. The limit $\abs{i-j}\to \infty$ in this case is to be understood as extending the size of the unique ground-state while neglecting the degeneracies that occur from the gap closing. For brevity, we denote the connected correlation function as $\expval*{\mathcal{O}_i \mathcal{O}_j}_c \coloneqq \expval*{\mathcal{O}_i \mathcal{O}_j} - \expval*{\mathcal{O}_i} \expval*{\mathcal{O}_j}$.

For the colorless model, the spin-operator in the $x$-direction $S^x$ is charged under the $U(1)$ spin-rotation symmetry. Thus, we focus on $\expval*{S^x_i S^x_j}_c$ to assess potential $U(1)$ SSB. Importantly, $U(1)$ SSB in this context would not contradict the Mermin-Wagner-Hohenberg-Coleman (MWHC) theorem, which typically precludes continuous SSB in one dimension, even at zero temperature \cite{coleman1988aspects}. The MWHC theorem assumes that continuous symmetry breaking is accompanied by linearly dispersive gapless Nambu-Goldstone modes, which is indeed the case for Lorentz-invariant one-dimensional spin-chains. However, the theorem can be circumvented in one-dimensional, gapless, frustration-free systems, which generally have excitations with quadratic or softer dispersion \cite{watanabe2023spontaneous, gosset2016local, anshu2020improved}, as is known for the Motzkin spin chain \cite{movassagh2016supercritical}.

In \cref{sec:xx}, we demonstrate that for the $s=1$ chain, the ground-state $\expval*{S^x_i S^x_j}_c$ correlation function asymptotes to a constant value 4/9, indicating $U(1)$ SSB in the thermodynamic limit. Notably, this symmetry breaking only occurs as $n\rightarrow \infty$, as we have explicitly shown that $\expval*{S_x} = 0$ for all finite $n$, suggesting an Anderson tower-of-states structure \cite{anderson2018basic, beekman2019introduction} in the low-lying spectrum leading to $U(1)$ SSB. We further test $U(1)$ SSB numerically through exact diagonalization calculations of the ground-state response of the average expectation value $\frac{1}{n}\sum_i \expval*{S^x_i}$ to a perturbing field $h \sum_i S^x_i$, where $h$ is the field strength. SSB is characterized by non-commuting limits of the field-strength and system size:
\begin{equation}\label{eq::limits_ssb}
    \lim_{h\rightarrow 0}\lim_{n\rightarrow \infty} \frac{1}{n}\sum_i \expval*{S^x_i}_{h} \neq \lim_{n\rightarrow \infty}\lim_{h\rightarrow 0} \frac{1}{n}\sum_i \expval*{S^x_i}_{h},
\end{equation}
where $\expval*{\cdot}_h$ denotes the expectation value with respect to the ground state of the perturbed Hamiltonian. \cref{fig:ssb} illustrates the sharpening discontinuity in the response at $h=0$ as $n$ increases, indicating the failure of the limits in \cref{eq::limits_ssb} to commute.

\begin{figure}
    \centering
    \includegraphics[width=0.5\textwidth]{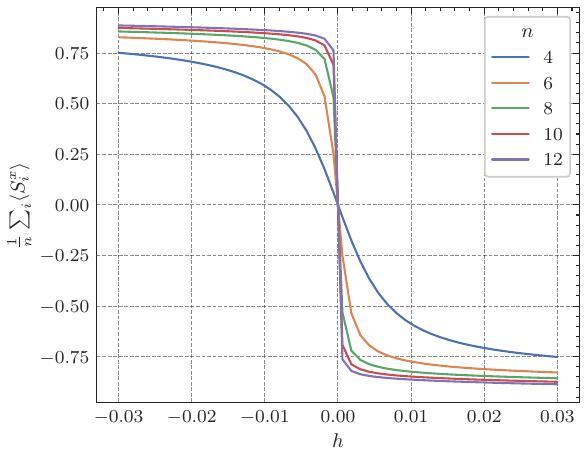}
    \caption{Numerical exact diagonalization calculation of the response of $\frac{1}{n}\sum_i \expval*{S_i^x}$ to a weak perturbing field $h\cdot \sum_iS^x_i$ in the ground-state of the (perturbed) $s=1$ Motzkin Hamiltonian in \cref{hamiltonian_main}. The sharpening of the response curves with increasing system size indicates spontaneous breaking of the $U(1)$ spin-rotation symmetry in the colorless ($s=1$) model.}  
    \label{fig:ssb}
\end{figure}

The colorful chains $(s\geq 2)$ exhibit markedly different behavior. As shown in \cref{sec:xx}, their $XX$ correlation functions decay algebraically as $\sim |i-j|^{-3/2}$, suggesting that the $U(1)$ symmetry corresponding to the diagonal sub-group of the $U(1)^{\times s}$ spin-rotation symmetries is not spontaneously broken. We hypothesize that the constraints induced by the `color-matching' conditions in the colorful model protect the system from SSB, instead leading to `quasi-long-range' order typically associated with a topological Berezinskii–Kosterlitz–Thouless (BKT) type quantum phase transition \cite{kosterlitz6ordering}. This is further discussed in \cref{sec-discussion}.

Finally, we do not expect the discrete $\mathbb{Z}_2$ spin-inversion + parity symmetry or the $\mathbb{S}_s$ color permutation symmetry to be spontaneously broken in either the colorless or colorful models. This is supported by our findings that the $\expval*{S^z_i S^z_j}$ correlation function vanishes with system size in the colorless model and decays algebraically as $\sim \abs{i-j}^{-3/2}$ in the colorful model. Since $S^z$ is charged under both the $\mathbb{Z}_2$ and $\mathbb{S}_s$ symmetries, this behavior suggests that neither is spontaneously broken. Moreover, the preservation of the $\mathbb{S}_s$ symmetry implies that if the diagonal $U(1)$ subgroup remains unbroken, none of the individual $U(1)$ symmetry factors in $U(1)^{\times s}$ can exhibit SSB independently. The $\mathbb{Z}_2$ complex conjugation symmetry is never broken as the Hamiltonian is real.

\section{Entanglement measures}\label{sec:entanglement}
The entanglement entropy of a subsystem is a key quantity that characterizes the amount of quantum correlations between the subsystem and its complement. Previous results on the colorless ($s=1$) Motzkin model revealed that the entanglement entropy scales as $S = \frac{1}{2}\log L + O(1)$ for $L$ consecutive spins in the bulk, and the half-chain entanglement entropy scales as $S = \frac{1}{2}\log n + O(1)$~\cite{Bravyi_2012,movassagh2017entanglement}. It was also shown that, remarkably, the colorful ($s \geq 2$) Motzkin ground state has a half-chain entanglement entropy that to the leading order scales as $\sqrt{n}$~\cite{movassagh2016supercritical}. This exponential violation of the area law is surprising for `natural' spin systems, where `natural' could mean physically realizable or physical models whose Hamiltonians are local, not fine-tuned, and have unique ground states. In this section, we generalize the aforementioned results and derive the scaling of the entanglement entropy for arbitrary bipartitions of the chain by partitioning Motzkin walks by the height they reach in the middle of the chain, and evaluating resulting combinatorial expressions for the Schmidt coefficients. To derive the block entanglement entropy, we express the ground state as a tripartite decomposition of the chain into three segments of sizes $b$, $L$, and $b$, with $\sqrt{b} \gg L$. The primary insight leading to an orthogonal tripartite decomposition of the ground-state is to partition the Motzkin walks according to both their height at the endpoints of the block, as well as the number of unmatched (including color matching) steps up and down in the middle segment of length L. By analyzing the asymptotics of the resulting combinatorial expressions, we are able to extract the universal scaling forms of the entanglement entropy, and precisely characterize the non-universal prefactors and subleading corrections. 

\subsection{Bipartite entanglement about an arbitrary cut}\label{subsec:bipartite}
\citet{movassagh2016supercritical} showed the asymptotic bipartite entanglement with respect to a bipartition that cuts the chain exactly in half to be
\begin{equation}
    S = 2\log_2(s) \sqrt{\frac{\sigma n}{\pi}} + \frac{1}{2} \log_2 n + \qty(\gamma-\frac{1}{2}) \log_2 e + \log_2(2 \sqrt{\sigma \pi}),
\end{equation}
where $\sigma \coloneqq \sqrt{s}/(2\sqrt{s}+1)$ and $\gamma \approx 0.577$ is the Euler-Mascheroni constant. We first generalize these results by calculating the bipartite entanglement with respect to an \emph{arbitrary} cut.

Consider splitting the chain into segments of length $b$ and $n-b$. Let us label the length-$b$ segment $A$, and label the other segment $B$. Without loss of generality, we assume that $b\leq n-b$. We define $\S$ to be the set of all $s$ colors. The Motzkin state then has a Schmidt decomposition 
\begin{equation}
    \gs \propto \sum_{m=0}^b \sum_{\vec{x} \in \S^m} \sqrt{M_{b,m} M_{n-b,m}} \ket*{C_{b,m[\uparrow],\vec{x}}}_A \otimes \ket*{C_{n-b,m[\downarrow],\vec{\Bar{x}}}}_B.\label{eq:bipartition}
\end{equation}
This is a consequence of the fact that every Motzkin walk can be written as a walk on $A$ with $m$ unmatched up steps, which have colors $\vec{x} = (x_1,x_2,\cdots,x_m)$, followed by a walk on $B$ with $m$ unmatched down steps with colors that are uniquely determined by the colors on the steps in $A$ --- to satisfy the color matching condition, the coloring on $B$ must be the reversed coloring on $A$: $\vec{\Bar{x}} = (x_m, x_{m-1}, \cdots, x_1)$. To see that $\{\ket*{C_{b,m[\uparrow],\vec{x}}} \mid \vec{x} \in \S^m, m=0,\ldots,b\}$ and $\{\ket*{C_{n-b,m[\downarrow],\vec{\Bar{x}}}} \mid \vec{x} \in \S^m, m=0,\ldots,b\}$ are each orthonormal sets, we note that we can associate an equivalence class of walks (each of which is in one-to-one correspondence with a computational basis state) to each of the $\ket*{C_{b,m[\uparrow],\vec{x}}}$, defined as all walks on $b$ steps which have $m$ unmatched up steps colored by $\vec{x}$. These sets are all disjoint, which implies that $\braket*{C_{b,m'[\uparrow],\vec{x}'}}{C_{b,m[\uparrow],\vec{x}}}=\delta_{m,m'} \delta_{\vec{x},\vec{x}'}$. Identical reasoning applies for $\ket*{C_{n-b,m[\downarrow], \vec{x}}}$. Therefore, \cref{eq:bipartition} is a valid Schmidt decomposition from which the Schmidt coefficients can be read off. The following theorem derives the various entanglement entropies associated with this bipartite state.

\begin{theorem}[Bipartite entanglement]
The von Neumann entanglement entropy of the $s\geq 2$ ground state in \cref{eq:bipartition} is asymptotically (in $b$):
\begin{equation}\label{eq:bipartite-entanglement}
    S_1 = 4 \log_2(s) \sqrt{\frac{\sigma \beta}{\pi}} + \frac{1}{2}\log_2 \beta + \qty(\gamma-\frac{1}{2}) \log_2 e + \log_2(2 \sqrt{\sigma \pi}),
\end{equation}
where $\beta \coloneqq b(1-b/n)$, $\sigma \coloneqq \sqrt{s}/(2\sqrt{s}+1)$ and $\gamma \approx 0.5772$ is the Euler-Mascheroni constant. The R\'enyi entanglement entropies $S_\kappa$ for $\kappa \geq 2$ are asymptotically
\begin{equation}\label{eq:renyi-cut}
    S_\kappa = \frac{3}{2(1-\kappa^{-1})} \log_2(\beta) + \frac{\kappa \log_2(2 \sqrt{\pi} \sigma^{3/2}) - \log_2((2\kappa)!) + (2\kappa+1) \log_2((\kappa-1) \log_2 s)}{\kappa-1}.
\end{equation}
Finally, the Schmidt rank is
\begin{equation}\label{eq:chi-cut}
    \chi = \frac{s^{b+1}-1}{s-1}.
\end{equation}
\end{theorem}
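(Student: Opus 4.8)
The plan is to work directly from the Schmidt decomposition in \cref{eq:bipartition}, whose Schmidt coefficients are $\sqrt{M_{b,m}M_{n-b,m}}$ and, crucially, depend only on the height $m$ and not on the coloring $\vec x\in\S^m$. Thus the reduced density matrix has, for each $m$, an $s^m$-fold degenerate eigenvalue $p_m = M_{b,m}M_{n-b,m}/\mathcal N$, where $\mathcal N = \sum_{m=0}^b s^m M_{b,m}M_{n-b,m}$. First I would substitute the asymptotic form \cref{eq:asymptotic_Mnm}: since the $m$-independent prefactors of $M_{b,m}$ and $M_{n-b,m}$ appear identically in the numerator and in $\mathcal N$, they cancel, and using $\tfrac1b+\tfrac1{n-b}=\tfrac1\beta$ one finds $p_m \propto m^2 s^{-m} e^{-m^2/(4\sigma\beta)}$. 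The total weight in sector $m$ is therefore $P_m \coloneqq s^m p_m \propto m^2 e^{-m^2/(4\sigma\beta)}$, a distribution I would normalize using the Gaussian moment $\int_0^\infty m^2 e^{-m^2/(4\sigma\beta)}\,dm = 2\sqrt\pi(\sigma\beta)^{3/2}$ (replacing the sum by an integral and extending the upper limit to infinity, justified as for \cref{eq:asymptotic_Mnm}).

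The Schmidt rank is immediate: every pair $(m,\vec x)$ with $0\le m\le b$ contributes a nonzero coefficient, so $\chi=\sum_{m=0}^b s^m=(s^{b+1}-1)/(s-1)$, giving \cref{eq:chi-cut}. For the von Neumann entropy I would exploit the degeneracy to split it into a ``which-sector'' part and a ``coloring'' part: writing $p_m=P_m/s^m$ gives $S_1 = -\sum_m P_m\log_2 P_m + \log_2(s)\sum_m m P_m = H(\{P_m\}) + \log_2(s)\langle m\rangle$. All the quantities I need — $\langle m\rangle$, $\langle m^2\rangle$ and $\langle\log_2 m\rangle$ — are moments of $P_m\propto m^2 e^{-m^2/(4\sigma\beta)}$ and reduce to Gaussian integrals. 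The mean $\langle m\rangle = 4\sqrt{\sigma\beta/\pi}$ produces the leading $4\log_2(s)\sqrt{\sigma\beta/\pi}$ term, and evaluating $H(\{P_m\})$ — which requires the normalization constant together with $\langle m^2\rangle=6\sigma\beta$ and $\langle\log_2 m\rangle$ — yields the $\tfrac12\log_2\beta$ correction and the explicit additive constants in \cref{eq:bipartite-entanglement}.

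For the R\'enyi entropies with $\kappa\ge2$ the same eigenvalues give $\mathrm{Tr}(\rho^\kappa)=\sum_m s^m p_m^\kappa \propto \mathcal N^{-\kappa}\sum_m m^{2\kappa}\,s^{-(\kappa-1)m}\,e^{-\kappa m^2/(4\sigma\beta)}$. The qualitative point I would stress is that the factor $s^{-(\kappa-1)m}$ decays exponentially, so the summand is peaked at $m=O(1)$; for large $\beta$ the Gaussian factor is $\approx1$ over this range, and the sum is to leading order the $\beta$-independent integral $\int_0^\infty m^{2\kappa}s^{-(\kappa-1)m}\,dm = (2\kappa)!/\big((\kappa-1)\ln s\big)^{2\kappa+1}$. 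All the $\beta$-dependence then comes from $\mathcal N^{-\kappa}\sim(\sigma\beta)^{-3\kappa/2}$, so $S_\kappa=\tfrac{1}{1-\kappa}\log_2\mathrm{Tr}(\rho^\kappa)$ scales as $\tfrac{3\kappa}{2(\kappa-1)}\log_2\beta=\tfrac{3}{2(1-\kappa^{-1})}\log_2\beta$, with the remaining terms collecting into the stated constant of \cref{eq:renyi-cut}. This also explains the structural contrast between \cref{eq:bipartite-entanglement,eq:renyi-cut}: the $\sqrt\beta$ growth of $S_1$ is driven entirely by the coloring entropy $\log_2(s)\langle m\rangle$, which is invisible to $\mathrm{Tr}(\rho^\kappa)$ for $\kappa\ge2$ because the high-$m$ sectors are exponentially suppressed there.

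The main obstacle I anticipate is bookkeeping the subleading constants rather than the leading scalings. In particular, the $(\gamma-\tfrac12)\log_2 e$ term in \cref{eq:bipartite-entanglement} arises from $\langle\log_2 m\rangle$, which is not an elementary Gaussian moment: differentiating $\int_0^\infty m^\nu e^{-cm^2}\,dm=\tfrac12 c^{-(\nu+1)/2}\Gamma(\tfrac{\nu+1}2)$ in $\nu$ and evaluating at $\nu=2$ brings in $\psi(\tfrac32)=2-\gamma-2\ln2$, which is precisely the source of the Euler--Mascheroni constant. Carefully combining this with the $\langle m^2\rangle$ contribution and the normalization constant, while tracking the $\log_2$-versus-$\ln$ conversions, is the delicate step. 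A secondary point requiring care is the validity of the sum-to-integral (Euler--Maclaurin / saddle-point) replacements, which I would justify exactly as in Refs.~\cite{movassagh2017entanglement,flajolet2009analytic}, verifying that the neglected corrections are $o(1)$ in $\beta$.
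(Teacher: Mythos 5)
Your proposal follows essentially the same route as the paper: read off the degenerate Schmidt spectrum $p_m \propto M_{b,m}M_{n-b,m}$ from \cref{eq:bipartition}, marginalize over the coloring to get $\Pr(m)\propto m^2 e^{-m^2/(4\sigma\beta)}$ with $Z=2\sqrt{\pi}(\sigma\beta)^{3/2}$, split $S_1$ into a sector entropy plus $\log_2(s)\langle m\rangle$, and for $\kappa\ge 2$ exploit the exponential suppression $s^{-(\kappa-1)m}$ so that the Gaussian factor contributes only subleading corrections. The only point to watch is the natural-log versus base-2 convention in the Rényi constant (your integral gives $(\kappa-1)\ln s$ where the paper's stated formula carries $(\kappa-1)\log_2 s$), a bookkeeping detail you already flag as the delicate step.
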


\begin{proof}
We will describe the set of Schmidt coefficients by a probability distribution
\begin{equation}\label{eq:cut-dist}
    \Pr(m,\vec{x}) = \frac{1}{Z} M_{b,m}M_{n-b,m}.
\end{equation}
We use the asymptotic form of the Motzkin numbers \eqref{eq:asymptotic_Mnm} to evaluate the marginal distribution
\begin{equation}\label{eq:prob-m}
    \Pr(m) = Z^{-1} \sum_{\vec{x} \in \S^m} M_{b,m} M_{n-b,m} \approx \frac{m^2 e^{-m^2/(4 \sigma \beta)}}{2\sqrt{\pi} (\sigma \beta)^{3/2}}\qc \beta \coloneqq b(1-b/n)
\end{equation} 
As shown in \cref{app:bipartite}, the normalization constant of this distribution is $Z = 2\sqrt{\pi}(\sigma \beta)^{3/2}$. Since the von Neumann entanglement entropy $S_1$ is the Shanonn entropy $\H[m,\vec{x}]$ of the eigenvalue distribution \eqref{eq:cut-dist}, we proceed by evaluating $\H[m,\vec{x}]$. Using the chain rule for conditional entropies, we have $\H[m,\vec{x}] = \H[m] + \H[\vec{x} \mid m]$, and since $\vec{x}$ is uniformly distributed over $s^m$ possibilities given $m$, $\H[\vec{x} \mid m] = \mathbb{E}[m \log_2 s]$. Therefore $S_1 = \E_m[-\log_2 \Pr(m) + m\log_2 s]$, and using \cref{eq:prob-m} to evaluate this expectation, we arrive at the desired result \eqref{eq:bipartite-entanglement}. The calculation for the R\'enyi entropy proceeds similarly, the details for which can be found in \cref{app:bipartite}. To conclude, the Schmidt rank is the size of the support of $\Pr(m,\vec{x})$: $\chi = \sum_{m=0}^b s^m = \frac{s^{b+1}-1}{s-1}$, hence the 0-Renyi entropy (i.e., $\log_2 \chi$) is $S_0 \approx (b+1) \cdot \log_2(s) - \log_2(s-1)$ for $s=1$.
\end{proof}

To illustrate the accuracy of the above asymptotics, we present numerical evidence in \cref{fig:renyi}, which demonstrates  agreement of the analytical expression \eqref{eq:renyi-cut} with the exact Renyi entropy computed from the distribution in \cref{eq:cut-dist}.

\begin{figure}
    \centering
    \includegraphics[width=0.8\textwidth]{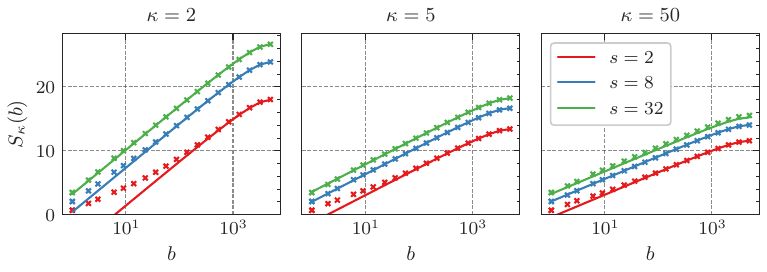}
    \caption{We calculate the exact R\'enyi entropy $S_\kappa(b)$ for a Motzkin spin chain with $n=10^4$ for $s=2,8,32$ and $\kappa=2,5,50$  from the the distribution in \cref{eq:cut-dist}. The exact results are marked with crosses, while the asymptotic predictions from \cref{eq:renyi-cut} are shown in solid lines, demonstrating agreement at relatively low segment lengths $b$.}
    \label{fig:renyi}
\end{figure}

\subsection{Block entanglement}\label{subsec:block}

We turn to the problem of calculating the entanglement entropy of a block of length $L$ in the middle of the colorful ($s\geq2$) Motzkin chain. Assuming that the length of the full chain is $n$, we partition the chain into segments $A$, $B$, $C$, of lengths $b$, $L$, $b$ respectively, so that $b=\lfloor (n-L)/2 \rfloor$.

The key insight that allows us to calculate the block-entanglement of the colorful Motzkin chain is to sum over the number of unmatched down and up steps in $B$, denoted by $p$ and $q$ respectively. This way, the $p$ unmatched steps down in $B$ must be matched by (and uniquely determine the color of) the right-most $p$ unmatched up steps up in $A$, and similarly the $q$ unmatched steps up in $B$ must be matched by (and uniquely determine the color of) the left-most $q$ unmatched down steps down in $C$. There are $s^{p+q}$ ways to color these steps. The coloring of these steps will be encoded by a vector $\vec{x} \in \S^{p+q}$. The remaining number of unmatched steps are those which remain to be matched between $A$ and $C$. If the walk reaches height $m$ at the end of $A$, then there are $m-p$ pairs of steps between $A$ and $C$ that are unmatched within $A$ and $C$ independently. We will encode the colorings of these steps with $\vec{y} \in \S^{m-p}$. 

The ground state $\gs$ can be written in the following tripartite form:
\begin{equation}\label{eq:tripartite}
    \sum_{p+q \leq L}\sum_{m=p}^b  \sum_{\substack{\vec{x} \in \S^{p+q} \\ \vec{y} \in \S^{m-p}}} \sqrt{M_{b,m} M_{L,p+q} M_{b,m+q-p}}  \ket*{C_{b,m[\uparrow],\vec{x},\vec{y}}}_A \otimes \ket*{C_{L,p[\downarrow],q[\uparrow],\vec{x}}}_B \otimes \ket*{C_{b,(m+q-p)[\downarrow],\vec{x},\vec{y}}}_C
\end{equation}
where here the second subscript on $M_{j,k}$ denotes the total of $k$ unmatched steps on a Motzkin walk segment of length $l$. Upon tracing out $AC$, we find the reduced density matrix on $B$ to be
\begin{equation}\label{eq:psib}
    \psi_B \propto \sum_{p+q \leq L} \sum_{\vec{x} \in \S^{(p+q)}} M_{L,p+q}\: s^{-(p+q)/2} \ketbra*{C_{L,p[\downarrow],q[\uparrow],\vec{x}}}{C_{L,p[\downarrow],q[\uparrow],\vec{x}}}.
\end{equation}
By identical reasoning to the previous section, the set of states $\{\ket{C_{L,p[\downarrow],q[\uparrow],\vec{x}}}\}$ form an orthonormal set, hence \cref{eq:psib} is a diagonalization of the mixed state $\psi_B$. From this expression, we can state the following theorem for the block entanglement entropy.
\begin{theorem}[Block entanglement]
The von Neumann entropy of $\psi_B$ is
\begin{equation}\label{eq:block-entanglement}
    S_1 = 4 \log_2 s \sqrt{\frac{\sigma L}{\pi}} + \log_2 L + \qty(\frac{\gamma+1}{2}) \log_2 e + \log_2(2\sqrt{\pi \sigma}).
\end{equation}
The $\kappa$-R\'enyi entropies for $\kappa \geq 2$ are
\begin{equation} \label{eq:renyi-block}
    S_\kappa = \frac{3}{2(1-\kappa^{-1})} \log_2(L) + \frac{\kappa \log_2(2\sqrt{\pi} \sigma^{3/2}) - \log_2((\kappa+1)!) + (\kappa+2) \log_2((\kappa-1) \log_2 s)}{\kappa-1}.
\end{equation}
Finally, the Schmidt rank is
\begin{equation}\label{eq:chi-block}
    \chi = \frac{s^L \cdot (L(s-1)-1)+1}{(s-1)^2}.
\end{equation}
\end{theorem}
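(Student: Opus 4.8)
The plan is to treat \cref{eq:psib} as an explicit eigendecomposition of $\psi_B$ and to reduce all three quantities to sums over the single integer $k \coloneqq p+q$, the total number of unmatched steps in the block. The eigenvalue attached to $\ket*{C_{L,p[\downarrow],q[\uparrow],\vec{x}}}$ is $\lambda(k) \propto M_{L,k}\,s^{-k/2}$, which depends only on $k$; for each $k$ there are $(k+1)$ ways to split $k=p+q$ and $s^k$ colorings $\vec{x}$, so the degeneracy of $\lambda(k)$ is $(k+1)s^k$. Substituting the asymptotic form $M_{L,k}\sim k\,s^{-k/2} e^{-k^2/(4\sigma L)}$ from \cref{eq:asymptotic_Mnm} gives $\lambda(k)\propto k\,s^{-k} e^{-k^2/(4\sigma L)}$ and a marginal weight $\Pr(k)=(k+1)s^k\lambda(k)\propto (k+1)k\,e^{-k^2/(4\sigma L)}$. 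As in the proof of the previous theorem, I would convert the normalizing sum to a Gaussian integral, obtaining $Z = 2\sqrt{\pi}(\sigma L)^{3/2}$ to leading order.

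For the von Neumann entropy I would write $S_1 = \E_k\qty[-\log_2 \lambda(k)]$ and expand $-\log_2\lambda(k) = \log_2 Z - \log_2 k + k\log_2 s + \tfrac{k^2}{4\sigma L}\log_2 e$. Everything then reduces to the three moments $\E[k]$, $\E[k^2]$, and $\E[\log_2 k]$ of $\Pr(k)$, each of which follows from the change of variable $u=k^2/(4\sigma L)$ that turns $\Pr$ into a $\mathrm{Gamma}(3/2)$ density. The dominant scale is $k\sim\sqrt{L}$: $\E[k]=4\sqrt{\sigma L/\pi}$ supplies the leading $4\log_2 s\,\sqrt{\sigma L/\pi}$ term, $\tfrac{\log_2 e}{4\sigma L}\E[k^2]$ gives a constant $\tfrac{3}{2}\log_2 e$, and $\log_2 Z - \E[\log_2 k]$ produces the $\log_2 L$ term together with the remaining constant, with the Euler–Mascheroni constant entering through $\E[\log_2 k]$ via the digamma value $\psi(3/2)=2-\gamma-2\ln 2$. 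Crucially, because each eigenvalue here carries only a single factor of $k$ (one Motzkin number) rather than the two that appeared in the bipartite cut, the coefficient of $\log_2 L$ comes out to $1$ rather than $\tfrac12$, which is the origin of the difference between \cref{eq:block-entanglement} and \cref{eq:bipartite-entanglement}.

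For the R\'enyi entropies I would compute $\mathrm{Tr}(\psi_B^\kappa)=\sum_k (k+1)s^k\lambda(k)^\kappa \propto Z^{-\kappa}\sum_k (k+1)k^\kappa\, s^{-k(\kappa-1)} e^{-\kappa k^2/(4\sigma L)}$ and then take $S_\kappa = \tfrac{1}{1-\kappa}\log_2\mathrm{Tr}(\psi_B^\kappa)$. The key observation is that for $\kappa\geq 2$ the factor $s^{-k(\kappa-1)}$ decays exponentially in $k$, so the sum is now dominated by $k=O(1)$ rather than $k\sim\sqrt{L}$; the Gaussian $e^{-\kappa k^2/(4\sigma L)}\to 1$ over this range and drops out, leaving a gamma integral $\int_0^\infty k^{\kappa+1}e^{-(\kappa-1)k\ln s}\,\mathrm{d}k = (\kappa+1)!\,/\,\qty[(\kappa-1)\ln s]^{\kappa+2}$. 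The remaining $L$-dependence lives entirely in $Z^{-\kappa}\propto L^{-3\kappa/2}$, and the prefactor $\tfrac{1}{1-\kappa}$ turns $-\kappa\cdot\tfrac32\log_2 L$ into the stated coefficient $\tfrac{3}{2(1-\kappa^{-1})}\log_2 L$, while the factorial and logarithmic factors supply the constant in \cref{eq:renyi-block}. Finally, the Schmidt rank is just the number of nonzero eigenvalues, i.e.\ the size of the support of $\Pr(k)$: summing the degeneracies gives $\chi=\sum_k (k+1)s^k$, an arithmetic--geometric series whose closed form yields \cref{eq:chi-block}.

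The main obstacle is the precise asymptotic bookkeeping rather than any conceptual difficulty. Two points require care. First, one must justify replacing the sums by integrals (and extending their upper limits) uniformly enough to control the $O(1)$ constants — this rests on the Euler–Maclaurin/saddle-point machinery already invoked for \cref{eq:asymptotic_Mnm} — and in particular must track the digamma constant in $\E[\log_2 k]$ and the effect of the $(k+1)$ versus $k$ degeneracy factor. Second, and most importantly, one must correctly identify that the dominant balance shifts from $k\sim\sqrt{L}$ for $S_1$ to $k=O(1)$ for $S_\kappa$ with $\kappa\geq 2$; this shift is precisely what makes the von Neumann entropy grow as $\sqrt{L}$ while every higher R\'enyi entropy grows only logarithmically, and getting the non-universal constants exactly right is where the bulk of the careful computation lies.
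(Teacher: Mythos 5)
Your proposal is correct and follows essentially the same route as the paper: treat \cref{eq:psib} as a diagonalization, reduce to the marginal over $k=p+q$ with normalization $Z=2\sqrt{\pi}(\sigma L)^{3/2}$, extract $S_1$ from the moments $\E[k]$, $\E[k^2]$, $\E[\log_2 k]$ of the $k^2 e^{-k^2/(4\sigma L)}$ density (digamma value included), and obtain $S_\kappa$ from a gamma integral whose dominant scale shifts to $k=O(1)$, with the $L$-dependence coming entirely from $Z^{-\kappa}$. The one divergence is your degeneracy count $(k+1)s^k$ versus the paper's $\eta\, s^{\eta}$; this is immaterial for the entropy asymptotics since the relevant distributions concentrate at $k\sim\sqrt{L}$ (for $S_1$) or are insensitive to the $O(1)$ shift (for $S_\kappa$), but your Schmidt-rank sum $\sum_{k=0}^{L}(k+1)s^{k}=\frac{(L+1)s^{L+2}-(L+2)s^{L+1}+1}{(s-1)^{2}}$ does not literally reproduce \cref{eq:chi-block} --- a counting discrepancy affecting only an additive $O(1)$ in $S_0$, not the method.
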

\begin{proof}
The form of $\psi_B$ in \cref{eq:psib} defines a set of Schmidt coefficients, which we express as a probability distribution over the Schmidt eigenvalues $\Pr(p,q,\vec{x}) = \frac{1}{Z} M_{L,p+q} s^{-(p+q)/2}$. Using the asymptotic forms for $M_{L,p+q}$ yields $\Pr(p,q,\vec{x}) = \frac{1}{Z} (p+q) s^{-(p+q)} e^{-(p+q)^2/(4 \sigma L)}$. The entanglement entropy is then the Shannon entropy $\H[p,q,\vec{x}]$ of this distribution. However, since $\Pr(p,q,\vec{x})$ is manifestly independent of $q-p$, we express the distribution in terms of the summed and differenced variables $\eta \coloneqq p+q$ and $\Delta \coloneqq p-q$, so that 
\begin{equation}
    \Pr(\eta,\Delta,\vec{x}) = \frac{1}{Z} \eta s^{-\eta} e^{-\eta^2/(4 \sigma L)}\;\qc 
\end{equation}
where the normalizing constant is $Z = 2\sqrt{\pi}(\sigma L)^{3/2}$ (see \cref{app:block} for details). Conditioned on $\eta$, both $\vec{x}$ and $\Delta$ are uniformly distributed over $s^{\eta}$ and $\eta$ possibilities, respectively. Therefore, $\H[\vec{x} \mid \eta] = \eta \log_2 s$ and $\H[\Delta \mid \eta] = \log_2\eta$, and the overall entropy can be reduced to an expectation value over $\eta$ alone. That is, $S_1 = \H[\eta,\Delta,\vec{x}] = \H[\eta] + \H[\vec{x} \mid \eta] + \H[\Delta \mid \eta] = \E_{\eta}[-\log_2 \Pr(\eta) + \eta \log_2 s + \log_2 \eta]$. The marginal distribution over $\eta$ is
\begin{equation}
    \Pr(\eta) = \frac{1}{Z} \eta^2 e^{-\eta^2/(4 \sigma L)},
\end{equation}
which allows us to calculate the expectation $\E_{\eta}[-\log_2 \Pr(\eta) + \eta \log_2 s + \log_2 \eta]$, arriving at the desired result \eqref{eq:block-entanglement}. The calculation for the R\'enyi entropy follows identically; see \cref{app:block} for details. We derive the Schmidt rank by calculating the support of the distribution $\Pr(\eta,\Delta,\vec{x})$ to be
\begin{equation}
    \chi = \sum_{\eta=0}^L \eta \cdot s^n = \frac{s^L \cdot (L(s-1)-1)+1}{(s-1)^2},
\end{equation}
yielding a 0-R\'enyi entropy $S_0 \approx L \cdot \log_2(s) + \log_2(L) - \log_2(s-1)$ for $s>1$. 
\end{proof}

\section{Spin operator expectation values and correlation functions}\label{sec:spin-operator}

We turn our attention to spin operator expectation values $\expval*{S^\alpha_i}$ and two-point spin correlation functions $\expval*{S^\alpha_i S^\beta_j}$ ($\alpha,\beta \in \qty{x,y,z}$). For the colorless Motzkin chain ($s=1$), it was previously shown that the $\expval*{S^z_i S^z_j}$ correlations vanish with system size as $O(n^{-1})$ in the bulk of the chain, and it was conjectured that similar scaling would hold for the colorful case~\cite{movassagh2017entanglement}. The results of this work surprisingly contradict this conjecture, showing a nontrivial power law decay of $\expval*{S^z_i S^z_j} \sim \abs{i-j}^{-3/2}$ in the bulk of the colorful chain. Furthermore, in previous work, the correlations in the $X$ and $Y$ bases were not calculated for either the $s=1$ or $s \geq 2$ models. Here we complete this picture by deriving analytical expressions for all two-point correlation functions $\expval*{S^\alpha_i S^\beta_j}$ for arbitrary $s$.

The key insight is to relate the spin-spin correlation functions to certain probability distributions on the space of colorful Motzkin walks. For the $S^z S^z$ correlations, we define a `height' operator $H_b=\sum_{i=1}^b S^z_i$ whose expectation value within a given Motzkin walk measures the total spin in the $Z$ direction up to the point $b$. We first derive analytical expressions for $\expval*{H_b}$ and $\expval*{H_b H_{b+L}}$. Then, $S^z_b$ can be expressed as a derivative of $H_b$, and likewise, $S^z_{b} S^z_{b+L}$ can be expressed as $\expval*{S^z_b S^z_{b+L}} = \dv{b}\dv{(b+L)} \expval*{H_b H_{b+L}}$.
For the calculation of $\expval*{H_b H_{b+L}}$, we analyze the joint probability distribution of the number of unmatched up and down steps in each segment. By making use of the asymptotic formulas for the number of Motzkin walks with a fixed unmatched height, we derive an integral formula for $\expval*{H_b H_{b+L}}$ which can be evaluated exactly in the limit $1\ll L \ll \sqrt{n}$.

For the in-plane correlations (e.g., $S^xS^x$ or $S^yS^y$), the situation is more subtle. Since the model and ground-state are $U(1)$ symmetric and \cref{prop:corr} showed that all cross-correlations vanish, it suffices to calculate $\expval*{S^x_i S^x_j}$. The $S^x$ operator is not directly related to the height of the Motzkin walks. Rather, it induces transitions between different walks. By analyzing the action of the operator $S^x_b S^x_{b+L}$  on an arbitrary Motzkin walk and identifying all conditions under which it maps a valid Motzkin walk to another valid Motzkin walk, we derive a closed-form expression for $\expval*{S^x_b S^x_{b+L}}$ as a sum of probabilities of different transition events. 

\subsection{Expectation value of \texorpdfstring{$S^z$}{Sz}}

In \cref{prop:corr}, it was shown that the expectation value of on-site operators $S^x$ and $S^y$ vanishes identically. We now show that the expectation value of $S^z$ vanishes with system size in the thermodynamic limit, as the following theorem elucidates.

\begin{theorem}[Local magnetization]
For a site at position $b$ in the bulk of the chain ($b \gg 1$), the expectation value $\expval*{S^z_b}$ for both the colorless and colorful Motzkin spin-chains asymptotically satisfies
\begin{equation}\label{eq:Sz_exp}
\expval*{S^z_b} = (s+1) \sqrt{\frac{\sigma}{\pi}} \frac{1-\frac{2b}{n}}{\sqrt{b(1-b/n)}}
\end{equation}
\end{theorem}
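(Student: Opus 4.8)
The plan is to reduce the single-site magnetization to the height statistics already computed for the bipartite cut. I introduce the height operator $H_b = \sum_{i=1}^b S^z_i$, so that $S^z_b = H_b - H_{b-1}$; since $b \gg 1$ and $b$ may be treated as continuous, this telescoping gives $\expval*{S^z_b} \approx \dv{b}\expval*{H_b}$, and it suffices to evaluate $\expval*{H_b}$ in the ground state.

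The crux of the argument is that $S^z$ is \emph{color-weighted}: $S^z\ket*{u^k} = k$, $S^z\ket*{d^k} = -k$, and $S^z\ket{0} = 0$. Fixing a Motzkin walk and restricting to its first $b$ steps, the non-negativity constraint guarantees that every down step among those steps is matched by an up step also among the first $b$ steps, and the color-matching condition forces the two partners to share a color $k$, so their contributions $+k$ and $-k$ cancel (flat steps contribute $0$). What remains is exactly the set of $m$ unmatched up steps that fix the height at the cut, giving $H_b = \sum_{j=1}^m x_j$ with $\vec{x} = (x_1,\dots,x_m) \in \S^m$ their colors. This color-sum --- rather than the bare geometric height --- is what will generate the $(s+1)$ prefactor.

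To take the expectation I would reuse the Schmidt weights of \cref{eq:bipartition}: the eigenvalue probability $\Pr(m,\vec{x}) = Z^{-1}M_{b,m}M_{n-b,m}$ is independent of $\vec{x}$, so conditioned on $m$ the colors are i.i.d.\ uniform on $\{1,\dots,s\}$ with mean $\tfrac{s+1}{2}$. The law of total expectation then gives $\expval*{H_b} = \tfrac{s+1}{2}\,\E_m[m]$, reducing the task to the first moment of the height distribution $\Pr(m) \approx \frac{m^2 e^{-m^2/(4\sigma\beta)}}{2\sqrt{\pi}(\sigma\beta)^{3/2}}$ from \cref{eq:prob-m}, where $\beta = b(1-b/n)$. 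Replacing the sum over $m$ by an integral (justified as elsewhere via Euler--Maclaurin / saddle point) and using $\int_0^\infty m^3 e^{-m^2/(4\sigma\beta)}\, dm = 8\sigma^2\beta^2$, I obtain $\E_m[m] = 4\sqrt{\sigma\beta/\pi}$ and hence $\expval*{H_b} = 2(s+1)\sqrt{\sigma\beta/\pi}$.

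Finally, differentiating with $\frac{d}{db}\sqrt{\beta} = \frac{1-2b/n}{2\sqrt{\beta}}$ produces $\expval*{S^z_b} = (s+1)\sqrt{\sigma/\pi}\,\frac{1-2b/n}{\sqrt{b(1-b/n)}}$, the claimed form; specializing to $s=1$ (so $\sigma = 1/3$ and all colors equal $1$) recovers the colorless prefactor $\tfrac{2}{\sqrt{3\pi}}$. I expect the only genuine obstacle to be conceptual: recognizing that color matching makes matched up--down pairs cancel in $S^z$ while unmatched up steps contribute their full colors, so that $H_b$ is a color-sum and the prefactor is $(s+1)$ rather than $2$. The remaining care is routine --- justifying the sum-to-integral replacement and tracking the $O(n^{-1})$ corrections arising both from the finite difference $H_b - H_{b-1}$ and from the Gaussian tails and boundary effects near the ends of the chain.
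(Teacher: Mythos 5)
Your proposal is correct and follows essentially the same route as the paper: introduce the height operator $H_b$, observe that color matching makes matched up--down pairs cancel so $H_b$ reduces to the color-sum $\abs{\vec{x}}$ of the $m$ unmatched up steps, use uniformity of the colors to get $\expval*{H_b} = \tfrac{s+1}{2}\E[m] = 2(s+1)\sqrt{\sigma\beta/\pi}$ from the marginal $\Pr(m)$, and differentiate in $b$. The intermediate values ($\E[m]=4\sqrt{\sigma\beta/\pi}$, the derivative of $\sqrt{\beta}$, and the $s=1$ specialization) all match the paper's computation.
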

\begin{proof}
We define a sequence of operators $H_b \coloneqq \sum_{i=1}^b S^z_i$ as the cumulative spin up to a point $b$. Since $S^z_b = S^z_b - S^z_{b-1}$, the strategy will be to evaluate $\expval*{H_b}$ as a function of $b$, and differentiate with respect to $b$ to find $\expval*{S^z_b}$. A simplifying observation is that $H_b$ is diagonal in $S^z$ basis, and the Motzkin state $\gs$ is an equal superposition of $S^z$ basis product states all with the same phase. Therefore, for the purposes of evaluating $\expval*{H_b}$, we can treat $\gs$ as a uniform \emph{classical} probability distribution over all Motzkin walks on $n$ steps:
\begin{equation}
    \expval*{H_b}{\psi_{\text{GS}}} = \E_{w \sim \text{Unif}(\text{Motzkin walks})}[H_b(w)].
\end{equation}
To evaluate the above expectation, we condition on $m$, the number of unmatched up steps at the point $b$, and $\vec{x} \in \S^m$, which are the colors of the $m$ unmatched up steps, and then iterate the expectation with respect to $m$ and $x$. Each of the colors in $\S$ is associated with a spin value $1,\ldots,s$, and we will write $\abs{\vec{x}}$ to denote the sum of the spin values associated with the colors $\vec{x}$. Now, observe that for a walk $w$ associated with colors $\vec{x}$, $H_b(w) = \abs{\vec{x}}$, since all the other spins before the point $b$ are matched (hence their contributions to the sum cancel). Finally, note that $\E_{\vec{x}}[\abs{\vec{x}} \mid m] = \sum_{i=1}^m \E[x_i \mid m] = \frac{m (s+1)}{2}$, since by symmetry (see \cref{sec:sym}), $x_i$ is uniformly distributed over all $\S$ possibilities, and the average spin $x_i$ is $\frac{s+1}{2}$.
\begin{equation}
    \E_w[H_b(w)] = \E_{m,\vec{x}}[\E_w[H_b(w) \mid m, \vec{x}]] = \E_m[\E_{\vec{x}}[\abs{\vec{x}} \mid m]] = \frac{s+1}{2}\E[m].
\end{equation}
Recalling $\Pr(m)$ from \cref{eq:prob-m},
\begin{equation}\label{eq:z-s2}
    \begin{gathered}
        \E[H_b] \approx \frac{s+1}{2} \int_0^\infty m \Pr(m) \dd{m} = 2(s+1) \sqrt{\sigma \beta/\pi}, \qq{and} \\
        \expval*{S^z_b} = \dv{b} \E[H_b] = \frac{2(s+1) \sqrt{\sigma/\pi}}{\sqrt{\beta}} \dv{\beta}{b} = (s+1) \sqrt{\sigma/\pi} \frac{1-\frac{2b}{n}}{\sqrt{b(1-b/n)}}
    \end{gathered}
\end{equation}
\end{proof}

\subsection{\texorpdfstring{$\expval*{S^z_i S^z_j}$}{ZZ} correlation function}
We take a similar approach for $\expval*{S^z_i S^z_j}$. Assuming $j>i$, observe that $(H_i-H_{i-1}) (H_j -H_{j-1}) = S^z_i S^z_j$, so $\expval*{S^z_i S^z_j} = \dv{i} \dv{j} \expval*{H_i H_j}$. Therefore it suffices to evaluate $\expval*{H_i H_j}$ as a function of $i$ and $j$. We will work in the bulk of the chain, at points $i=b \gg L^2$ and $j=b+L$. We will label the segment from $1,\ldots,b$ as $A$, the segment $b+1,\ldots,b+L$ as $B$, and the remainder of the chain as $C$ (\cref{fig:blocks}). We first evaluate the conditional expectation $\E_w[H_b(w) H_{b+L}(w) \mid m,p,q,\vec{x},\vec{y}]$, where $m$ represents the number of up unmatched steps on $A$, $p$ and $q$ represent the number of unmatched down and up steps in $B$, $\vec{y} \in \S^m$ represents the coloring of the $m$ unmatched steps, and $\vec{x} \in \S^q$ represents the color of the unmatched up steps in $B$. Note that the color of the $q$ unmatched down steps in $B$ is uniquely determined by the last $q$ colors of $\vec{y}$. 

As in the computation of $\expval*{S^z_b}$, $H_b$ = $\abs{\vec{y}}$. However, since the contribution to the total spin from the $p$ unmatched down steps in $B$ cancel the contribution from the last $p$ unmatched up steps in $A$, we have $H_{b+L}=\abs{\vec{y}_{1:m-p}} + \abs{\vec{x}}$, where the notation $\vec{y}_{1:m-p} \in \S^{m-p}$ indicates the vector that contains entries $1$ through $m-p$ of $\vec{y}$. Therefore, by splitting $\abs{\vec{y}} = \abs{\vec{y}_{1:m-p}} + \abs{\vec{y}_{m-p+1:m}}$, we must evaluate
\begin{subequations}
\begin{gather}
    \E[H_b H_{b+L} \mid m, p,q] = \E_{\vec{x},\vec{y}}\qty[\abs{\vec{y}_{1:m-p}}^2 + \abs{\vec{y}_{1:m-p}} \cdot \abs{\vec{y}_{m-p+1:m}} + \abs{\vec{y}} \cdot \abs{\vec{x}} \mid m,p,q]
\end{gather}
Note that $\vec{y}_{1:m-p}$ and $\vec{y}_{m-p+1:m}$ are independent, so that the expectation factorizes. Recall that $\E_{\vec{y}}[\abs{\vec{y}_{1:m-p}} \mid m,p,q] = \frac{s+1}{2}(m-p)$, and $\E_{\vec{y}}[\abs{\vec{y}_{1:m-p}} \mid m,p,q] = \frac{s+1}{2} p$. Similarly, for the third summand, $\vec{y}$ and $\vec{x}$ are independent, so these expectations can be evaluated separately (the expectations are $\frac{s+1}{2} m$ and $\frac{s+1}{2} q$, respectively). To evaluate the expectation of the first summand, observe that $\vec{y}_{1:m-p}$ is uniformly distributed over all possibilities in $\S^{m-p}$. Let $\text{M}(m-p, 1/s)$ be the multinomial distribution over $m-p$ draws where each outcome is chosen with the same probability $1/s$. This models the number of occurrences of each color when $\vec{y}$ is uniformly drawn from $\S^{m-p}$.
\begin{gather}
    \E[H_{b+L} H_b \mid m, p,q] = \E_{\vec{n} \sim \text{M}(m-p; 1/s)}\qty[\qty(\sum_{k=1}^s k \cdot n_k)^2] + \qty(\frac{s+1}{2})^2  \qty((m-p)p + mq) \\
    = (m^2-mp+mq) \qty(\frac{s+1}{2})^2 + \frac{(m-p)(s^2-1)}{12} \label{eq:hh-final}
\end{gather}
\end{subequations}
Note that neither the expectation of $m^2$ nor $m$ depend on the position of the second point $b+L$ (they depend only on $b$). Therefore, upon differentiating with respect to the second position $b+L$, these terms will drop out, and we do not need to evaluate the expectation values of these terms. The following Lemma describes the evaluation of expectation values of the other relevant terms in \cref{eq:hh-final} with respect to the probability distribution $\Pr(m,p,q)$. 

\begin{lemma}\label{lemma:szsz}
Let the segment of $L$ consecutive spins be centered on the chain such that $b+L+b=n$, and $L \ll \sqrt{b}$.The following relations hold for the distribution $\Pr(m,p,q)$ of Motzkin walks with height $m$ and $p$ ($q$) unbalanced down (up) steps.
\begin{subequations}\label{eq:expects}
    \begin{gather}
        \E[m(q-p)] = \frac{L \sigma(3L-n)}{n} \label{eq:mdelta} \\
        \E[-p] = -2\sqrt{\frac{\sigma}{\pi}} \sqrt{L} + O(L^2/n). \label{eq:-p}
    \end{gather}
\end{subequations}
\end{lemma}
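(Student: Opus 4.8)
The plan is to evaluate both expectations directly from the explicit Schmidt distribution $\Pr(m,p,q)$ attached to the tripartite decomposition \eqref{eq:tripartite}, treating $\gs$ as a uniform classical distribution over Motzkin walks exactly as in the computation of $\expval*{S^z_b}$. First I would write $\Pr(m,p,q) \propto M_{b,m}\,M_{L,p+q}\,M_{b,m+q-p}\,s^{m+q}$, where the factor $s^{m+q}$ counts the colorings of the cross-segment matched pairs ($p+q$ between $B$ and $A,C$, and $m-p$ between $A$ and $C$). Substituting the asymptotic form $M_{N,k}\sim k\,s^{-k/2}e^{-k^2/(4\sigma N)}$, a short bookkeeping check shows that the powers of $s$ cancel identically, leaving the clean Gaussian-times-polynomial form
\begin{equation*}
\Pr(m,p,q) \propto m\,(p+q)\,(m+q-p)\,\exp\!\qty(-\frac{m^2+(m+q-p)^2}{4\sigma b}-\frac{(p+q)^2}{4\sigma L}).
\end{equation*}
I would then change variables to $\eta \coloneqq p+q$ and $\delta \coloneqq q-p$ (so $q-p=\delta$ and $p=(\eta-\delta)/2$), and approximate the sums by integrals using the Euler-Maclaurin/saddle-point justification already invoked above.

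The key simplification is to complete the square in $m$ via the shift $u \coloneqq m+\delta/2$, which decouples $m$ from $\delta$ in the exponent and turns the prefactor into $u^2-\delta^2/4$; the exponent becomes $-u^2/(2\sigma b)-\delta^2/(8\sigma b)-\eta^2/(4\sigma L)$. The relevant scales are $u \sim \sqrt{\sigma b}\sim\sqrt{n}$ and $\eta,\delta \sim \sqrt{\sigma L}$, so the hypothesis $L \ll \sqrt{b}$ guarantees $\eta,\delta \ll u$ and that the constraints $m \geq p$ and $m+q-p\geq 0$ are violated only in a negligibly weighted region. At leading order the weight is even in $\delta$, so $\E[u\delta]=0$ and hence $\E[m(q-p)]=\E[m\delta]=\E[u\delta]-\tfrac12\E[\delta^2]=-\tfrac12\E[\delta^2]$, while $\E[-p]=-\tfrac12\E[\eta]$ since $\E[\delta]=0$. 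Conditioned on $\eta$, $\delta$ is uniform on $[-\eta,\eta]$ giving $\E[\delta^2\mid\eta]=\eta^2/3$, and the marginal $\Pr(\eta)\propto\eta^2 e^{-\eta^2/(4\sigma L)}$ reduces everything to standard half-Gaussian moments $\int_0^\infty\eta^k e^{-\eta^2/(4\sigma L)}\dd{\eta}=\tfrac12(4\sigma L)^{(k+1)/2}\Gamma(\tfrac{k+1}{2})$. These give $\E[\eta^2]=6\sigma L$ and $\E[\eta]=4\sqrt{\sigma L/\pi}$, hence the leading terms $\E[m(q-p)]=-\sigma L$ and $\E[-p]=-2\sqrt{\sigma/\pi}\sqrt{L}$.

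To obtain the subleading corrections I would retain the next order in the small ratio $\delta^2/b \sim L/b$. Integrating out $u$, the polynomial prefactor $u^2-\delta^2/4$ becomes $\propto 1-\delta^2/(4\sigma b)$ (from the ratio of the relevant half-Gaussian $u$-moments), which combines with $e^{-\delta^2/(8\sigma b)}\approx 1-\delta^2/(8\sigma b)$ into an effective $\delta$-weight $\propto 1-3\delta^2/(8\sigma b)$ on top of the uniform measure. Re-evaluating the ratios of half-Gaussian $\eta$-moments with this correction, and using $b=(n-L)/2\approx n/2$, yields $\E[\delta^2]=2\sigma L-3\sigma L^2/b$ and hence the claimed $\E[m(q-p)]=-\sigma L+3\sigma L^2/n=\sigma L(3L-n)/n$; the same procedure bounds the correction to $\E[-p]$ by $O(L^2/n)$. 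The main obstacle is precisely this bookkeeping of subleading terms: the leading moments are immediate, but extracting the exact coefficient of the $L^2/n$ term in \eqref{eq:mdelta} requires consistently collecting every contribution of relative order $L/b$ — from the polynomial prefactor, from the residual Gaussian $e^{-\delta^2/(8\sigma b)}$, and from the relation $b=(n-L)/2$ — while verifying that the neglected boundary constraints and the discreteness corrections to the integral approximation are genuinely of higher order.
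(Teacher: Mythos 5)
Your proposal is correct and follows essentially the same route as the paper: the same distribution $\Pr(m,p,q)\propto s^{m+q}M_{b,m}M_{b,m+q-p}M_{L,p+q}$, the same change of variables $\eta=p+q$, $\Delta=q-p$, and the same Gaussian-integral evaluation under the $L\ll\sqrt{b}$ assumption. The only (immaterial) difference is that the paper evaluates the resulting Gaussian integrals exactly in $b$ and $L$ and then substitutes $n=2b+L$, whereas you extract the leading and $O(L^2/n)$ terms perturbatively after completing the square in $m$; both yield $\E[m(q-p)]=-\sigma L+3\sigma L^2/n$ and $\E[-p]=-2\sqrt{\sigma L/\pi}+O(L^2/n)$.
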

\begin{proof}
The distribution over $m$, $p$, and $q$ satisfies
\begin{equation}
    \Pr(m,p,q) \propto s^{m+q} M_{b,m} M_{b,m+q-p} M_{L,p+q}.
\end{equation}
By using the asymptotic forms of the Motzkin numbers, we can manipulate the expectations in \cref{eq:expects} into Gaussian integrals. See \cref{app:zz} for details. 
\end{proof}

From the above lemma, the asymptotic correlation function $\expval*{S^z_i S^z_j}$ can be calculated in the limit $n \rightarrow \infty$ for $0 \ll L^2 \ll n$ as described by the following theorem. 

\begin{theorem}[Colorful $S^z S^z$ correlations]
    The correlation function $\expval*{S^z_{(n-L)/2} S^z_{(n+L)/2}}$ in the thermodynamic limit ($n \rightarrow \infty$), has the following asymptotic expression with respect to $L$
\begin{equation}\label{eq:zz}
   \expval*{S^z_{\frac{n-L}{2}} S^z_{\frac{n+L}{2}}} = -\frac{1}{24}\sqrt{\frac{\sigma}{\pi}}\;(s^2-1)\; L^{-3/2} + O(n^{-1})
\end{equation}
\end{theorem}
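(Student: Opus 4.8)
The plan is to assemble the theorem directly from the conditional expectation \eqref{eq:hh-final} together with the two moment identities of \cref{lemma:szsz}, via the relation $\expval*{S^z_i S^z_j} = \dv{i}\dv{j}\expval*{H_i H_j}$. First I would average \eqref{eq:hh-final} over $\Pr(m,p,q)$ to get
\begin{equation*}
    \expval*{H_i H_j} = \qty(\frac{s+1}{2})^2 \qty(\E[m^2] + \E[m(q-p)]) + \frac{s^2-1}{12}\qty(\E[m] - \E[p]),
\end{equation*}
where $m$ is the height at the first point $i=b$ and $p,q$ count the unmatched down/up steps in the block of length $L=j-i$.

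The key simplification, flagged in the text preceding \cref{lemma:szsz}, is that $\E[m^2]$ and $\E[m]$ depend only on the first position $i$, not on $j$; hence $\dv{j}$ annihilates them and only the block-dependent moments survive:
\begin{equation*}
    \expval*{S^z_i S^z_j} = \qty(\frac{s+1}{2})^2 \dv{i}\dv{j}\,\E[m(q-p)] + \frac{s^2-1}{12}\,\dv{i}\dv{j}\,\E[-p].
\end{equation*}
Since, to the relevant order, both surviving moments are functions of the separation $L=j-i$ alone, I would convert each mixed derivative into a single second derivative using the elementary identity $\dv{i}\dv{j}\,g(j-i) = -g''(j-i)$.

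Substituting the lemma then finishes the calculation. Writing $g(L) \coloneqq \E[-p] = -2\sqrt{\sigma/\pi}\,L^{1/2}$, we have $g''(L) = \tfrac{1}{2}\sqrt{\sigma/\pi}\,L^{-3/2}$, so the first term contributes $\frac{s^2-1}{12}\cdot(-g''(L)) = -\tfrac{1}{24}\sqrt{\sigma/\pi}\,(s^2-1)\,L^{-3/2}$, exactly the stated leading behavior. For the other piece, $\E[m(q-p)] = \sigma L(3L-n)/n$ is quadratic in $L$, so its second derivative is the constant $6\sigma/n$ and it contributes $-\tfrac{3}{2}\sigma(s+1)^2/n = O(n^{-1})$. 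Adding the two gives \eqref{eq:zz}.

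The step demanding the most care is the differentiation, specifically the justification for treating $\E[-p]$ and $\E[m(q-p)]$ as functions of $L=j-i$ alone. This is legitimate because the $\sqrt{L}$ leading behavior of $\E[-p]$ is an intensive property of a bulk block: any dependence on the block's center enters only through boundary corrections suppressed by the distance $b\sim n/2$ to the nearest edge. Those corrections, together with the $O(L^2/n)$ tail already displayed in \eqref{eq:-p} and the whole $\E[m(q-p)]$ term, are uniformly $O(n^{-1})$ and are absorbed into the error. The genuinely heavy computation — reducing the moment sums to Gaussian integrals through the asymptotic Motzkin counts — lives in \cref{lemma:szsz} and \cref{app:zz}, so once the lemma is granted the theorem is essentially a bookkeeping assembly.
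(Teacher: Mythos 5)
Your proposal is correct and follows essentially the same route as the paper: substitute the moments from \cref{lemma:szsz} into the averaged form of \cref{eq:hh-final}, discard the $j$-independent terms $\E[m^2]$ and $\E[m]$, and convert $\dv{i}\dv{j}$ into $-\dv[2]{L}$ acting on the surviving $L$-dependent pieces. Your bookkeeping (the $-\tfrac{1}{24}\sqrt{\sigma/\pi}(s^2-1)L^{-3/2}$ leading term from $\E[-p]$ and the $O(n^{-1})$ contribution from $\E[m(q-p)]$) matches the paper's calculation exactly.
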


Note that this result applies for segments of length $L$ that are not necessarily dead-centered in the middle of the chain, but are sufficiently far away from the boundaries. It can be shown that the off-centering of the segment only gives sub-leading vanishing corrections (in $n$) to the expression of the correlation function in the above theorem.

\begin{proof}
Taking expectations of the terms in \cref{eq:expects} and substituting the result of \cref{lemma:szsz} into \cref{eq:hh-final}, it follows that
\begin{equation}
    f(L) \coloneqq \expval*{H_{i} H_{j}} = -\frac{s^2-1}{6} \sqrt{\sigma/\pi} \sqrt{L} - \qty(\frac{s+1}{2})^2 L \sigma + O(L^2/n); \quad L \coloneqq j-i.
\end{equation}
Then,
\begin{equation}
    \expval*{S_{i}^z S_{j}^z} = \dv{i} \dv{j} \expval*{H_{i} H_{j}} = -\dv[2]{f}{L} = -\frac{\sqrt{\sigma/\pi} (s^2-1)}{24} L^{-3/2} + O(n^{-1}).
\end{equation} 
\end{proof}

We provide numerical evidence of the validity of this formula in \cref{fig:zz}, where we calculate the exact correlation function based on \cref{eq:hh-final}, showing convergence of the asymptotic prediction in \cref{eq:zz}. Notably, this contradicts a previous conjecture that the scaling $\expval*{S^z_i S^z_j} = O(n^{-1})$ for the colorless Motzkin model would generalized to the colorful case~\cite{movassagh2017entanglement}. However, note that \cref{eq:zz} is still entirely consistent with the known $s=1$ scaling: upon substituting $s=1$, the $L^{-3/2}$ scaling vanishes, and we are left with $\expval*{S^z_i S^z_j} = O(n^{-1})$.

\begin{figure}
    \centering
    \includegraphics[width=0.6\textwidth]{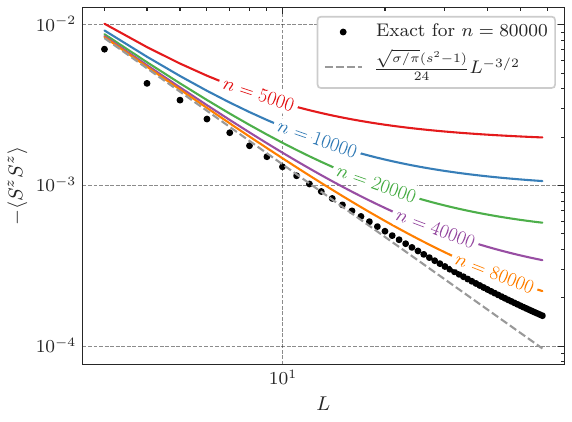}
    \caption{The black dots show $\expval*{S^z_{b} S^z_{b+L}}$ for $n=80000$ based on an exact numerical evaluation of \cref{eq:pmpq}. The colored lines show the approximate theoretical prediction for $\expval*{S^z_{b} S^z_{b+L}}$ based on \cref{eq:pmetadelta}. The dashed grey line shows the asymptotic prediction $\expval*{S^z_b S^z_{b+L}} \propto L^{-3/2}$ \eqref{eq:zz}. All results are shown for $s=2$.}
    \label{fig:zz}
\end{figure}

\subsection{\texorpdfstring{$\expval*{S^x_i S^x_j}$}{XX} correlation function}\label{sec:xx}
The calculation of the $\expval*{S^x_i S^x_j}$ correlation function provides further insight into the quantum correlations present in the Motzkin spin-chain ground state. Unlike the $S^z$ correlations, which can be directly related to the height of the Motzkin walks, the $S^x$ correlations require a more subtle analysis. We begin by examining the action of the $S^x$ operator on the individual spin states and then consider its effect on the entire ground state. The approach will be to express $S^x_i S^x_j \gs$ as a sum of $S^z$ basis product states and identifying which of these correspond to valid Motzkin walks. We consider the colorless and colorful models separately in the following calculations.

\paragraph*{Colorless model ($s = 1$):} 
Note that the $S^x$ operator acts on $s=1$ basis states as
\begin{equation}\label{eq:sx-s1}
    \begin{gathered}
        S^x \ket*{-1} = \ket*{0}/\sqrt{2} \\
        S^x \ket*{0} = (\ket*{-1}+\ket*{1})/\sqrt{2} \\
        S^x \ket*{1} = \ket*{0}/\sqrt{2}
    \end{gathered}
\end{equation}
The state $S^x_i S^x_j \gs$ can be expressed as a sum of walks.
Only some of these walks will be valid Motzkin walks; it suffices to count the number of valid walks, as acting with $\bra{\psi_{\text{GS}}}$ selects only those states which are valid Motzkin walks. 

Let us focus on a particular Motzkin walk $w$, with individual steps labelled $\{w_i\}$ for $i=1\cdots n$. There are two necessary and sufficient conditions for which applying $S^x_i S^x_j$ to this walk will result in an \textit{invalid} Motzkin walk.
\begin{enumerate}
    \item $w_i$ and $w_j$ are both $-1$, or they are both $+1$. In this case, $S^x_i S^x_j$ increases or decreases the net height of the chain by $2$, respectively, rendering the resulting state an invalid Motzkin walk.
    \item $w_i$ is $0$ or $1$, and gets mapped to a lower spin (i.e., $-1$ or $0$), \emph{and} there is some $i \leq k < j$ such that $\sum_{\ell=1}^k w_\ell = 0$. That is, the spin lowering at $i$ caused the walk to become negative in between $i$ and $j$. This will correspond to a {\it rare event} which we denote $E$.
\end{enumerate}

This means that the overall expectation will be
\begin{equation} \label{eq:SxSx_colorless}
2\expval*{S^x_i S^x_j} = \sum_{w_b \neq w_{b+L}} \Pr(w_b,w_{b+L}) + 2\Pr(0,0) - \Pr(w_b\in \qty{0,1} \land E).
\end{equation}
where the notation $\land$ denotes a logical and.
There is a prefactor of $2$ on $\Pr(0,0)$ because $S^x S^x$ maps $(w_i, w_j)=(0,0)$ to two different pairs $(-1,1)$ and $(1,-1)$ that both result in valid Motzkin walks. We proceed by calculating $\Pr(w_b,w_{b+L})$, then showing that the contribution of the last summand $\Pr(w_b\in \qty{0,1} \land E)$ is asymptotically negligible.

\begin{lemma} \label{lemma:SxSxcolorless_1}
    For $w_b, w_{b+L} \in \{-1,0,1\}$ and $w_b \neq w_{b+L}$, to leading order in $n^{-1}$, 
    \begin{equation}\label{eq:p-wb}
        \Pr(w_b,w_{b+L}) \propto 1 - \frac{3(w_b + w_{b+L})^2}{4\sigma n}.
    \end{equation}
\end{lemma}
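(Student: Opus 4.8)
The plan is to compute $\Pr(w_b, w_{b+L})$ directly as a ratio of (weighted) counts of Motzkin walks, using the same ``condition on the unmatched height'' strategy that worked for $\expval*{H_b}$ and $\expval*{H_b H_{b+L}}$. Since $\gs$ is a uniform superposition over Motzkin walks, $\Pr(w_b, w_{b+L})$ is simply the fraction of Motzkin walks whose step at position $b$ is $w_b$ and whose step at position $b+L$ is $w_{b+L}$. For $s=1$ there are no colors to track, so each such count is expressible purely through the Motzkin-with-$m$-unmatched-steps numbers $M_{k,m}$. First I would fix the local step types $(w_b, w_{b+L})$ and partition the chain into the three segments $A$ (length $b-1$, or $b$ depending on indexing conventions), the middle region between the two marked steps, and the tail. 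The number of walks compatible with a given $(w_b, w_{b+L})$ factorizes over these segments, with the heights at the segment boundaries constrained by the local steps; summing over the intermediate unmatched heights $m$ gives a convolution of $M$'s.

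Next I would insert the asymptotic form $M_{k,m} \sim m\, e^{-m^2/(4\sigma k)}$ (the $s=1$ specialization of \cref{eq:asymptotic_Mnm}, dropping $m$-independent prefactors which cancel in the normalized probability) and convert the resulting sums over $m$ into Gaussian integrals, as justified earlier via the saddle-point/Euler–Maclaurin argument. The key observation driving the answer is that inserting a step of type $w_b$ at position $b$ shifts the effective height profile: a step contributing $+1$ to the height versus $-1$ versus $0$ changes the boundary height entering the $M$ factors by the corresponding amount, which at the level of the Gaussian weight $e^{-m^2/(4\sigma k)}$ produces a correction depending on the net height displacement $w_b + w_{b+L}$. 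I would expand the ratio of the two Gaussian integrals (the one for fixed $(w_b, w_{b+L})$ over the normalization) to leading order in $1/n$, keeping terms up to $O(n^{-1})$, and expect the linear-in-displacement term to cancel by the up/down reflection symmetry ($RF$) from \cref{thm::sym}, leaving a quadratic correction $\propto (w_b + w_{b+L})^2/n$.

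The main technical obstacle is bookkeeping the height-shift contributions correctly and confirming that the numerical coefficient comes out to exactly $\tfrac{3}{4\sigma}$. Concretely, I must track how placing steps $w_b$ and $w_{b+L}$ alters the arguments of the three $M$-factors and the heights over which one sums, so that the combined Gaussian exponent acquires a term whose coefficient, after the $1/n$ expansion, matches \cref{eq:p-wb}. A subtlety is that $w_b$ and $w_{b+L}$ enter asymmetrically (one sits at the left end of the middle block, one at the right), so I would need to verify carefully that only the symmetric combination $(w_b + w_{b+L})^2$ survives; I expect this to follow from expanding $e^{-(m + \delta)^2/(4\sigma k)}$ for small height shift $\delta$ and using that in the bulk limit $L \ll \sqrt{b}$ the two segments flanking the middle are essentially symmetric. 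The proportionality (rather than equality) in the statement is convenient here, since all $w$-independent normalization factors may be discarded, reducing the problem to isolating the $(w_b+w_{b+L})^2$-dependence of a single Gaussian integral.
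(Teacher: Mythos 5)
Your proposal follows essentially the same route as the paper's proof in \cref{app:s1-corr}: a three-segment decomposition whose counts factorize into products of $M_{k,m}$'s, followed by Gaussian-integral asymptotics and an expansion in $1/n$. One small simplification you will find when carrying it out: the dependence on only the symmetric combination $w_b+w_{b+L}$ does not require invoking the $RF$ symmetry, since both marked steps enter the count solely through the height $m+w_b+(q-p)+w_{b+L}$ from which the tail segment must descend, so after summing over the middle-segment variables only their sum survives automatically.
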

\begin{proof}[Proof sketch]
We use the exact expression (neglecting normalization of the distribution)
\begin{equation}
    \Pr(w_b, w_{b+L}) \propto \sum_{m=0}^{b} \sum_{\substack{p+q \leq L-2 \\ p \leq m+w_b}} M_{b,m} M_{L-2,p+q} M_{n-b-L,m+w_b+q-p+w_{b+L}} \label{eq:pb-exact},
\end{equation}
which counts the number of Motzkin walks between $b$ and $b+L$, conditioned on the fact that the step at $b$ takes value $w_b$ and the step at $b+L$ is $w_{b+L}$. The sum over $m$ runs over the possible heights the Motzkin walk can reach within $b$ steps, and the sum over $p,q$ runs over the number of unmatched up and down steps between positions $b$ and $b+L$. By simplifying \cref{eq:pb-exact} using the asymptotic form of the Motzkin numbers, we arrive at the desired result \eqref{eq:p-wb}. See \cref{app:s1-corr} for the details of this analysis. In \cref{fig:prwb}, we compare the asymptotic prediction \eqref{eq:p-wb} with the exact form \eqref{eq:pb-exact}. 
\begin{figure}
    \centering
    \includegraphics[width=0.65\textwidth]{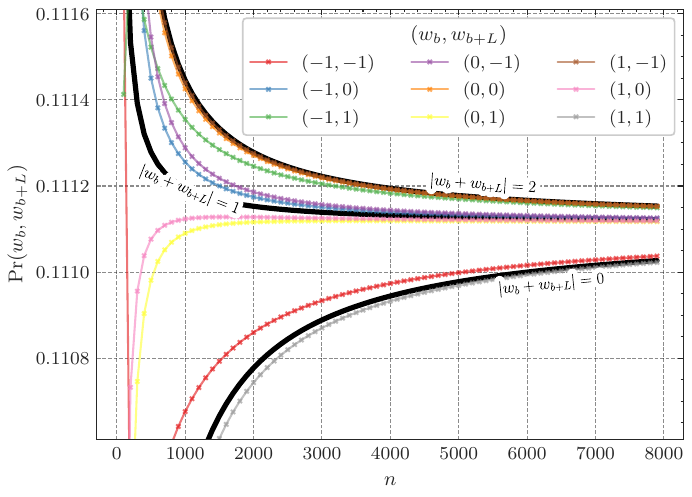}
    \caption{Asymptotic prediction $\Pr(w_b,w_{b+L}) \propto 1-\frac{3(w_b+w_{b+L})^2}{4 \sigma n}$ in solid black lines \eqref{eq:pb-exact} for the three independent cases where $\abs{w_b + w_{b+L}} \in \{0,1,2\}$ respectively (since the asymptotic expression for $\Pr(w_b,w_{b+L})$ only depends on $\abs{w_b+w_{b+L}}$). Exact calculation of all $9$ entries in $\Pr(w_b, w_{b+L})$ are shown in colored dots.}\label{fig:prwb}
\end{figure}
\end{proof}

\begin{lemma} \label{lemma:SxSxcolorless_2}
For any $L \ll \sqrt{n}$,
\begin{equation}
    \Pr(w_b \in \qty{0,1} \land E) = O(n^{-3/2}).
\end{equation}
\end{lemma}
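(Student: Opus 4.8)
The plan is to show that the event $E$ is rare because it forces the Motzkin walk to have an anomalously small height near the bulk position $b$, whereas its typical height there is of order $\sqrt{n}$. Throughout, $\Pr(\cdot)$ denotes the counting fraction over the uniform distribution on length-$n$ Motzkin walks, as established for the diagonal computations in the preceding sections.

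First I would observe that the event $E$ requires the walk to touch the $x$-axis somewhere in the window, i.e.\ $\sum_{\ell=1}^{k} w_\ell = 0$ for some $b \le k < b+L$. Since each step changes the height by at most one unit, a walk sitting at height $m$ at position $b$ needs at least $m$ down-steps to descend to height $0$, and only the $k-b < L$ steps in the window are available. Hence $E$ can occur only if the height $m$ at position $b$ satisfies $m \le L$. Dropping the constraint $w_b \in \qty{0,1}$ (which only lowers the probability) and replacing the touching condition by this weaker necessary condition yields the containment
\begin{equation*}
    \Pr(w_b \in \qty{0,1} \land E) \le \Pr(m \le L),
\end{equation*}
where $m$ is the (random) height of the walk at position $b$.

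Next I would estimate $\Pr(m \le L)$ using the asymptotic height distribution \cref{eq:prob-m}, namely $\Pr(m) \approx \frac{m^2 e^{-m^2/(4\sigma\beta)}}{2\sqrt{\pi}(\sigma\beta)^{3/2}}$ with $\beta = b(1-b/n)$. In the bulk $\beta = \Theta(n)$, so for $L \ll \sqrt{n}$ the Gaussian factor $e^{-m^2/(4\sigma\beta)}$ equals $1+o(1)$ uniformly over $0 \le m \le L$. Approximating the sum by an integral (justified, as elsewhere, by Euler–Maclaurin),
\begin{equation*}
    \Pr(m \le L) \approx \int_0^L \frac{m^2}{2\sqrt{\pi}(\sigma\beta)^{3/2}}\dd{m} = \frac{L^3}{6\sqrt{\pi}(\sigma\beta)^{3/2}} = O(n^{-3/2}),
\end{equation*}
since $L$ is held fixed while $n \to \infty$. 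Combined with the containment above, this gives the claimed $\Pr(w_b \in \qty{0,1} \land E) = O(n^{-3/2})$.

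The main obstacle is making rigorous the claim that the counting probability of walks touching the axis in the window is genuinely controlled by the marginal height distribution \cref{eq:prob-m}, rather than merely heuristically. To make this airtight I would write the exact combinatorial count for such walks---an expression analogous to \cref{eq:pb-exact} but carrying the extra constraint that the walk returns to height $0$ inside the middle segment---and bound the resulting triple sum by extending its range and applying the saddle-point/integral approximation. A secondary (purely cosmetic) refinement is that demanding the walk \emph{actually reach} $0$ within $L$ steps, rather than merely $m \le L$, confines the dominant contribution to $m \lesssim \sqrt{L}$ and sharpens the estimate to $O(L^{3/2} n^{-3/2})$; however, the cruder bound above already suffices for the stated $O(n^{-3/2})$ scaling at fixed $L$.
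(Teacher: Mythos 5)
Your argument is correct in the regime where the lemma is actually used, but it takes a genuinely different route from the paper's. The paper also begins by bounding $\Pr(w_b\in\{0,1\}\land E)\le\Pr(E)$, but then counts the event $E$ directly: a walk that touches height zero inside the window decomposes into a length-$b$ prefix ending at height $p$, a middle segment with exactly $p$ unmatched down steps and $q$ unmatched up steps, and a suffix with $q$ unmatched down steps, so that $\Pr(E)=\frac{1}{M_n}\sum_{p+q\le L}M_{b,p}M_{L,p+q}M_{n-b-L,q}$; a Gaussian-integral estimate of this triple sum gives $O(L\,n^{-3/2})$. You instead relax $E$ to the necessary condition that the height $m$ at position $b$ satisfy $m\le L$ and integrate the already-computed marginal \cref{eq:prob-m}, obtaining $O(L^3 n^{-3/2})$. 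Your route is more elementary and economical --- it recycles the height marginal rather than introducing a new combinatorial count --- but it pays in $L$-dependence: $L^3 n^{-3/2}$ versus the paper's $L\,n^{-3/2}$, with your proposed refinement $L^{3/2}n^{-3/2}$ sitting in between. Note that for the lemma as literally stated (``$O(n^{-3/2})$ for any $L\ll\sqrt{n}$'') neither bound closes the gap when $L$ is allowed to grow; both you and the paper implicitly work at $L=O(1)$, which is the regime that feeds into \cref{eq:xx-s1}, so the only substantive caveat is that your bound degrades faster if one ever wants $L$ growing with $n$. The minor technical point you flag --- that the containment should really be justified against the exact count rather than the asymptotic marginal --- is handled in the paper by exactly the kind of constrained sum you describe, so your instinct for how to make it airtight matches what is done in \cref{app:s1-corr}.
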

\begin{proof}
We upper bound $\Pr(w_b\in \qty{0,1} \land E)$ with $\Pr(E)$. We will assume $[b,b+L]$ is centered around the middle of the chain. Then,
\begin{equation}
    \begin{aligned}
        \Pr(E) &= \frac{1}{M_n} \sum_{p + q \leq L} M_{b,p} M_{L,p+q} M_{n-b-L,q} \\
        &\lesssim \frac{1}{M_n} \int_{-\infty}^\infty \int_{\abs{\Delta}}^\infty M_{b,(\eta-\Delta)/2} M_{L,\eta} M_{b,(\eta+\Delta)/2} \dd{\eta} \dd{\Delta} \\
        &= \frac{4L\sqrt{\sigma/\pi}}{\sqrt{b}\sqrt{n(b+L)}} = O\qty(\frac{L}{n^{3/2}})
    \end{aligned}
\end{equation}
\end{proof}
From \cref{lemma:SxSxcolorless_1,lemma:SxSxcolorless_2}, the asymptotic $\expval*{S^xS^x}$ correlation function follows immediately.
\begin{theorem}[Colorless $S^x S^x$ correlations]
The correlation function $\expval*{S^x_i S^x_j}$ of the colorless $(s=1)$ Motzkin spin chain in the thermodynamic limit ($n \rightarrow \infty$), for $i$, $j$ such that the $L\coloneqq j-i$ spins are centered around the middle of the chain is asymptotically (in $n$) constant as
\begin{equation}\label{eq:xx-s1}
    \expval*{S^x_i S^x_j} = \frac{4}{9} + \frac{7}{18n} + O(n^{-3/2})
\end{equation}
\end{theorem}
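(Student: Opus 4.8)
The plan is to combine \cref{lemma:SxSxcolorless_1,lemma:SxSxcolorless_2} inside the decomposition \eqref{eq:SxSx_colorless}, collecting contributions order by order in $n^{-1}$. I would begin with the leading order: as $n \to \infty$ the $O(n^{-1})$ corrections in \eqref{eq:p-wb} vanish, so the joint distribution of the two bulk spins flattens to the uniform distribution over the nine configurations $(w_b,w_{b+L}) \in \{-1,0,1\}^2$, i.e. $\Pr(w_b,w_{b+L}) \to \tfrac19$. The six off-diagonal configurations then contribute $\sum_{w_b \neq w_{b+L}} \Pr \to \tfrac69$, the coincident term contributes $2\Pr(0,0) \to \tfrac29$, and the rare event is negligible, so $2\expval*{S^x_i S^x_j} \to \tfrac89$ and $\expval*{S^x_i S^x_j} \to \tfrac49$. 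This fixes the constant value that signals $U(1)$ symmetry breaking and is insensitive to all subleading data.

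For the $O(n^{-1})$ correction I would insert the refined weights $1 - \tfrac{3(w_b+w_{b+L})^2}{4\sigma n}$ from \eqref{eq:p-wb} and normalize them over the nine configurations. Using $\sum (w_b+w_{b+L})^2 = 12$ over all configurations and $\sum_{\text{off-diag}} (w_b+w_{b+L})^2 = 4$, the normalization is $Z = 9 - 9/(\sigma n)$; dividing the off-diagonal weight sum and the coincident weight $2\Pr(0,0)$ by $Z$, substituting $\sigma = \tfrac13$ for the colorless chain, and expanding the ratio in $n^{-1}$ then produces a genuine $O(n^{-1})$ correction. Since \cref{lemma:SxSxcolorless_2} gives $\Pr(w_b \in \{0,1\} \land E) = O(n^{-3/2})$, the rare event cannot contribute at order $n^{-1}$ and only fixes the size of the remainder; what is left is to determine the numerical coefficient of the $n^{-1}$ term.

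The main obstacle is precisely pinning down this coefficient (here $\tfrac{7}{18}$) rather than just its scaling. As flagged after \eqref{eq:asymptotic_Mnm}, \cref{sec:xx} is the one place where the $m$-independent prefactors in the Motzkin asymptotics must be retained: the $O(n^{-1})$ part of $\Pr(w_b,w_{b+L})$ is built from the full three-segment product $M_{b,m}\,M_{L-2,p+q}\,M_{n-b-L,\,m+w_b+q-p+w_{b+L}}$ of \eqref{eq:pb-exact}, so I would carry these prefactors and the integer shifts $w_b,w_{b+L}$ through the saddle-point/Gaussian-integral evaluation that proves \cref{lemma:SxSxcolorless_1} (details in \cref{app:s1-corr}). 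Care is needed to track any correction that depends on $w_b$ alone --- induced by the summation constraint $p \leq m + w_b$ --- separately from the symmetric $(w_b+w_{b+L})^2$ dependence, since the two enter the off-diagonal sum and the coincident term differently. Once these $O(n^{-1})$ pieces are assembled, the remaining step is routine: the leftover $O(n^{-2})$ terms of the expansion are dominated by the $O(n^{-3/2})$ rare-event remainder, and one reads off \eqref{eq:xx-s1}.
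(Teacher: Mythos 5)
Your route is the same as the paper's: feed \cref{lemma:SxSxcolorless_1,lemma:SxSxcolorless_2} into the decomposition \eqref{eq:SxSx_colorless}, get the leading constant from the uniform limit, and extract the $O(n^{-1})$ term by normalizing the corrected weights. The leading order is correct ($\tfrac{1}{2}\cdot\tfrac{6}{9}+\tfrac{1}{9}=\tfrac{4}{9}$), and your numerator agrees with the paper's: the off-diagonal weights sum to $6-\tfrac{3}{\sigma n}$ and the coincident term contributes $2$, for a total of $8-\tfrac{3}{\sigma n}$.

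The gap is in the normalization, and it changes the answer. You obtain $Z=9-\tfrac{9}{\sigma n}$ by summing $1-\tfrac{3(w_b+w_{b+L})^2}{4\sigma n}$ over all nine configurations (using $\sum (w_b+w_{b+L})^2=12$), whereas the paper's proof computes $\expval*{S^x_iS^x_j}=\tfrac{1}{2}\,\tfrac{8-3/(\sigma n)}{9-6/(\sigma n)}$, i.e.\ it uses $Z=9-\tfrac{6}{\sigma n}$. With $\sigma=\tfrac13$, your normalization gives $\tfrac{1}{2}\,\tfrac{8-3/(\sigma n)}{9-9/(\sigma n)}=\tfrac{4}{9}+\tfrac{5}{6n}$, which contradicts \eqref{eq:xx-s1}. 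The entire discrepancy sits in the $O(n^{-1})$ weights you assign to the two corner configurations $(1,1)$ and $(-1,-1)$ (the only ones with $\abs{w_b+w_{b+L}}=2$): \cref{lemma:SxSxcolorless_1} is stated only for $w_b\neq w_{b+L}$, so it does not license extending $1-\tfrac{3(w_b+w_{b+L})^2}{4\sigma n}$ to the diagonal, and the paper's denominator is consistent with those two entries carrying a correction of $\tfrac{3}{2\sigma n}$ each rather than the $\tfrac{3}{\sigma n}$ your extension assigns (note the same extension issue afflicts $\Pr(0,0)$ in the numerator, though there the correction happens to vanish). You do flag the right dangerous sources at this order --- the $m$-independent prefactors, the dropped constraint $p\le m+w_b$, and corrections depending on $w_b$ alone --- but you never resolve them, and the one concrete computation you commit to produces the wrong coefficient. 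To close the proof you must evaluate $\Pr(\pm1,\pm1)$ (and $\Pr(0,0)$) to $O(n^{-1})$ accuracy directly from \eqref{eq:pb-exact}, rather than by extrapolating the off-diagonal asymptotic.
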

\begin{proof}
From \cref{eq:SxSx_colorless} and the vanishing probability of the event $E$ from \cref{lemma:SxSxcolorless_2}, we conclude that
\begin{equation*}
    \expval*{S^x_i S^x_j} = \frac{1}{2} \sum_{w_b \neq w_{b+L}} \Pr(w_b,w_{b+L}) + \Pr(0,0) + O(n^{-3/2})
\end{equation*}
From \cref{lemma:SxSxcolorless_1} it then follows that 
\begin{equation}
    \expval*{S^x_i S^x_j} = \frac{1}{2}\; \frac{8-\frac{3}{\sigma n}}{9-\frac{6}{\sigma n}} = \frac{4}{9}+\frac{7}{18n} + O(n^{-3/2})
\end{equation}
\end{proof}

\cref{fig:xx} illustrates convergence of the bulk $XX$ correlation function to a constant value as the system size is increased, using numerical Matrix Product State (MPS) simulations of the $s=1$ Motzkin spin-chain \cite{state_prep_to_appear}.

\begin{figure}
    \centering
    \includegraphics[width=0.6\textwidth]{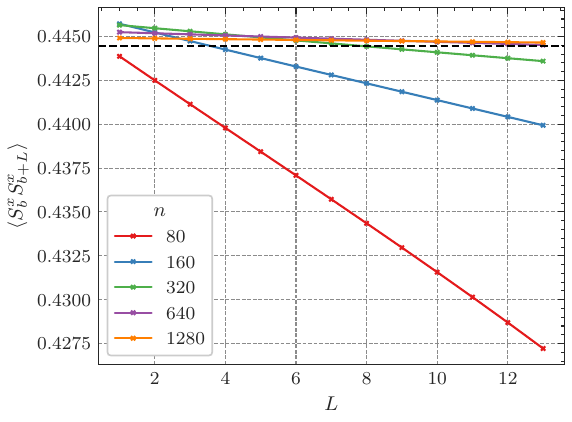}
    \caption{Exact $s=1$ results for $\expval*{S^x_b S^x_{b+L}}$ in the middle of the chain, calculated via Matrix Product State numerics. The dashed black line indicates $\nicefrac{4}{9}$, the asymptotic value of $\expval*{S^x_b S^x_{b+L}}$ predicted by \cref{eq:xx-s1}. Note that $4/9 = 0.\overline{4}$.} 
    \label{fig:xx}
\end{figure}

\paragraph*{Colorful model ($s \geq 2$):} The  $\expval*{S^x_b S^x_{b+L}}$ correlation function in the colorful model differs significantly from that of the colorless model, as we describe in the following theorem. 
\begin{theorem}[Colorful $S^x S^x$ correlations]\label{thm:xx-s2}
The correlation function $\expval*{S^x_i S^x_j}$ of the colorful $(s\geq 2)$ Motzkin spin chain in the thermodynamic limit ($n \rightarrow \infty$), for $i$, $j$ such that the $L\coloneqq j-i$ spins are centered around the middle of the chain satisfies the asymptotic algebraic decay
\begin{equation}
    \expval*{S^x_b S^x_{b+L}} = \frac{(s+1)(s+2) \sigma^2 }{6}\sqrt{\frac{\sigma}{\pi}}\;L^{-3/2} \label{eq:xx-s2}
\end{equation}
\end{theorem}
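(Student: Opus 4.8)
The plan is to mirror the colorless derivation and reduce $\expval*{S^x_b S^x_{b+L}}$ to a weighted count of walk pairs. Writing $\expval*{S^x_b S^x_{b+L}} = \frac{1}{M_n}\sum_{w,w'}\bra{w'}S^x_b S^x_{b+L}\ket{w}$, each nonzero matrix element corresponds to a pair of colorful Motzkin walks $(w,w')$ that agree everywhere except at $b$ and $b+L$, where each spin label is shifted by $\pm1$. Using $S^x=\tfrac12(S^++S^-)$ with $\bra{m\pm1}S^x\ket{m}=\tfrac12\sqrt{s(s+1)-m(m\pm1)}$, the product $S^x_b S^x_{b+L}$ splits into four ladder terms. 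First I would argue that $S^+_b S^+_{b+L}$ and $S^-_b S^-_{b+L}$ cannot contribute: a pair of single-site shifts of the same sign either changes the net height of the walk, so $w'$ cannot return to the origin, or preserves the height only by independently recoloring two steps, which breaks their color matching. Hence only the height-conserving cross terms $S^+_b S^-_{b+L}$ and $S^-_b S^+_{b+L}$ survive, and it suffices to analyze these.

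The crux, and the origin of the qualitatively different behavior from the $s=1$ case, is the color-matching constraint. For a surviving cross term to map $w$ to a \emph{valid} $w'$, the two altered steps must be \emph{linked}: the sub-walk on the middle segment $B=\{b+1,\dots,b+L-1\}$ must be a height-balanced excursion so that the steps at $b$ and $b+L$ pair consistently. I would enumerate the admissible link types: (i) recoloring a matched up--down pair $u^k$ at $b$ and $d^k$ at $b+L$ into $u^{k\pm1},d^{k\pm1}$; (ii) creating a matched pair $u^1,d^1$ out of two flats, or the reverse annihilation into two flats; and (iii) transporting a color-$1$ step across $B$ (e.g.\ $0,u^1\to u^1,0$). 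In each case, validity forces the relative walk on $B$ to be a Motzkin excursion returning to its starting height; otherwise the shifted step would pair with a wrong-color partner strictly inside $B$. This is in stark contrast with $s=1$, where the absence of any color label makes the analogous transitions valid \emph{regardless} of linking, so $O(1)$-probability configurations dominate and produce the constant $\tfrac49$ of \cref{eq:xx-s1}; here those unlinked configurations are killed by color mismatch, which is what forces algebraic decay.

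To evaluate, I would write the exact count as a sum over the height $m$ at $b$ and the numbers $p,q$ of unmatched down/up steps on $B$, restricted to the linking events, in direct analogy with \cref{eq:pb-exact} and the distribution $\Pr(m,p,q)$ of \cref{lemma:szsz}. The probability that $B$ carries a balanced excursion is governed by the $L^{-3/2}$ factor in the Motzkin count (cf.\ the $n^{-3/2}$ prefactor of $M_n$ following \cref{eq:asymptotic_Mnm}); this return probability is precisely what injects the $L^{-3/2}$ decay. Converting the constrained sums to Gaussian integrals by the Euler--Maclaurin/saddle-point technique used throughout, and extending the limits to $\infty$, then yields the $\sigma$-dependent prefactor. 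In parallel, summing the ladder weights $\tfrac14\big(s(s+1)-k(k\pm1)\big)$ over the color sectors $k=1,\dots,s$ and over the link types (i)--(iii) should collapse to the combinatorial factor $(s+1)(s+2)$, and assembling the two pieces gives \cref{eq:xx-s2}.

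The hard part will be the bookkeeping underlying the second and third steps: determining exactly which color-sector shifts preserve both non-negativity and color matching, and—more delicately—assigning to each link type its correct excursion-probability prefactor, since the three families carry different $O(1)$ color factors. For instance, the transport events (iii) are confined to color $1$ and come weighted by the probability that the relevant site carries a color-$1$ step, whereas the recoloring events (i) sum uniformly over all $s$ colors; reconciling these so that the color sum collapses to the clean factor $(s+1)(s+2)\sigma^2/6$ is the computationally demanding point. A secondary task, analogous to \cref{lemma:SxSxcolorless_2}, is to bound the rare non-excursion configurations and the off-centering corrections, verifying that both contribute only at subleading order $O(n^{-1})$.
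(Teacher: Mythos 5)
Your proposal follows essentially the same route as the paper's proof in \cref{app:s2-corr}: reduce $\expval*{S^x_b S^x_{b+L}}$ to a count of valid walk-to-walk transitions, observe that only the height-conserving ladder combinations survive, show that color matching forces a balanced Motzkin excursion on the intervening segment (whose probability $P_L \sim \sigma^2\sqrt{\sigma/\pi}\,L^{-3/2}/(2s)$ supplies the $L^{-3/2}$ decay, with cascade-type non-excursion events suppressed as in the paper's $O(n^{-3/2})$ bound), and assemble the prefactor by summing the ladder matrix elements $\tfrac14(s(s+1)-k(k\pm 1))$ over spin sectors. The only difference is organizational --- you classify by link type (recoloring, pair creation/annihilation, transport) where the paper cases on the initial value of $w_b$ --- and the bookkeeping you defer, including the color-$1$ restriction on transport events that you correctly flag as the delicate point, is precisely the content of the paper's case-by-case prefactor sum yielding $\tfrac{s(s+1)(s+2)}{3}P_L$.
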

\begin{proof}[Proof sketch]
The proof involves analyzing all possible transitions between Motzkin walks induced by the $S^x_b S^x_{b+L}$ operator for each possible value of the spin state at position $b$, denoted $w_b$. We consider five cases based on the value of $w_b$: $w_b \leq -2$, $w_b = -1$, $w_b = 0$, and $w_b \geq 1$. For each case, we calculate the probability of obtaining a valid Motzkin walk after applying $S^x_b S^x_{b+L}$.

The dominant contribution comes from configurations where there is a valid Motzkin sub-walk between positions $b+1$ and $b+L-1$. The probability $P_L$ of such a configuration is given by:
\begin{equation}
    P_L \approx \frac{\sigma^2 \sqrt{\sigma/\pi}}{2s L^{3/2}} + O(L^{-5/2})
\end{equation}
By summing the contributions from all cases, we arrive at the stated result. The detailed case analysis and calculations can be found in \cref{app:s2-corr}.
\end{proof}

\section{Discussion}\label{sec-discussion}
The results presented in the preceding sections reveal a diverse range of physical phenomena in both the colorless and colorful Motzkin spin chain models. We now turn to interpreting these findings and exploring their broader implications for our understanding of quantum criticality, entanglement, and correlations in these one-dimensional systems.

\paragraph*{\textbf{Hardness of classical simulability:}} The entanglement properties of the Motzkin spin chain models have significant implications for their classical simulability. For the colorless ($s=1$) model, the logarithmic scaling of entanglement entropy implies that it can be efficiently simulated using matrix product states (MPS) with a bond dimension that scales polynomially with system size. In fact, one can find an exact MPS representation analytically for the $s=1$ Motzkin ground state, as we will present in an upcoming work \cite{state_prep_to_appear}. However, the situation is markedly different for the colorful ($s>1$) models. The $\sqrt{L}$ scaling of entanglement entropy implies that an MPS representation would require a bond dimension that grows exponentially with system size. Moreover, it is believed that the model is not exactly solvable in general --- interestingly, a variant of the model dubbed the `free Motzkin spin-chain' was introduced in \cite{hao2023exact} and shown to be integrable with unusual scattering properties. Together, these facts likely mean analytical methods and classical simulation of the colorful Motzkin spin chains are intractable for large systems (especially beyond ground-state properties). This suggests that these models may be candidates for demonstrating quantum advantage in simulating many-body quantum systems and computing physical quantities thereof. Experimentally corroborating the exact results for the anomalous ground state correlation functions presented in this work can then serve as verification of preparing the correct quantum state, before pursuing simulations beyond the solvable regime that may reveal physics that is inaccessible through classical computational and analytic methods.

\paragraph*{\textbf{Need for large system sizes:}} The calculations in this work suggest that all bulk universal features of the system only appear for system sizes that are larger than the squared size of the region under consideration. For instance, observing the bulk $\sim \sqrt{L}$ block-entanglement scaling for a centered segment of size $L$ requires a system of size $n \geq L^2$, and further assumes that $L \gg 1$. The underlying Brownian motion structure of the walks therefore enforces the need for large system sizes to observe the asymptotic scaling relations derived above. This property poses a challenge for accurate numerical simulations that probe the bulk physics in the thermodynamic limit. This property may also imply the presence of a marginally relevant operator in the renormalization-group sense (or at least an operator that is relevant but has an anomalously small scaling dimension) in a continuum field theory of the system, as significant coarse graining is required to reveal universal scaling (see \cref{fig:zz} for an example of this phenomenon for the $ZZ$ correlation function).

\paragraph*{\textbf{Non-vanishing $\expval*{S^z_i S^z_j}$ correlations in the colorful model:}}
One surprising property of the colorful model revealed by the calculations in this work, that is absent in the colorless model, is the $\sim L^{-3/2}$ power-law decay of the bulk $S^zS^z$ (connected) correlation function in the thermodynamic limit. For the colorless model, it was shown by \citet{movassagh2017entanglement} that this correlation function vanishes in the bulk as $\sim 1/n$ with system size, independent of $L$. The presence of such correlations in the colorful models indicates that the spins are not in a disordered liquid-like state. This suggests the possibility of an `algebraic' liquid with quasi-long-range order, similar to the phase found in the Gutzwiller-projected Fermi sea state \cite{anderson1987resonating, gutzwiller1965correlation}. However, much more needs to be understood about the physical interpretation of this order.

\paragraph*{\textbf{$U(1)$ spontaneous symmetry breaking in the colorless chain:}} The calculations in this work reveal the structure of the in-plane correlations of both the colorless and colorful chains. The $XX$ and $YY$ correlations are shown to be asymptotically constant in the colorless model, revealing a type of bulk long-range order (LRO), indicating spontaneous symmetry breaking (SSB) of the $U(1)$ spin-rotation symmetry in the thermodynamic limit as noted in \cref{sec::ssb}. Such LRO without non-zero on-site expectation values for any finite system size is related to the violation of the cluster-decomposition property of correlations \cite{hastings2004locality, nachtergaele2006lieb} in the Motzkin chain ground state that was shown in \cite{dell2016violation} --- it is therefore possible to have $\expval*{S^x_i S^x_{i+L}} \neq 0$ in the bulk while $\expval*{S^x} = 0$ everywhere. 

\paragraph*{\textbf{Algebraic (quasi-long-range) order in the colorful chain}:}
In contrast to the colorless model, in the colorful model we find that the in-plane correlations decay algebraically as $\sim L^{-3/2}$, identical to the scaling of $ZZ$ correlations in the ground-state. This feature further supports the interpretation of the colorful chain ground-state as exhibiting an algebraic liquid order. Moreover, the difference in the in-plane correlations between the colorless and colorful chains suggests that the frustration induced by the 'color-matching' conditions cause breaking of long-range order into quasi-long-range order, indicating that the  $U(1)$ spin-rotation symmetry (corresponding to the diagonal subgroup of $U(1)^{\times s}$) is not spontaneously broken in the colorful model. Instead, the algebraic decay of correlations is reminiscent of the scaling of correlation functions at a Berezinskii–Kosterlitz–Thouless (BKT) type topological phase transition, suggesting that the phase transitions associated with $t$-deformations may be BKT-like. This finding is consistent with the BKT-like behaviour of correlations across the $t$-deformation transition found numerically in \cite{barbiero2017haldane}. However, as we discuss in the next sub-section, the $3/2$ exponent in the algebraic decay is not consistent with the critical exponents of the universality class associated with the usual BKT transition in the 2-D XY model, which would instead predict an exponent of $1/4$ \cite{kosterlitz6ordering}.

\paragraph*{\textbf{Anomalous power-law scaling of correlations in the colorful model:}}

For a generic critical or gapless system, the scaling hypothesis \cite{kardar2007statistical} implies that connected correlation functions of operators decay as a power-law
\begin{equation}
    \expval*{O_i O_j} - \expval*{O_i} \expval*{O_j} \propto \abs{i-j}^{-d+2-\eta_O}
\end{equation}
where $d$ is the physical dimension, and $\eta_O$ is the so-called `anomalous dimension' of the operator $O$ at the gapless point. For scaling operators $O$, the exponents $\eta_O$ are part of the data of the theory describing the universality class associated with the low-energy physics of the model. In the colorful Motzkin model, where $d=1$, we have shown that as $n\to\infty$, the bulk $XX$, $YY$, and $ZZ$ correlations scale as $\sim L^{-3/2}$. This implies that $\eta_{S^\alpha} = \frac{5}{2}$ for all $\alpha \in \qty{x,y,z}$. While we do not have a good enough understanding of the operator content of the underlying field theory, we note that this anomalous dimension is not consistent with any known 1D universality class. In particular, these exponents are unusually large, and imply a significant deviation from scale-invariance under renormalization group flow away from the fixed point. A detailed construction of a continuum field theory for the Motzkin chain is required to understand the universality class associated with the gapless point.

\paragraph*{\textbf{Heavy tailed entanglement spectrum:}} A striking feature of the colorful Motzkin models is the heavy-tailed nature of their entanglement spectrum (i.e., the square of the Schmidt coefficients in \cref{eq:bipartition}). This property manifests in a significant discrepancy between the scaling of the von Neumann entropy ($\sim \sqrt{L}$) and the higher R\'enyi entropies ($\sim \log L$) for a block of $L$ spins. Such a large difference arises from the presence of many small but non-negligible eigenvalues in the reduced density matrix, which contribute substantially to the von Neumann entropy but are known to be suppressed by the $\kappa > 1$ R\'enyi entropies~\cite{muller2013quantum}. To the best of our knowledge, this represents the first explicit quantum many-body model where such an exponential separation between entropy measures has been analytically proven. Similar behavior was also shown for Motzkin and Fredkin spin chains in Ref.~\cite{Sugino_2018}. The heavy-tailed entanglement spectrum underscores the exotic nature of the quantum correlations in colorful Motzkin chains and suggests that different entanglement measures may capture distinct aspects of the model's physics.

There are many open questions and directions for future research that we believe are worth exploring:
\begin{enumerate}
    \item The construction of a continuum quantum field theory that captures the universal long wavelength properties of the Motzkin chains. While a continuum description of the $s=1$ ground state was introduced in \cite{chen2017gapless}, a systematic construction of a field-theoretic Hamiltonian and its low energy spectrum remains to be understood for both the colorless and colorful models. However, the symmetries of the Hamiltonian and correlation function scaling relations unveiled in this work strongly constrain the underlying field theory --- we intend to study this in future work.
    
    \item Calculating dynamical physical observables such as the spectral function, dynamic structure factors, and out-of-time-ordered correlation functions. These would provide valuable insight into the structure of excitations and the (presumably) chaotic dynamics in these models. For example, numerical work using the Density Matrix Renormalization Group (DMRG) method in \cite{chen2017quantum} has shown the existence of low energy excitations with multiple emergent time-scales and dynamical critical exponents in the $s=1$ spin chain. 
    
    \item Exploring completely translation invariant versions of the Motzkin spin chains. The boundary projectors in the Hamiltonian in \cref{hamiltonian_main} violate true translation invariance, although the bulk of the Motzkin ground state is indeed translation invariant in the thermodynamic limit. A translation invariant version of the Motzkin chain that uses a uniform external magnetic field was proposed in \cite{movassagh2016supercritical}. However, this model requires a magnetic field with magnitude $O(n^{-1})$ making it fine-tuned and unrealistic for experimental implementations. Nevertheless, exploring the phenomenology of this model as well as developing other translation invariant variants of the model that are as highly entangled remains open.
    
    \item The generalization of these results to Motzkin loop models in higher dimensions. An initial exploration of this idea in 2D has been developed by \citet{balasubramanian20232d}. The combinatorial tools developed in this paper could be adapted to analyze more general lattice walk models in higher dimensions.

    \item An extension of our results on correlation functions and entanglement measures to the half-integer spin variants of the Motzkin spin-chains, named \textit{Fredkin} spin chains \cite{salberger2018fredkin, salberger2017deformed} would follow immediately from modifying the calculations in this work to count 'Dyck walks' instead of Motzkin walks (see \cref{sec-asymptotic_appendix}). It would be interesting to explore whether there are significant differences in the asymptotic scaling relations of correlations, symmetries, and symmetry breaking between Fredkin and Motzkin spin chains.
    \item Exploring experimental realizations and quantum simulations: While Motzkin spin chains are theoretical constructs, recent advances in quantum simulation platforms, such as Rydberg atom arrays, show promise for realizing and controlling highly entangled quantum states \cite{koyluoglu2024floquet}. These developments underscore the relevance of understanding models like Motzkin spin chains for quantum simulation. An important step towards potential realization is the development of efficient quantum circuits for both preparing Motzkin ground states and simulating time evolution under the Motzkin Hamiltonian. Determining the minimum circuit depth for state preparation and time evolution could enable simulations on programmable quantum processors \cite{ebadi2021quantum, bernien2017probing}. We plan on addressing these questions in our upcoming work \cite{state_prep_to_appear}.
\end{enumerate}

\begin{acknowledgments}
We thank Mikhail Lukin, Susanne Yelin, Subir Sachdev, Kun Yang, Nicholas Bonesteel, Nishad Maskara, Nazli Ugur Koyluoglu, Pablo Bonilla, Rohith Sajith, and Nik Gjonbalaj for many valuable discussions.
\end{acknowledgments}

\bibliographystyle{apsrev4-2}
\bibliography{refs}

\clearpage

\appendix
\section{Asymptotic analysis of colorful Motzkin numbers} \label{sec-asymptotic_appendix}

In the following, we provide an overview of the asymptotic analysis of Motzkin numbers. The analysis presented in this appendix builds upon the foundational work on Motzkin spin chains developed in previous papers~\cite{movassagh2017entanglement,movassagh2016supercritical}. We have compiled and refined these results to provide an overview of the asymptotic analysis of Motzkin numbers.

The enumeration of Motzkin walks is closely related to the enumeration of Dyck paths, which are similar but do not allow flat steps. We begin with a fundamental lemma on the enumeration of non-negative walks:
\begin{lemma*}[Generalized Ballot Theorem]
Let $D_{L,m_1,m_2}$ be the number of non-negative walks on $L$ steps that connect the points $(0,m_1)$ and $(L,m_2)$ using only up and down steps. This number is given by:
\begin{equation}
D_{L,m_1,m_2} = \binom{L}{\frac{L+|m_2-m_1|}{2}} - \binom{L}{\frac{L+(m_2+m_1)}{2}+1}
\end{equation}
when $|m_2-m_1| \leq L$ and $m_2-m_1 = L \pmod{2}$, and zero otherwise.
\end{lemma*}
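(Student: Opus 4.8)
The plan is to prove the Generalized Ballot Theorem via the \emph{reflection principle}, a standard combinatorial technique for counting lattice paths that remain non-negative. First I would establish the parity and range conditions: a walk using only $\pm 1$ steps from height $m_1$ to height $m_2$ over $L$ steps must satisfy $m_2 - m_1 \equiv L \pmod 2$ (since each step changes the height by an odd amount and $L$ steps change it by $m_2-m_1$), and $|m_2 - m_1| \leq L$ (since the net displacement cannot exceed the total number of steps). When these fail, there are no such walks, giving the stated zero.

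Assuming the conditions hold, I would first count the \emph{unconstrained} walks from $(0,m_1)$ to $(L,m_2)$ that ignore the non-negativity requirement. If the walk takes $U$ up-steps and $D$ down-steps, then $U + D = L$ and $U - D = m_2 - m_1$, so $U = \frac{L + (m_2-m_1)}{2}$. The number of such walks is the binomial coefficient $\binom{L}{U} = \binom{L}{\frac{L+(m_2-m_1)}{2}}$; to match the stated formula's use of $|m_2 - m_1|$, I would note the symmetry $\binom{L}{\frac{L+(m_2-m_1)}{2}} = \binom{L}{\frac{L+|m_2-m_1|}{2}}$ which follows from $\binom{L}{k} = \binom{L}{L-k}$.

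The central step is to subtract the number of \emph{bad} walks --- those that touch height $-1$ at some point. Here I would apply the reflection principle: for any walk from $(0,m_1)$ to $(L,m_2)$ that touches $-1$, reflect the portion of the walk \emph{before} its first touch of height $-1$ across the line $y = -1$. This reflection is a bijection between bad walks and \emph{all} (unconstrained) walks from the reflected starting point $(0, -2-m_1)$ to $(L, m_2)$, since reflecting $m_1$ across $-1$ gives $-1 - (m_1 - (-1)) = -2 - m_1$. Counting these reflected walks by the same binomial argument, the number of up-steps becomes $\frac{L + (m_2 - (-2-m_1))}{2} = \frac{L + (m_2+m_1)}{2} + 1$, yielding $\binom{L}{\frac{L+(m_2+m_1)}{2}+1}$ bad walks. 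Subtracting gives exactly the claimed formula.

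The main obstacle will be stating the reflection bijection rigorously: I must verify that reflecting at the \emph{first} passage through $-1$ gives a well-defined involution, and confirm that every unconstrained walk from $(0,-2-m_1)$ to $(L,m_2)$ arises as the reflection of a unique bad walk (such a walk necessarily starts below $0$ and ends at $m_2 \geq 0$, so by the intermediate value property for integer-step walks it must cross $-1$, and reflecting back recovers the original bad walk). Care is also needed with edge cases where $m_1$ or $m_2$ equals $0$, and with ensuring the second binomial coefficient is interpreted as zero when its lower index exceeds $L$ --- but these are routine checks rather than conceptual difficulties. The parity bookkeeping throughout must be handled consistently, but the reflection principle cleanly packages the non-negativity constraint into a single subtracted term.
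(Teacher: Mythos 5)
Your proof is correct. The paper itself does not prove this lemma --- it is stated as a known result (the Generalized Ballot Theorem) inherited from the prior literature on Motzkin and Dyck path enumeration --- so there is no in-paper argument to compare against; your reflection-principle derivation is the standard one and all the pieces check out: the unconstrained count $\binom{L}{\frac{L+(m_2-m_1)}{2}} = \binom{L}{\frac{L+|m_2-m_1|}{2}}$ by the symmetry $\binom{L}{k}=\binom{L}{L-k}$, the reflected start point $-2-m_1$ gives $\frac{L+(m_2+m_1)}{2}+1$ up-steps, and the edge case where no bad walk exists ($L < m_1+m_2+2$) is correctly absorbed by the convention that a binomial coefficient with lower index exceeding $L$ vanishes.
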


Using this lemma, we can derive an expression for the number of Motzkin-like walks $M_{n,m}$ between points $(0,0)$ and $(n,m)$:
\begin{equation}\label{eq:mnm-1}
M_{n,m} = \sum_{k=0}^{n-m} \binom{n}{k} s^{\frac{n-k+m}{2}} D_{n-k,0,m}.
\end{equation}
This summation represents the possible number of flat steps in the walk. The upper bound $n-m$ corresponds to the maximum number of flat steps possible, which occurs when all non-flat steps are up steps. The binomial coefficient $\binom{n}{k}$ counts the number of ways to choose the positions for $k$ flat steps out of $n$ total steps. The factor $s^{\frac{n-k+m}{2}}$ accounts for the coloring of the non-flat steps. After placing $k$ flat steps, we have $n-k$ non-flat steps remaining. The exponent $\frac{n-k+m}{2}$ represents the number of up steps, as the difference between up and down steps must equal $m$ (the final height). Each up step can be colored in $s$ ways. Finally, $D_{n-k,0,m}$ represents the number of Dyck-like paths (using only up and down steps) that go from height 0 to height $m$ in $n-k$ steps. This term accounts for the arrangement of up and down steps after the flat steps have been placed. In summary, this formula enumerates all possible Motzkin-like walks that reach a height $m$ by first choosing the number and positions of flat steps, coloring the up steps, then arranging the remaining up and down steps to reach the desired final height. \cref{eq:mnm-1} can be simplified to read
\begin{equation}\label{eq:ml0m}
M_{n,m} = \frac{m+1}{n+1} \sum_{i=0}^{(n-m)/2} s^i \binom{n+1}{i+m+1,i,n-2i-m}
\end{equation}
where the notation $\binom{n}{a,b,c}$ represents a trinomial coefficient.

To make progress, we rely on saddle point analysis. When $x+y+z=0$, we have the following approximation:
\begin{equation}\label{eq:trinomial}
\binom{L}{\frac{L}{3}+x, \frac{L}{3}+y, \frac{L}{3}+z} \approx \frac{3^{L+1}\sqrt{3}}{2\pi L} \exp\qty(-\frac{3}{2}\frac{x^2+y^2+z^2}{L}).
\end{equation}
This approximation forms the basis for the asymptotic analysis of more complex Motzkin number expressions. By setting $x$, $y$, and $z$ appropriately, we replace the summands in \cref{eq:ml0m} with their corresponding Gaussian approximations, and replace the sum with an integral over $i$ (see Lemma 2 of \cite{movassagh2017entanglement} for an error analysis of this step). This allows us to derive asymptotic expressions for various counting factors that involve combinations of colorful Motzkin numbers that appear in the expressions for the entanglement measures and correlation functions presented in the main text. The key result we use which follows from \cref{eq:trinomial} is that the number of Motzkin walks starting at height 0 and ending at height $m$:
\begin{equation}\label{eq: appendix_motzkin_numb}
M_{n,m} \approx \frac{m s^{-m/2}}{2\sqrt{\sigma \pi} n^{3/2}} \qty(\frac{\sqrt{s}}{\sigma})^n \exp(-m^2/(4 \sigma n))\qc \sigma \coloneqq \frac{\sqrt{s}}{2\sqrt{s}+1}.
\end{equation}
In revisiting this analysis, we have addressed a minor error in the previous asymptotic expressions for Motzkin numbers. Specifically, we noticed that the asymptotic expression for the colorful Motzkin number $M_{m,n}$ given in \cite{movassagh2016supercritical} contained an additional factor of $\sqrt{n}$ than necessary, with the denominator of \cref{eq: appendix_motzkin_numb} showing as $n$ instead of $n^{3/2}$. We note that this correction does not affect any conclusions of the previous work in \cite{movassagh2017entanglement, movassagh2016supercritical} regarding the scaling of entanglement entropy, previously calculated correlation functions, or the scaling of the gap of the Hamiltonian. The reason for this is that the additional $\sqrt{n}$ factor was consistently present in both the numerator and denominator of relevant expressions, canceling out in final expressions. However, specifically for the new results on the $XX$ correlation functions (\cref{sec:xx}) presented in this work, the precise form of this factor becomes important.

In the following sections, we provide additional details of the calculations leading to asymptotic expressions for entanglement measures and correlation functions that were presented in the main text. While we have omitted some technical details, key ideas are presented here, and we refer the reader to \cite{movassagh2017entanglement, flajolet2009analytic} for additional details. 

\section{Details of entanglement spectrum calculations} \label{sec-entanglement_appendix}
In this appendix, we provide detailed calculations for the entanglement spectrum of the Motzkin spin chains. We focus on two key scenarios: bipartite entanglement about an arbitrary cut and block entanglement for a subsystem in the bulk. These calculations expand upon the results presented in \cref{sec:entanglement} of the main text, offering a more in-depth look at the mathematical foundations of our findings.

\subsection{Bipartite Entanglement}\label{app:bipartite}
Recall from \cref{eq:bipartition} that the state's Schmidt coefficients define a probability distribution
\begin{equation}\label{eq:cut-dist2}
    \Pr(m,\vec{x}) = \frac{1}{Z} M_{b,m}M_{n-b,m},
\end{equation}
so the entanglement entropy is the Shannon entropy $\H[m,\vec{x}]$. Defining $\beta \coloneqq (b^{-1}+(n-b)^{-1})^{-1} = \frac{b(n-b)}{n}$ for brevity, we first evaluate the normalization constant $Z$:
\begin{equation}
    Z = \sum_{m=0}^b \sum_{\vec{x} \in \S^m} M_{b,m} M_{n-b,m} \approx \int_0^\infty m^2 e^{-m^2/(4 \sigma \beta)} \dd{m} = 2\sqrt{\pi} (\sigma \beta)^{3/2}.
\end{equation}
The bounds of the integral can be extended from $m=b$ to $m=\infty$ because the integrand $m^2e^{-m^2/(4 \sigma \beta)}$ concentrates around $m \lesssim \sqrt{b}$: it is exponentially suppressed past $m \gtrsim b$, so extending the limits introduces an exponentially (in $b$) vanishing error. 

With this, we can calculate the $\kappa$-R\'enyi entropies of \eqref{eq:cut-dist2}:
\begin{subequations}
    \begin{align}
        S_\kappa &= -\frac{1}{\kappa-1} \log_2\qty(\sum_m s^m (\Pr(m,\vec{x}))^\kappa) \\
        &\approx -\frac{1}{\kappa-1} \log_2\qty(\int_0^\infty s^{-(\kappa-1)m} m^{2\kappa} e^{-\kappa m^2/(4 \sigma \beta)} \dd{m}) + \frac{\kappa}{\kappa-1} \log_2\qty(2 \sqrt{\pi} (\sigma \beta)^{3/2})
        \intertext{We see that $s^{-(\kappa-1)m}$ heavily suppresses the integrand for $m \gg 1$. Therefore, we approximate $e^{-\kappa m^2/(4 \sigma \beta)} \approx \qty(1 - \frac{\kappa m^2}{4 \sigma \beta} + O(\beta^{-2}))$. We also rewrite $\lambda \coloneqq \log_2(s) (\kappa-1)$. We use the elementary integral $\int x^\kappa e^{-\lambda x} \dd{x} = \frac{\kappa!}{\lambda^{\kappa+1}}$.}
        &\approx -\frac{1}{\kappa-1} \log_2\qty(\int_0^\infty e^{-\lambda m} m^{2\kappa} \qty(1 - \frac{\kappa m^2}{4 \sigma \beta}) \dd{m}) + \frac{\kappa}{\kappa-1} \log_2\qty(2 \sqrt{\pi} (\sigma \beta)^{3/2}) \\
        &= -\frac{\log_2\qty(\frac{(2\kappa)!}{\lambda^{2\kappa+1}} - \frac{\kappa (2\kappa+2)!}{4 \sigma \beta \lambda^{2\kappa+3}})}{\kappa-1} + \frac{\kappa}{\kappa-1} \log_2(2 \sqrt{\pi} (\sigma \beta)^{3/2}) \\
        &= \frac{3}{2(1-\kappa^{-1})} \log_2(\beta) + \frac{\kappa \log_2(2 \sqrt{\pi} \sigma^{3/2}) - \log_2((2\kappa)!) + (2\kappa+1) \log_2(\lambda)}{\kappa-1} + O(\beta^{-1}).
    \end{align}
\end{subequations}

\subsection{Block Entanglement}\label{app:block}
We begin with the tripartite decomposition in \cref{eq:tripartite}, replicated below for convenience:
\begin{equation}
    \sum_{p+q \leq L}\sum_{m=p}^b  \sum_{\substack{\vec{x} \in \S^{p+q} \\ \vec{y} \in \S^{m-p}}} \sqrt{M_{b,m} M_{b,m+q-p} M_{L,p+q}} \ket*{C_{b,m[\uparrow],\vec{x},\vec{y}}}_A \otimes \ket*{C_{L,p[\downarrow],q[\uparrow],\vec{x}}}_B \otimes \ket*{C_{b,(m+q-p)[\downarrow],\vec{x},\vec{y}}}_C
\end{equation}
We can define the unnormalized states
\begin{equation}
    \begin{gathered}
        \ket*{\tilde{\phi}_{\Delta,\vec{x}}}_{AC} = \sum_{m=p}^b \sum_{\vec{y} \in \S^{m-p}} \sqrt{M_{b,m} M_{b,m+\Delta}} \ket*{C_{b,m[\uparrow],\vec{x},\vec{y}}}_A \otimes \ket*{C_{b,(m+\Delta)[\downarrow],\vec{x},\vec{y}}}_C ; \\
    \quad \eta \coloneqq \norm{\vec{x}}_0,\, p \coloneqq \frac{\eta-\Delta}{2},\, q \coloneqq \frac{\eta+\Delta}{2}        
    \end{gathered}
\end{equation}
By identical reasoning to \cref{subsec:bipartite},
\begin{equation}
    \braket{C_{b,m'[\uparrow],\vec{x}',\vec{y}'}}{C_{b,m[\uparrow],\vec{x},\vec{y}}} \braket{C_{b,(m'+\Delta')[\downarrow],\vec{x}',\vec{y}'}}{C_{b,(m+\Delta)[\downarrow],\vec{x},\vec{y}}} = \delta_{m',m} \delta_{m'+\Delta',m+\Delta} \delta_{\vec{x}',\vec{x}} \delta_{\vec{y}',\vec{y}}.
\end{equation}
Therefore,
\begin{subequations}
    \begin{align}
        \braket*{\tilde{\phi}_{\Delta',\vec{x}'}}{\tilde{\phi}_{\Delta,\vec{x}}} &= \sum_{\substack{m'=p' \\ m=p}}^b \sum_{\substack{\vec{y}' \in \S^{m'-p'} \\ \vec{y} \in \S^{m-p}}} \sqrt{M_{b,m} M_{b,m+\Delta} M_{b,m'} M_{b,m'+\Delta'}} \delta_{m',m} \delta_{\vec{x}',\vec{x}} \delta_{\vec{y}',\vec{y}} \delta_{m'+\Delta',m+\Delta} \\
        &= \delta_{\Delta',\Delta} \delta_{\vec{x}',\vec{x}} \underbrace{\sum_{m=p}^b s^{m-p} M_{b,m} M_{b,m+\Delta}}_{\alpha^2(p,q)}
    \end{align}
\end{subequations}
We now explicitly evaluate $\alpha^2(p,q)$:
\begin{subequations}\label{eq:alpha2-parent}
    \begin{align}
        \alpha^2(p,q) &= \sum_{m=p}^b s^{m-p} M_{b,m} M_{b,m+\Delta} \label{eq:alpha2} \\
        &\approx s^{-p-\Delta/2} \int_p^\infty  m(m+\Delta) \exp(-(m^2+(m+\Delta)^2)/(4 \sigma b)) \dd{m}
        \intertext{This integral can be expressed in terms of elementary functions and $\erf\qty(\frac{\sqrt{2} (\Delta + 2p)}{4 \sqrt{\sigma b}})$. Since $\abs{\Delta+2p} \leq O(L) \ll \sqrt{\sigma b}$ by assumption, we can simplify the result by expanding $\erf$ for $x \ll 1$ with $\erf(x) \approx \frac{2}{\sqrt{\pi}} e^{-x^2} \qty[x + \frac{2x^3}{3} + \frac{4x^5}{15} + \ldots]$, with which we arrive at:}
        &= \sqrt{\pi/2} (\sigma b)^{3/2} s^{-\eta/2} \qty(1 + O\qty(\frac{L^2}{\sigma b})).\label{eq:alpha2-final}
    \end{align}
\end{subequations}
When $\sqrt{b} \gg L$, we can keep only the leading order term $\sqrt{\pi/2} (\sigma b)^{3/2} s^{-\eta/2}$. This has been arrived at via different reasoning before: the $O(\frac{L^2}{\sigma b})$ correction originates from the limit of the sum in \cref{eq:alpha2} starting at $m=p$ rather than $m=0$. This comes from the fact that if the walk in the $B$ segment has $p$ unmatched down steps, then the walk in $A$ must end at a height at least $p$, otherwise the walk in $B$ will go below the $x$ axis. However, \citet{movassagh2017entanglement} showed that the fraction of walks for which this constraint is binding is exponentially vanishing, because the typical walk will be at a height $\sim \sqrt{\sigma b}$ at the end of $A$, which is much larger than $p$. 

Define $\ket*{\phi_{\Delta,\vec{x}}}$ to be the normalized version of $\ket*{\tilde{\phi}_{\Delta,\vec{x}}}$. The state is then
\begin{equation}
    \gs \propto \sum_{p+q \leq L} \sum_{\vec{x} \in \S^{p+q}} \sqrt{M_{L,\eta}} \alpha(p,q) \ket*{\phi_{\Delta, \vec{x}}}_{AC} \otimes \ket*{C_{L,p[\downarrow],q[\uparrow],\vec{x}}}_B.
\end{equation}
Upon tracing out $AC$, we get a state
\begin{subequations}
    \begin{align}
        \psi_B &= \sum_{\substack{p+q \leq L \\ p'+q' \leq L}} \sum_{\substack{\vec{x} \in \S^{\eta} \\ \vec{x}' \in \S^{\eta'}}} \sqrt{M_{L,\eta} M_{L,\eta'}} \alpha(p,q) \alpha(p',q') \braket*{\phi_{\Delta',\vec{x}'}}{\phi_{\Delta,\vec{x}}} \ketbra*{C_{L,p'[\downarrow],q'[\uparrow],\vec{x}'}}{C_{L,p[\downarrow],q[\uparrow],\vec{x}}} \\
        &= \sum_{p+q \leq L} \sum_{\vec{x} \in \S^{\eta}} M_{L,\eta} \alpha^2(p,q) \ketbra*{C_{L,p[\downarrow],q[\uparrow],\vec{x}}}{C_{L,p[\downarrow],q[\uparrow],\vec{x}}}
    \end{align}    
\end{subequations}
Note that the only non-trivial functional dependence in $\alpha^2(p,q)$ is $s^{-\eta/2}$; the remaining terms are constants that will be normalized away when we view the Schmidt coefficients of the reduced density matrix as a probability distribution.

We then arrive at a set of Schmidt coefficients $\Pr(p,q,\vec{x}) = \frac{1}{Z} M_{L,p+q} \alpha^2(p,q)$. Plugging in asymptotic forms for $M_{L,p+q}$ and \cref{eq:alpha2-final}, we get $\Pr(p,q,\vec{x}) = \frac{1}{Z} (p+q) s^{-(p+q)} e^{-(p+q)^2/(4 \sigma L)}$. Under the bijection $p+q \to \eta$ and $q-p \to \Delta$:
\begin{equation}
    \Pr(\eta,\Delta,\vec{x}) = \frac{1}{Z} \eta s^{-\eta} e^{-\eta^2/(4 \sigma L)}\qc \eta \coloneqq p+q,\; \Delta \coloneqq q-p.
\end{equation}
The overall normalization can now be calculated
\begin{equation}
    Z = \sum_{\eta=0}^L \sum_{\Delta=-\eta}^\eta \sum_{\vec{x} \in \S^\eta} \eta^2 s^{-\eta} e^{-\eta^2/(4 \sigma L)} \approx \int_0^\infty \eta^2 e^{-\eta^2/(4 \sigma L)} \dd{\eta} = 2\sqrt{\pi} (\sigma L)^{3/2}.
\end{equation}
We then calculate the R\'enyi entropy as follows:
\begin{subequations}
    \begin{align}
        S_\kappa &= -\frac{1}{\kappa-1} \log_2\qty(\sum_{\eta} \eta s^\eta \qty(\frac{\eta s^{-\eta} e^{-\eta^2/(4 \sigma L)}}{2\sqrt{\pi} (\sigma L)^{3/2}})^{\kappa}) \\
        &\approx -\frac{1}{\kappa-1} \log_2\qty(\int_0^\infty \eta^{\kappa+1} s^{-(\kappa-1) \eta} e^{-\kappa \eta^2/(4 \sigma L)} \dd{\eta}) + \frac{\kappa}{\kappa-1} \log_2\qty(2\sqrt{\pi} (\sigma L)^{3/2})
        \intertext{Let us define $\lambda \coloneqq \log_2(s) (\kappa-1)$. We use the elementary integral $\int_0^\infty x^\kappa e^{-\lambda x} \dd{x} = \frac{\kappa!}{\lambda^{\kappa+1}}$.}
        &\approx -\frac{1}{\kappa-1} \log_2\qty(\int_0^\infty \eta^{\kappa+1} e^{-\lambda \eta} \qty(1 - \frac{\kappa \eta^2}{4 \sigma L} + O(L^{-2})) \dd{\eta}) + \frac{\kappa}{\kappa-1} \log_2\qty(2 \sqrt{\pi} (\sigma L)^{3/2}) \\
        &= -\frac{1}{\kappa-1} \log_2\qty(\frac{(\kappa+1)!}{\lambda^{\kappa+2}} - \frac{\kappa \cdot (\kappa+3)!}{4 \sigma L \lambda^{\kappa+4}}) + \frac{\kappa}{\kappa-1} \log_2\qty(2 \sqrt{\pi} (\sigma L)^{3/2}) \\
        &= \frac{3}{2(1-\kappa^{-1})} \log_2(L) + \frac{\kappa \log_2(2\sqrt{\pi} \sigma^{3/2}) - \log_2((\kappa+1)!) + (\kappa+2) \log \lambda}{\kappa-1} + O\qty(L^{-1})\label{eq:renyi-block2}
    \end{align}
\end{subequations}
A limiting case $S_\infty$ can easily be checked. Since $\max_{\eta} \frac{1}{Z} \eta s^{-\eta} e^{-\eta^2/(4 \sigma L)} \approx \frac{1}{Z} \frac{1}{e \cdot \log_2(s)}$, we can directly calculate
\begin{equation}
    S_\infty = \frac{3}{2} \log_2(L) + \log_2(2e\sqrt{\pi} \sigma^{3/2} \log_2(s)).
\end{equation}
This agrees with \cref{eq:renyi-block2} upon taking the limit $\kappa \to \infty$.

\section{Details of correlation function calculations} \label{sec-correlation_appendix}
This appendix presents the details of the calculations leading to the correlation function results discussed in \cref{sec:spin-operator} of the main text for the $S^z S^z$ correlations in the colorful model and the $S^x S^x$ correlations in both the colorless and colorful models. 

\subsection{\texorpdfstring{$S^z S^z$}{ZZ} correlations in the colorful (\texorpdfstring{$s\geq 2$}{s=2}) model}\label{app:zz}
In this subsection, we detail the calculation of $S^z S^z$ correlations for the $s\geq 2$ Motzkin spin chains. We begin by deriving the probability distribution of unmatched steps $m,p,q$ in different segments of the chain, which forms the basis for computing the correlation function. Recall from \cref{fig:blocks} that $m$ represents the number of up unmatched steps on $A$ and $p$ and $q$ represent the number of unmatched down and up steps in $B$. 

Assume that we have set $b$ such that $b+L+b=n$ (i.e., we are around the middle of the chain). The number of walks that have $m$ unmatched up steps on $A$ and $p+q$ total unmatched in $B$ is exactly $s^{m+q} M_{b,m} M_{b,m+q-p} M_{L,p+q}$, where the factor $s^{m+q}$ accounts for all possible colorings of the $m+q$ unmatched up steps (recall the coloring of the $p$ unmatched down steps in $B$ is uniquely determined by the coloring of the unmatched up steps in $A$).
\begin{subequations}
\begin{align}
    \Pr(m,p,q) &\propto s^{m+q} M_{b,m} M_{b,m+q-p} M_{L,p+q} \label{eq:pmpq} \\
    &\approx m(m+q-p)(p+q) e^{-m^2/(4 \sigma b)} e^{-(m+q-p)^2/(4 \sigma b)} e^{-(p+q)^2/(4 \sigma L)}
    \intertext{Again, we use the bijective mapping $p+q \to \eta$ and $q-p \to \Delta$.}
    \Pr(m, \eta, \Delta) &= \frac{1}{Z}\eta m (m+\Delta) e^{-\eta^2/(4 L \sigma)} e^{-m^2/(4 b \sigma)} e^{-(m+\Delta)^2/(4 b \sigma)};\quad Z = 8\pi (L/n)^{3/2} (\sigma b)^3 \label{eq:pmetadelta} \\
    \E[m(q-p)] &= \E[m \Delta] = \frac{2L\sigma (L-b)}{L+2b} = \frac{L \sigma(3L-n)}{n} \\
    \E[-p] &= \E\qty[\frac{\Delta-\eta}{2}] = -\frac{\sqrt{\sigma}(2b\sqrt{2L(L+b)} - L^2)}{\sqrt{\pi b (L+b)(L+2b)}} \approx -2\sqrt{\sigma/\pi} \sqrt{L} + O(L^2/n)
\end{align}
\end{subequations}
One can drop the assumption that $L$ is centered around the middle of the chain. Doing so will result in corrections of order $O(\Delta/b)$, where $\Delta$ is the deviation from the middle of the chain.

\subsection{\texorpdfstring{$S^x S^x$}{XX} correlations in the colorless (\texorpdfstring{$s=1$}{s=1}) model}\label{app:s1-corr}
Here, we provide details on the calculation of the correlations in the $s=1$ Motzkin spin chain model. We start by calculating the joint probability distribution $\Pr(w_b, w_{b+L})$ introduced in \cref{eq:SxSx_colorless}. The following is a proof of \cref{lemma:SxSxcolorless_1}, which we restate below for convenience.

\begin{lemma*}[\cref{lemma:SxSxcolorless_1}, repeated]
    For $w_b, w_{b+L} \in \{-1,0,1\}$ and $w_b \neq w_{b+L}$, to leading order in $n^{-1}$, 
    \begin{equation}
        \Pr(w_b,w_{b+L}) \propto 1 - \frac{3(w_b + w_{b+L})^2}{4\sigma n}.
    \end{equation}
\end{lemma*}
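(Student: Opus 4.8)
The plan is to start from the exact conditional count in \cref{eq:pb-exact} and extract its dependence on $w_b$ and $w_{b+L}$ by isolating the single factor that carries it. Writing $w \coloneqq w_b + w_{b+L}$ and $N \coloneqq n-b-L$, the only place $w$ enters is the final Motzkin factor $M_{N,\,m+w_b+q-p+w_{b+L}}$, whose argument is the total unmatched height $x+w$ with $x \coloneqq m+q-p$. Substituting the asymptotic form \eqref{eq:asymptotic_Mnm} (with $s=1$, so no color-suppression factors), the prefactors $(\sqrt s/\sigma)^{b}$, $(\sqrt s/\sigma)^{L-2}$, $(\sqrt s/\sigma)^{N}$ are independent of $m,p,q$ and are absorbed into the overall proportionality constant, leaving a weight $m(p+q)\,e^{-m^2/(4\sigma b)}e^{-(p+q)^2/(4\sigma(L-2))}$ multiplying $g(x+w)$, where $g(y) \coloneqq y\,e^{-y^2/(4\sigma N)}$ collects the linear prefactor and Gaussian of the last segment. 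Converting the sums to integrals (via the Euler--Maclaurin/saddle-point estimates cited in the main text) and noting that the constraint $p\le m+w_b$ is non-binding because the typical height $m\sim\sqrt{\sigma b}=O(\sqrt n)$ vastly exceeds $p=O(\sqrt L)$, the problem reduces to computing $I(w) \coloneqq \langle g(x+w)\rangle$, where $\langle\cdot\rangle$ denotes the weighted integral above.

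Since $x\sim\sqrt n$ is large while $w=O(1)$, I would Taylor expand $g(x+w)=g(x)+w\,g'(x)+\tfrac{w^2}{2}g''(x)+\cdots$, giving $\Pr(w_b,w_{b+L})\propto I(0)\big(1 + w\,\tfrac{\langle g'\rangle}{\langle g\rangle} + \tfrac{w^2}{2}\tfrac{\langle g''\rangle}{\langle g\rangle}+\cdots\big)$. The key simplification is that, to the order needed, $x=m+q-p$ may be replaced by $m$ inside $g,g',g''$: the $q-p$ correction is relatively $O(\sqrt{L/n})$, hence negligible against the $O(1/n)$ terms we are after. Consequently the $(p+q)$ integral factors out and cancels in every ratio, and all remaining integrals are elementary Gaussian moments in $m$ with combined rate $a=\tfrac{1}{4\sigma b}+\tfrac{1}{4\sigma N}$.

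The linear term must vanish, which I would verify directly: $\langle g'\rangle \propto \tfrac{1}{2a}\big(1-\tfrac{1}{2\sigma N a}\big)$, and in the centered configuration $N=b$ one has $\tfrac{1}{2\sigma N a}=1$, so the bracket is zero. Equivalently, this is the $RF$ reflection symmetry sending $(w_b,w_{b+L})\mapsto(-w_{b+L},-w_b)$, which forces $\Pr$ to be even in $w$. For the quadratic term I would use $g''(y)=e^{-y^2/(4\sigma N)}\big(-\tfrac{3y}{2\sigma N}+\tfrac{y^3}{4\sigma^2N^2}\big)$ together with $\int_0^\infty m^2 e^{-am^2}\,\dd{m}=\tfrac{\sqrt\pi}{4}a^{-3/2}$ and $\int_0^\infty m^4 e^{-am^2}\,\dd{m}=\tfrac{3\sqrt\pi}{8}a^{-5/2}$; with $N=b$ (so $a=\tfrac{1}{2\sigma b}$) the two pieces of $g''$ combine to $\tfrac{\langle g''\rangle}{\langle g\rangle}=-\tfrac{3}{4\sigma N}$, whence the $w^2$ coefficient is $-\tfrac{3}{8\sigma N}=-\tfrac{3}{4\sigma n}$ upon using $N=b=(n-L)/2\approx n/2$. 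This reproduces the claimed $\Pr(w_b,w_{b+L})\propto 1-\tfrac{3(w_b+w_{b+L})^2}{4\sigma n}$.

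The main obstacle I anticipate is bookkeeping the order of approximation rather than any single hard estimate: one must confirm that replacing $x$ by $m$, relaxing the $p\le m+w_b$ constraint, and extending all integral limits to infinity each introduce only corrections subleading to the $O(1/n)$ term being computed. The delicate point is the exact cancellation producing the vanishing linear term and the factor $N=b\approx n/2$ that fixes the coefficient — both rely on the block being centered, which is precisely why the statement is restricted to that case; off-centering would reinstate a nonzero $\tfrac{1}{2\sigma N a}\neq 1$ and spoil the clean $w^2$-only form.
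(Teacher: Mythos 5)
Your proposal is correct and follows essentially the same route as the paper's appendix proof: start from the exact triple sum \eqref{eq:pb-exact}, drop the non-binding constraint $p\le m+w_b$, reduce everything to Gaussian moments in $m$ with combined rate $a=\tfrac{1}{4\sigma b}+\tfrac{1}{4\sigma N}$, and expand to second order in $w=w_b+w_{b+L}$, with the linear term cancelling in the centered configuration and the quadratic term giving $-\tfrac{3}{4\sigma n}$. The only substantive difference is the treatment of $\Delta=q-p$: the paper convolves the $e^{-\Delta^2/(4\sigma L)}$ weight with the last segment's Gaussian exactly (yielding effective rate $\tfrac{1}{4\sigma(n-b)}$), whereas you discard $\Delta$ outright --- this is harmless to the order needed, but not for the reason you state (a relative $O(\sqrt{L/n})$ error is \emph{not} automatically negligible against $O(1/n)$); what actually saves it is that the $\Delta$-odd part of the error averages to zero by the $p\leftrightarrow q$ symmetry of the weight, and the surviving $O(\Delta^2)=O(\sigma L)$ piece merely renormalizes $N\to N+L$, shifting the $w^2$ coefficient by a relative $O(L/n)$.
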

\begin{proof}
The probability of two steps $w_b,w_{b+L} \in \qty{-1,0,1}$ can be found by counting the number of Motzkin walks that take a step $w_b$ at position $b$ and a step $w_{b+L}$ at position $b+L$. We do this by summing over $m$, which represents the height that the Motzkin walk reaches just before position $b$, and then sum over all valid $p,q$ (representing unmatched down and up steps between position $b$ and $b+L$). To ensure the Motzkin walks take steps $w_b$ and $w_{b+L}$ at the appropriate positions, we simply need to enforce that the last segment of the Motzkin walk (after position $b+L$) takes $m+(q-p)+w_b + w_{b+L}$ down steps.
\begin{subequations}\label{eq:pij}

    \begin{align}
        \Pr(w_b, w_{b+L}) &\propto \sum_{m=0}^{b} \sum_{\substack{p+q \leq L-2 \\ p \leq m+w_b}} M_{b,m} M_{L-2,p+q} M_{n-b-L,m+w_b+q-p+w_{b+L}}
        \intertext{We now use the `no bad walks' approximation (see the discussion proceeding \cref{eq:alpha2-parent} and~\cite{movassagh2017entanglement}) to drop the constraint $p \leq m+w_b$; this introduces a relative error $O(L^2/\sigma b)$. Having dropped this constraint, we can do a change of variables $\eta \coloneqq p+q$ and $\Delta \coloneqq q-p$.}
        &\approx \sum_{m=0}^b \sum_{\Delta=-L}^L \sum_{\eta=\abs{\Delta}}^{L} M_{b,m} M_{L-2,\eta} M_{n-b-L,m+w_b+w_{b+L}+\Delta}
        \intertext{We approximate $\sum_{\eta=\abs{\Delta}}^L M_{L-2,\eta}$ with the appropriate Gaussian integral.}
        &\propto \sum_{m=0}^b \sum_{\Delta=-L}^L e^{-\Delta^2/(4 \sigma L)} M_{b,m} M_{n-b-L,m+(w_b + w_{b+L})+ \Delta} \\
        &\propto \sum_{m=0}^b (m+w_b + w_{b+L}) e^{-\frac{(m+w_b + w_{b+L})^2}{4 (n-b) \sigma}} M_{b,m} \\
        &\approx \qty(1 - \frac{(w_b + w_{b+L})^2}{2 \sigma n}) e^{-(w_b + w_{b+L})^2/(4 \sigma n)} \approx 1-\frac{3(w_b+w_{b+L})^2}{4\sigma n}\label{eq:pwbwbl}
    \end{align}
\end{subequations}
to leading order in $\frac{1}{n}$.    
\end{proof}

\subsection{\texorpdfstring{$S^x S^x$}{XX} correlations in the colorful model}\label{app:s2-corr}
The following is the formal proof of \cref{thm:xx-s2}, restated below for convenience.
\begin{theorem*}[\cref{thm:xx-s2}, repeated]
The correlation function $\expval*{S^x_i S^x_j}$ of the colorful $(s\geq 2)$ Motzkin spin chain in the thermodynamic limit ($n \rightarrow \infty$), for $i$, $j$ such that the $L\coloneqq j-i$ spins are centered around the middle of the chain satisfies the asymptotic algebraic decay
\begin{equation}
    \expval*{S^x_b S^x_{b+L}} = \frac{(s+1)(s+2) \sigma^2 \sqrt{\sigma/\pi}}{6L^{3/2}}
\end{equation}
\end{theorem*}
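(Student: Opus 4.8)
The plan is to exploit, exactly as in the colorless case, the fact that $\gs$ is an equal-weight superposition, so that $\expval*{S^x_b S^x_{b+L}} = \frac{1}{M_n}\sum_{w,w'}\bra{w'}S^x_b S^x_{b+L}\ket{w}$, where the sum runs over pairs of $s$-colored Motzkin walks. First I would record the action of the single-site operator in the walk language: writing $S^x=\tfrac12(S^++S^-)$, the operator $S^x$ sends the spin label $m$ to $m\pm1$ with the standard ladder amplitude $\tfrac12\sqrt{s(s+1)-m(m\pm1)}$, which in step language is a nearest-neighbor hop on the rung ladder $d^s,\dots,d^1,0,u^1,\dots,u^s$. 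Hence $\bra{w'}S^x_b S^x_{b+L}\ket{w}$ is nonzero only when $w'$ is obtained from $w$ by hopping the steps at $b$ and $b+L$ by one rung each, and the task reduces to enumerating, with the correct amplitudes, which such double-hops send a valid walk to a valid walk.

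The central structural observation I would establish next is a \emph{selection rule}: a pair of single-rung hops at $b$ and $b+L$ can produce a valid $s$-colored Motzkin walk only if (i) their net effect on the height cancels, forcing the height increments $(\delta h_b,\delta h_{b+L})$ into $\{(+1,-1),(-1,+1),(0,0)\}$, and (ii) the color-matching constraint is preserved. Because only the two sites $b$ and $b+L$ are touched, requirement (ii) is extremely restrictive: any up- or down-step whose matching partner is not one of these two sites must keep both its color and its height. This forces the steps at $b$ and $b+L$ either to form a directly matched $u^k,d^k$ pair bracketing a self-contained Motzkin sub-walk on $[b+1,b+L-1]$, or to be matched to the external segments in a correspondingly nested way. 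This reproduces the sketch's claim that the dominant configurations are precisely those containing a complete Motzkin sub-walk between $b+1$ and $b+L-1$.

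With the selection rule in hand, I would organize the computation as a case analysis over the rung occupied at $b$ (i.e.\ over $w_b$), and for each admissible transition pair catalogue three ingredients: the product of ladder amplitudes (the coefficient), the color multiplicity, and the probability $P_L$ of the required bracketing configuration. The probability is evaluated by the paper's standard machinery, summing $\frac{1}{M_n}\sum_m s^{m}M_{b,m}\,M_{L-2}\,M_{n-b-L,m}$ and passing to Gaussian integrals. The $L^{-3/2}$ tail is exactly the return (complete-Motzkin) probability $M_{L-2}/(\sqrt{s}/\sigma)^{L-2}\sim L^{-3/2}$ of the sub-walk, and carrying the normalization yields $P_L\approx \frac{\sigma^2\sqrt{\sigma/\pi}}{2s}\,L^{-3/2}$; here it is essential to use the \emph{refined} asymptotic \eqref{eq:asymptotic_Mnm} with its $n^{-3/2}$ prefactor, since—unlike every other quantity in the paper—these factors no longer cancel. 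Summing coefficient $\times$ multiplicity $\times\,P_L$ over all cases should then collapse to the stated $\frac{(s+1)(s+2)\sigma^2}{6}\sqrt{\sigma/\pi}\,L^{-3/2}$.

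I expect the main obstacle to be exactly this exhaustive and error-prone case analysis: correctly enumerating every admissible double-hop consistent with both the height-balance and color-matching selection rules—including the subtler externally-matched configurations, not only the manifestly internal recolorings $u^k,d^k\to u^{k\pm1},d^{k\pm1}$ and the creation $0\,0\to u^1 d^1$—assigning each its ladder amplitude and color weight, and verifying that the configuration probabilities agree to leading order so that the weighted sum telescopes to the clean $(s+1)(s+2)$ prefactor rather than a spurious combination. A secondary but necessary check is to confirm that all configurations in which the sub-walk on $[b+1,b+L-1]$ is \emph{not} complete contribute only at order $L^{-5/2}$ or smaller, so that they may be discarded.
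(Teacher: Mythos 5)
Your proposal follows essentially the same route as the paper's proof in Appendix~\ref{app:s2-corr}: a case analysis over the rung $w_b$, with the dominant contribution identified as configurations containing a complete Motzkin sub-walk on $[b+1,b+L-1]$ occurring with probability $P_L=M_{L-2}M_{2b}/M_n\approx\frac{\sigma^2\sqrt{\sigma/\pi}}{2s}L^{-3/2}$, weighted by ladder amplitudes and color multiplicities, and with the externally-matched (``cascade'') configurations shown to be suppressed as $O(n^{-3/2})$. You correctly flag the two genuinely delicate points --- the exhaustive transition catalogue and the necessity of the refined $n^{-3/2}$ Motzkin asymptotic --- both of which the paper handles exactly as you anticipate.
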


\begin{proof}
    The asymptotic form of the correlation function can be calculated from analyzing all possible transitions between Motzkin walks induced by the $ S^x_b S^x_{b+L}$ operator for each possible value of the spin state at position $b$, denoted $w_b$. We consider each case separately:
\begin{enumerate}
    \item\label{item:1} $w_b \leq -2$. Applying $S^x_b$ immediately results in an invalid walk, since the previous matched up step (which is \emph{to the left} of the spin at $b$) is now matched with the wrong color. There is no operation that can be applied at position $b+L$ that will restore the validity of this walk.
    \item $w_b=-1$. This splits into two cases: If $S^x$ maps $w_b$ from $-1 \to -2$, this is the same case as in Case \ref{item:1} above, resulting in an invalid walk. However, if $S^x$ maps $w_b$ from $-1 \to 0$, it is possible that the Motzkin walk remains valid. There are two ways for this to happen.
    \begin{enumerate}[label=\alph*)]
        \item\label{item:cascade} The next unmatched down step to the right of $b$, suppose it is at $b'$, also satisfies $w_{b'} = -1$. But this means that the step that was previously matched with the spin at $b'$, call this $c'$, is now unpaired. This in turn implies that the the next unmatched step down to the right of $b'$, call this $b''$, must also satisfy $w_{b''} = -1$, so that $c'$ matches with $b''$. This logic results in a cascade which leads to the conclusion that all of the unmatched steps up to the left of $b$ have spin $+1$. This situation only leads to a valid walk if initially $w_{b+L}=0$, and the $S^x$ operator lowers the spin to $w_{b+L}=-1$. The effective action of $S^x_b S^x_{b+L}$ in this case is to swap $(w_b,w_{b+L})=(-1,0)$ to $(0,-1)$. Setting $b=n/2$ for simplicity, the above `cascade' argument shows that the fraction of walks that this can apply to is (to leading order)
        \begin{subequations} \label{eq:colored_sxsx_fraction}
            \begin{align}
            \frac{1}{M_n} \sum_{m=0}^{n/2} M_{b,m}^2 &\approx \frac{1}{M_n} \int_0^\infty \frac{m^2 s^{-m}}{4 \sigma \pi b^3} \qty(\frac{\sqrt{s}}{\sigma})^{n} e^{-m^2/(2 \sigma b)} \dd{m}
            \intertext{This integral can be expressed in terms of elementary functions and $\erf\qty(\frac{\sqrt{\sigma n} \log s}{2})$. Using the fact that $\erf(x) \approx 1 - \frac{e^{-x^2}}{\sqrt{\pi}} \qty(x^{-1} - \frac{x^{-3}}{2} + \frac{3x^{-5}}{4} + \ldots)$ for $x \gg 1$~\cite{abramowitz1965handbook}, the leading order behavior of this integral leads to:}
            \frac{1}{M_n} \sum_{m=0}^{n/2} M_{b,m}^2 &\approx \frac{8}{\sqrt{\pi} \sigma^{3/2} \log_2(s)^3} n^{-3/2}.\label{eq:suppress}
            \end{align}
        \end{subequations}
        The same calculation can be extended for when $b\neq n/2$. Doing so will give vanishing (in $n$) corrections to the fraction of walks in \cref{eq:colored_sxsx_fraction}, in part due to the bulk translation invariance of the chain.
    \item Immediately after $w_b=-1$, there is a valid `Motzkin sub-walk', from $b+1$ to $b+L-1$, folllowed by $w_{b+L}=0$ --- we call the restriction of a walk to its states in the middle of the chain a valid Motzkin subwalk if it satisfies all the Motzkin walk conditions except it starts and ends at some height $m$. Applying $S^x_{b+L}$ then flips $w_{b+L}=-1$, so that the overall effect of $S^x S^x_{b+L}$ is to swap $(w_b,w_{b+L}) \leftrightarrow (w_{b+L},w_b)$. This results in a valid walk if and only if there is a valid Motzkin sub-walk between $b+1$ to $b+L-1$. The probability of this event occurring is 
    \begin{equation}
        P_L = \frac{M_{L-2} M_{2b}}{M_n} \approx \frac{\sigma^2 \sqrt{\sigma/\pi}}{2s L^{3/2}} + O(L^{-5/2}) \label{eq:event}
    \end{equation}
    where the notation $M_n$ denotes the colorful Motzkin number $M_{n,m=0}$. The first equality follows from noticing that the the condition of a Motzkin sub-walk occuring between $b+1$ and $b+L-1$ is equivalent to the walk between $b+1$ and $b+L-1$ being a valid Motzkin walk, as well as the walks between $1$ and $b$, and $b+L$ and $n$, together forming a valid Motzkin walk. The expression after the first equality expresses the probability of both these independent events occurring, where $M_n$ is for normalization of the distribution. The second approximate equality follows from substituting and simplifying the asymptotic expressions for the colorful Motzkin numbers in \cref{eq:mnm-1} with $m=0$.
    
    The prefactor associated with the multiplicity of this event is $\frac{s(s+1)}{4}$, since $S^- \ket{0} = \sqrt{s(s+1)} \ket{-1}$ and $S^+\ket{-1} = \sqrt{s(s+1)} \ket{0}$. Therefore, the overall contribution to the probability of a valid walk from this case is
    \begin{equation}
        \frac{s(s+1)}{4} P_L
    \end{equation}
    \end{enumerate}
    \item $w_b=0$. This may result in a valid walk from cases such as the `cascade' scenario in \ref{item:cascade}, but as seen in \cref{eq:suppress}, the probability of this event is suppressed by $n^{-3/2}$. The only other possibility of a valid walk is if there is a valid Motzkin sub-walk between $b+1$ and $b+L-1$, followed by $w_{b+L}=0$. The probability of this event is $P_L$ from \cref{eq:event}. The prefactor associated with this is $2\cdot \frac{s(s+1)}{4}$ (one prefactor for when the first $0$ step is raised and the second is lowered, and another for the opposite case). The contribution from this case is therefore
\begin{equation}
    \frac{s(s+1)}{2} P_L
\end{equation}
    \item $w_b \geq 1$. This results in a valid walk if and only if the corresponding matched down step is at position $b+L$ so that the walk remains matched when $S^x$ is applied to both steps. This occurs only when the steps $b+1 \cdots b+L-1$ are a valid Motzkin sub-walk. Pictorally, this corresponds to when the sub-walk between $b$ and $b+L$ are separated by the dashed lines in \cref{fig:blocks}. The probability of this event is $P_L$ from \cref{eq:event}. The prefactor for any $w_b$ will be $\frac{s(s+1) - w_b (w_b+1)}{4} + \frac{s(s+1) - w_b (w_b-1)}{4} = \frac{s(s+1) - w_b^2}{2}$ (one summand for when $S^x$ increases $w_b$, and one for when it lowers $w_b$). Summing over all $w_b \geq 1$ leads to the contribution
\begin{equation}
    \qty(\frac{\sum_{w_b=1}^s (s(s+1)-w_b^2)}{2}) \cdot P_L = \frac{s(s+1)(4s-1)}{12} \cdot P_L
\end{equation}
\end{enumerate}
In summary, we expect
\begin{equation}
    \expval*{S^x_b S^x_{b+L}} = \frac{s(s+1)(s+2)}{3} \cdot P_L = \frac{(s+1)(s+2) \sigma^2 \sqrt{\sigma/\pi}}{6L^{3/2}} 
\end{equation}
\end{proof}

\end{document}